\documentclass[a4paper,USenglish,cleveref,authver]{lipics-v2019}

\usepackage[noend]{algpseudocode}
\usepackage{algorithm}
\usepackage{mathtools,mathrsfs}
\usepackage{subcaption}
\usepackage{hyperref}
\usepackage{cite}
\usepackage{times}
\usepackage{tikz}
\raggedbottom
\nolinenumbers
\hideLIPIcs

\newcommand{\blank}[1]{\hspace*{#1}}

\graphicspath{{./figs/}}

\newcommand{\cmnt}[1]{}
\newcommand{\algoref}[1]{{Algorithm \ref{alg:#1}}}
\newcommand{\defref}[1]{Definition~\ref{def:#1}}
\newcommand{\obsref}[1]{Observation~\ref{obs:#1}}

\newcommand{\lemref}[1]{Lemma~\ref{lem:#1}}
\newcommand{\figref}[1]{Figure~\ref{fig:#1}}

\newcommand{\theoremref}[1]{Theorem~\ref{theorem:#1}}

\newcommand{\fnonpartionable}{f_{opt}}
\newcommand{\fmajority}{f_{maj}}
\newcommand{\fmandm}{f_{\text{m\&m}}}
\newcommand{\fG}{f_G}
\newcommand{\tL}{f_L}
\newcommand{\fc}{f_{\text{cluster}}}
\newcommand{\regsno}{\rho}
\newcommand{\readno}{\sigma}
\newcommand{\exec}{\alpha}
\newcommand{\bb}{{\sc{SDC}}}

\newcommand{\sharedmemcomm}[3][]{
\ifthenelse{\equal{#1}{not}}{#2 \not\leftrightarrow #3}{#2 \leftrightarrow #3}}
\newcommand{\communicateset}[1]{\overset{\rightarrow}{#1}}

\newcommand{\canreadcomm}[3][]{
\ifthenelse{\equal{#1}{not}}
{#2 \not\rightarrow #3}{#2 \rightarrow #3}}
\newcommand{\bothcanreadcomm}[3][]{
\ifthenelse{\equal{#1}{not}}
{#2 \not\leftrightarrow #3}{#2 \leftrightarrow #3}}

\newtheorem{observation}{Observation}

\algnewcommand\algorithmicforeach{\textbf{for each}}
\algdef{S}[FOR]{ForEach}[1]{\algorithmicforeach\ #1\ \algorithmicdo}


\title{Optimal Resilience in Systems that Mix\\Shared Memory and Message Passing}
\titlerunning{Resilience of Systems that Mix Shared Memory and Message Passing}
\author{Hagit Attiya}{Department of Computer Science, Technion, Israel}{hagit@cs.technion.ac.il}{0000-0002-8017-6457}{}
\author{Sweta Kumari}{Department of Computer Science, Technion, Israel}{sweta@cs.technion.ac.il}{}{}
\author{Noa Schiller}{Department of Computer Science, Technion, Israel}{noa.schiller@cs.technion.ac.il}{}{}
\Copyright{Hagit Attiya, Sweta Kumari and Noa Schiller}
\authorrunning{Attiya, Kumari and Schiller}
\ccsdesc[100]{Theory of computation~Distributed computing models}
\ccsdesc[100]{Theory of computation~Concurrent algorithms}
\ccsdesc[100]{Computing methodologies~Distributed algorithms}

\keywords{fault resilience, m\&m model, cluster-based model,
randomized consensus, approximate agreement, renaming,
register implementations, atomic snapshots}

\begin{document}
\maketitle
\begin{abstract}
We investigate the minimal number of failures that can \emph{partition}
a system where processes communicate both through shared memory
and by message passing.
We prove that this number precisely captures the resilience that
can be achieved by algorithms that
implement a variety of shared objects,
like registers and atomic snapshots,
and solve common tasks, like randomized consensus,
approximate agreement and renaming.
This has implications
for the \emph{m\&m-model} of~\cite{Aguilera:PODC:2018}
and for the hybrid, cluster-based model
of~\cite{ImbsR13,RaynalCao:Clusterbased:ICDCS:2019}.
\end{abstract}

\section{Introduction}

Some distributed systems combine more than one mode of communication
among processes, allowing them both to send messages among themselves
and to access shared memory.
Examples include recent technologies such as
\emph{remote direct memory access}
(\emph{RDMA})~\cite{InfiniBand, iWARP, RDMA},
\emph{disaggregated memory}~\cite{Lim+:ISCA:2009},
and \emph{Gen-Z}~\cite{Gen-Z}.
In these technologies, the crash of a process does not prevent
access to its shared memory by other processes.
Under these technologies,
it is infeasible to share memory among a large set of processes,
so memories are shared by smaller, strict subsets of processes.

Systems mixing shared memory and message passing offer a major
opportunity since information stored in shared variables remains
available even after the failure of the process who stored it.
Mixed systems are expected to withstand more process
failures than pure message-passing systems, as captured by
the \emph{resilience} of a problem---the maximal
number of failures that an algorithm solving this problem can tolerate.
This is particularly the case in an \emph{asynchronous} system.
At one extreme, when all processes can access the same shared memory,
many problems can be solved even when all processes but one fail.
Such \emph{wait-free} algorithms exist for implementing shared objects
and solving tasks like randomized consensus, approximate agreement
and renaming.
At the other extreme, when processes only communicate by message passing,
the same problems require that at least a majority of processes
do not fail~\cite{AttiyaBD1995,AttiyaBDPR90,BrachaT1985}.
Thus, typically, shared-memory systems are $(n-1)$-resilient, and
pure message-passing systems are $\lfloor{(n-1)/2}\rfloor$-resilient,
where $n$ is the number of processes.

The resilience in systems that mix shared memory
and message passing falls in the intermediate range,
between $\lfloor{(n-1)/2}\rfloor$ and $n-1$.
It is, however, challenging to solve specific problems with the
best-possible resilience in a particular system organization:
the algorithm has to coordinate between non-disjoint sets of processes
that have access to different regions of the shared memory.
On the other hand, bounding the resilience requires to take
into account the fact that processes might be able to
communicate \emph{indirectly} through shared memory accesses
of third-party processes.

This paper explores the \emph{optimal} resilience in systems
that provide message-passing support between all pairs of processes,
and access to shared memory between subsets of processes.
We do this by studying the minimal number of failures that
can \emph{partition} the system, depending on its structure,
i.e., how processes share memory with each other.
We show that the partitioning number exactly characterizes the
resilience, that is, a host of problems can be solved in the presence
of $< f$ crash failures, if and only if
$f$ is the minimal number of failures that partition the system.

A key step is to focus on the implementation of a \emph{single-writer
multi-reader register} shared among all processes,
in the presence of $f$ crash failures.
A read or a write operation takes $O(1)$ round-trips,
and requires $O(n)$ messages.
Armed with this implementation,
well-known shared-memory algorithms can be employed to
implement other shared objects,
like \emph{multi-writer multi-reader registers}
and \emph{atomic snapshots},
or to solve fundamental problems,
such as \emph{randomized consensus}, \emph{approximate agreement}
and \emph{renaming}.
Because the register implementation is efficient,
these algorithms inherit the good efficiency of the best-known
shared-memory algorithm for each of these problems.

Going through a register simulation,
instead of solving consensus, approximate agreement
or renaming from scratch,
does not deteriorate their resilience.
One of our key contributions is to show that the resilience achieved
in this way is optimal,
by proving that these problems cannot be solved in the
presence of $f$ crash failures,
if $f$ failures can partition the system.

We consider memories with access restrictions and model mixed systems
by stating which processes can read from or write to each memory.
(Note that every pair of processes can communicate using messages.)
Based on this concept, we define $\fnonpartionable$ to be
the largest number of failures that do not partition the system.
We prove that $f$-resilient registers and snapshot implementations,
and $f$-resilient solutions to randomized consensus,
approximate agreement and renaming,
exist if and only if $f \leq \fnonpartionable$.

One example of a mixed model is the \emph{message-and-memory}
model~\cite{Aguilera:PODC:2018}, in short, the \emph{m\&m} model.
In the \emph{general} m\&m model~\cite{Aguilera:PODC:2018},
the shared-memory connections are defined by
(not necessarily disjoint) subsets of processes,
where each subset of processes share a memory.
Most of their results, however, are for the \emph{uniform} m\&m model,
where shared-memory connections can be induced by an undirected graph,
whose vertices are the processes.
Each process has an associated shared memory that can be accessed
by all its neighbors in the \emph{shared-memory graph}
(see Section~\ref{section:mm}).
They present bounds on the resilience for solving randomized consensus in the uniform model.
Their algorithm is based on Ben-Or's exponential algorithm for
the pure message-passing model~\cite{BenOr1983}.
The algorithm terminates if the nonfaulty processes and their
neighbors (in the shared-memory graph) are a majority of the processes.
They also prove an upper bound on the number of failures a randomized
consensus algorithm can tolerate in the uniform m\&m model.
We show that in the uniform m\&m model, this bound is equivalent to
the partitioning bound ($\fnonpartionable$) proved in our paper
(\theoremref{cut&partition-eq} in Section~\ref{section:mm}).
We further show that this bound \emph{does not match the resilience of
their algorithm},
whose resilience is strictly smaller than $\fnonpartionable$,
for some shared-memory graphs.

In the special case where the shared memory \emph{has no access restrictions},
our model is dual to the general m\&m model,
i.e., it captures the same systems as the general m\&m model.
However, rather than listing which processes can access a memory,
we consider the flipped view: we consider for each process,
the memories it can access.
We believe this makes it easier to obtain some extensions,
for example, for memories with access restrictions.

Hadzilacos, Hu and Toueg~\cite{Hadzilacos+:M&M:OPODIS:2019} present
an implementation of a SWMR register in the general m\&m model.
The resilience of their algorithm is shown to match the maximum
resilience of an SWMR register implementation in the m\&m model.
Our results for register implementations are adaptations of their results.
For the general m\&m model specified by the set of process subsets $L$,
they define a parameter $\tL$ and show that it is
the maximum number of failures tolerated by an algorithm implementing
a SWMR register~\cite{Hadzilacos+:M&M:OPODIS:2019} 
or solving randomized consensus~\cite{HadzilacosHT2020arxiv}.
For memories without access restrictions, $\tL$ is equal to $\fnonpartionable$.
Their randomized consensus algorithm is based on the simple
algorithm of~\cite{AspnesH1990} and inherits its exponential expected
step complexity.

Another example of a model that mixes shared memory and
message passing is the hybrid model of~\cite{ImbsR13,
RaynalCao:Clusterbased:ICDCS:2019}.
In this model, which we call \emph{cluster-based},
processes are partitioned into disjoint \emph{clusters},
each with an associated shared memory; all processes in the cluster
(and only them) can read from and write to this shared memory.
Two randomized consensus algorithms are presented for
the cluster-based model~\cite{RaynalCao:Clusterbased:ICDCS:2019}.
Their resilience is stated as an operational property of executions:
the algorithm terminates if the clusters of responsive processes
contain a majority of the processes.
We prove (\lemref{fmajorityleqfnonpartion} in Section~\ref{sec:cluster})
that the optimal resilience we state in a closed form
for the cluster-based model is equal to their operational property.

Our model is general and captures all these models within a single
framework, by precisely specifying the shared-memory layout.
The tight bounds in this general model provide the exact resilience
of any system that mix shared memory and message passing.


\section{Modelling Systems that Mix Shared Memory and Message Passing} \label{sys-model}

We consider $n$ asynchronous processes $p_1, \dots, p_n$,
which communicate with each other by sending and receiving messages,
over a complete communication network of asynchronous reliable links.
In addition, there are $m$ shared memories $M=\{\mu_1,...,\mu_m\}$,
which can be accessed by subsets of the processes.
A memory $\mu\in M$ has access restrictions,
where $R_\mu$ denotes all the processes that can read from
the memory and $W_\mu$ denotes all the processes that
can write to the memory.
The set of memories a process $p$ can read from is denoted $R_p$,
i.e., $R_p=\{\mu\in{M}:p\in{R_{\mu}}\}$.
The set of memories $p$ can write to is denoted $W_p$,
i.e., $W_p=\{\mu\in{M}:p\in{W_{\mu}}\}$.
We assume the network allows nodes to send the same message to
all nodes; message delivery is FIFO.
A process $p$ can \emph{crash}, in which case it stops taking steps;
messages sent by a crashed process may not be delivered at their recipients.
We assume that the shared memory does not fail, as done in
prior work~\cite{Aguilera:PODC:2018,Hadzilacos+:M&M:OPODIS:2019,ImbsR13,
RaynalCao:Clusterbased:ICDCS:2019}.

\begin{definition}\label{def:sharedmemcomm}
If there is a shared memory $\mu\in{M}$ that $p$ can read from
and $q$ can write to, then we 
denote $\canreadcomm{p}{q}$.
If $\canreadcomm{p}{q}$ and $\canreadcomm{q}{p}$,
then we denote $\bothcanreadcomm{p}{q}$.
\end{definition}

Since a process can read what it writes to its local memory, this relation
is \emph{reflexive}, i.e., for every process $p$, $\canreadcomm{p}{p}$.

\begin{definition}\label{def:sharedmemcommset}
Let $P$ and $Q$ be two sets of processes.
Denote $\canreadcomm{P}{Q}$ if some process $p \in P$ can read
what a process $q \in Q$ writes, i.e., $\canreadcomm{p}{q}$.
If $\canreadcomm{P}{Q}$ and $\canreadcomm{Q}{P}$,
then we denote $\bothcanreadcomm{P}{Q}$.
\end{definition}

\begin{definition}\label{def:fpartition}\label{def:fnonpartion}
A system is \emph{$f$-partitionable} if there are two sets of processes
$P$ and $Q$, both of size $n-f$, such that $\bothcanreadcomm[not]{P}{Q}$.
Namely, the failure of $f$ processes can partition (disconnect) two
sets of $n-f$ processes.\\
Denote by $\fnonpartionable$ the largest integer $f$ such that
$\bothcanreadcomm{P}{Q}$,
for every pair of sets of processes $P$ and $Q$, each of size $n-f$.
\end{definition}

Clearly, a system is $f$-partitionable if and only if
$f>\fnonpartionable$.
Note that $\fnonpartionable\geq{\lfloor{(n-1)/2}\rfloor}$.
In the pure message-passing model, without shared memory,
$\canreadcomm{p}{q}$ if and only if $p=q$;
hence, $\fnonpartionable=\lfloor{(n-1)/2}\rfloor$.

The special case of shared memory \emph{without access restrictions}
is when for every memory $\mu\in M$, $R_{\mu} = W_{\mu}$,
and all processes that can read from a memory can also write to it.
In this case, the $\canreadcomm{}{}$ relation is \emph{symmetric},
i.e., for every pair of processes $p$ and $q$,
if $\canreadcomm{p}{q}$ then $\canreadcomm{q}{p}$.
Therefore, for every two processes $p$ and $q$, $\bothcanreadcomm{p}{q}$.
Later, we discuss two models without access restrictions,
the m\&m model (Section~\ref{section:mm})
and the cluster-based model (Section~\ref{sec:cluster}).

For a set of processes $P$, $\communicateset{P}$ are the processes
that some process in $P$ can read what they write to the shared memory,
i.e., $\communicateset{P} = \{q:\exists{p\in{P}},\canreadcomm{p}{q}\}$.

\begin{definition}\label{def:fmajority}  
$\fmajority$ is the largest integer $f$ such that for every set $P$ of
$n-f$ processes, $| {\communicateset{P}} | >\lfloor{n/2}\rfloor$.
That is, $\fmajority$ is the largest number of failures that still allows
the remaining (nonfaulty) processes
to communicate with a majority of the processes.
\end{definition}



\begin{lemma}\label{lem:fnonpartionleqfmajority}
$\fnonpartionable \leq \fmajority$.
\end{lemma}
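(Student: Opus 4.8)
The plan is to establish the inequality by verifying, at the threshold value $f=\fnonpartionable$, the reachability property that defines $\fmajority$. Since $\fmajority$ is by definition the largest $f$ for which every set of $n-f$ processes reaches more than $\lfloor n/2\rfloor$ processes, it suffices to show that this property already holds for $f=\fnonpartionable$; the conclusion $\fnonpartionable\leq\fmajority$ then follows immediately from the maximality in \defref{fmajority}.

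I would argue by contradiction. Fix $f=\fnonpartionable$ and suppose there is a set $P$ with $|P|=n-f$ but $|\communicateset{P}|\leq\lfloor n/2\rfloor$. The aim is to exhibit a second set $Q$ of size $n-f$ that $P$ cannot reach, i.e. $\canreadcomm[not]{P}{Q}$, since this yields $\bothcanreadcomm[not]{P}{Q}$ and contradicts the definition of $\fnonpartionable$ as the largest $f$ with $\bothcanreadcomm{P}{Q}$ for all such pairs. By the definition of $\communicateset{P}$, any set $Q$ disjoint from $\communicateset{P}$ satisfies $\canreadcomm[not]{P}{Q}$, so it remains to find such a $Q$ of the required size inside the complement of $\communicateset{P}$.

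The key quantitative step is to check that this complement is large enough. Here reflexivity of the $\rightarrow$ relation is essential: because $\canreadcomm{p}{p}$ for every $p$, we have $P\subseteq\communicateset{P}$, and therefore $n-f=|P|\leq|\communicateset{P}|\leq\lfloor n/2\rfloor$. Consequently the complement of $\communicateset{P}$ has size at least $n-\lfloor n/2\rfloor=\lceil n/2\rceil\geq\lfloor n/2\rfloor\geq n-f$, which leaves room to choose $Q$ of size exactly $n-f$ avoiding $\communicateset{P}$.

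The main obstacle is precisely this counting: one must ensure the complement of $\communicateset{P}$ can accommodate a full-sized set $Q$, which would fail for small $f$ when $P$ is large. Reflexivity resolves this automatically, since the hypothesis $|\communicateset{P}|\leq\lfloor n/2\rfloor$ forces $n-f$ to be small as well, so no separate treatment of the regime $f\leq\lfloor(n-1)/2\rfloor$ is needed. Once $Q$ is selected, the contradiction with \defref{fnonpartion} is immediate, completing the argument.
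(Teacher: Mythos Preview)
Your proof is correct and follows essentially the same approach as the paper's: both argue by contradiction, assume a set $P$ of size $n-\fnonpartionable$ with $|\communicateset{P}|\leq\lfloor n/2\rfloor$, and then locate a set $Q$ of the same size inside the complement of $\communicateset{P}$ to violate the defining property of $\fnonpartionable$. Your handling of the counting step is in fact a bit cleaner: you derive $n-f\leq\lfloor n/2\rfloor$ directly from reflexivity ($P\subseteq\communicateset{P}$), whereas the paper reaches the same inequality by first ruling out the boundary case $\fnonpartionable=\lfloor(n-1)/2\rfloor$ and then invoking the general lower bound $\fnonpartionable\geq\lfloor(n-1)/2\rfloor$.
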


\begin{proof}
Assume, by way of contradiction,
that there is set of $n-\fnonpartionable$ processes, $P$,
such that $| {\communicateset{P}} | \leq{\lfloor{n/2}\rfloor}$.
Note that $\fnonpartionable\neq{\lfloor{(n-1)/2}\rfloor}$,
otherwise $n-\fnonpartionable>\lfloor{n/2}\rfloor$ and hence,
$| {\communicateset{P}} | >{\lfloor{n/2}\rfloor}$, which is a contradiction.
Let $Q$ be a set of $n-\fnonpartionable$ processes
not in $\communicateset{P}$.
This set exists since $n-|\communicateset{P}| \geq\lfloor{n/2}\rfloor$
and $n-\fnonpartionable < n-\lfloor{(n-1)/2}\rfloor = {\lfloor{n/2}\rfloor} + 1$
implies that $n-\fnonpartionable\leq{\lfloor{n/2}\rfloor}$.
By definition of $\communicateset{P}$,
$\canreadcomm[not]{P}{Q}$,
contradicting the definition of $\fnonpartionable$.
\end{proof}

The converse direction does not necessarily hold,
as discussed for the m\&m model
and the cluster-based model. 
\section{Implementing a Register in a Mixed System}
\label{section:register}

This section shows that a register can be implemented in the
presence of $f$ failures, if and only if the system is not $f$-partitionable,
that is, $f \leq \fnonpartionable$.
A \textit{single-writer multi-reader} (SWMR) register $R$
can be written by a single writer process $w$,
using a procedure {\sc{Write}},
and can be read by all processes $p_1, \dots, p_n$,
using a procedure {\sc{Read}}.
A register is \emph{atomic}~\cite{Lamport86} if any execution of
{\sc Read} and {\sc Write} operations can be
linearized~\cite{HerlihyW1990}.
This means that there is a total order of all completed operations
and some incomplete operations, that respects the real-time order
of non-overlapping operations, in which each {\sc Read} operation
returns the value of the last preceding {\sc Write} operation
(or the initial value of the register, if there is no such {\sc Write}).

\subsection{Implementing an Atomic Register in a Non-Partitionable System}
\label{swmr-impl}


The algorithm appears in \algoref{swmr}; for simplicity of presentation,
a process sends each message also to itself and responds with the appropriate response.
This is an adaptation of the register
implementation of~\cite{Hadzilacos+:M&M:OPODIS:2019} in the m\&m model.

All the message communication between the processes is done in msg\_exchange(),
where a process simply send a message and wait for $n-f$ acknowledgement.
This modular approach allows us to replace the communication pattern
according to the specific shared-memory layout.
For example, Section~\ref{sec:cluster} shows that in the cluster-based
model this communication pattern can be changed to wait for less than $n-f$ processes.

\begin{algorithm}
	\caption{Atomic SWMR register implementation ($w$ is the single writer).}	
	\label{alg:swmr}
	\begin{algorithmic}[1]
			\\ []{\bf Local Variables:}\\[]
			 {\textit{w-sqno}}: int, initially 0 \Comment{write sequence number}\\[]
			 {\textit{r-sqno}}: int, initially 0 \Comment{read sequence number}\\[]
			 {\textit{last-sqno}}: int, initially 0 \Comment{last write sequence number observed}\\[]
			  {\textit{counter}}:  int, initially 0
			 \Comment{number of replies/acks received so far}
			  \\ []{\bf Shared Variables:}  \\[]
			  \textbf{for each} process $p$ and memory $\mu\in{W_p}$: \\[]
			  \quad {\textit{R$_\mu$[p]}}: $\langle$int, int$\rangle$, initially $\langle0, v_0\rangle$
                    \Comment{writable by $p$ and readable by all processes that can read
                    from $\mu$, i.e., all the processes in $R_\mu$}\\[]
		 \hrulefill
			\makeatletter\setcounter{ALG@line}{0}\makeatother
			\item[] \blank{-.7cm} {\sc{Write}}(\textit{v}) --- \textbf{Code for the writer $w$:}
			\State \textit{w-sqno} = \textit{w-sqno} + 1 \Comment{increment the write sequence number}\label{lin:proposer1}
			\State \textit{acks} =  {msg\_exchange}$\langle$W\textit{, w-sqno, v}$\rangle$\label{lin:proposer2}
			\State return
			\item[] \vspace{-0.2cm}
			\item[] \blank{-.7cm} \textbf{Code for any process \textit{p}:}
			\State Upon receipt of a $\langle$W/WB\textit{, sqno, v}$\rangle$ message
                    from process \textit{w/q}:
			\If {\textit{(sqno $>$ last-sqno)}} \label{lin:line5}
			\State \textit{last-sqno} = \textit{sqno}
			\ForEach {$\mu\in{W_p}$}
			    \Comment{write value and sequence number to every register $p$ can write to}
			\State \textit{R$_\mu$[p]} = \textit{$\langle$sqno, v}$\rangle$
			    \label{lin:line6}
			\EndFor
			\EndIf
			\State send $\langle$Ack-W/Ack-WB\textit{, sqno}$\rangle$ to process \textit{w/q}
			\item[] \vspace{-0.2cm}
			\item[] \blank{-.7cm} {\sc{Read}}() --- \textbf{Code for the reader $q$:}
			\State \textit{r-sqno} = \textit{r-sqno} + 1 \Comment{increment the read sequence number}\label{lin:rproposer1}
			\State \textit{set\_of\_tuples} =
                {msg\_exchange}$\langle$R\textit{, r-sqno, $\bot$}$\rangle$\label{lin:rproposer2}
			\State \textit{$\langle$seq, val}$\rangle$ = max\textit{(set\_of\_tuples)} \Comment{maximum \textit{$\langle$seq, val$\rangle$}} \label{lin:line12}
			\State \textit{acks} = {msg\_exchange}$\langle$WB\textit{, seq, val}$\rangle$\Comment{write back} \label{lin:line13}
			\State return \textit{val}
			\item[] \vspace{-0.2cm}
			\item[] \blank{-.7cm} \textbf{Code for any process \textit{p}:}
			\State Upon receipt of a $\langle$R\textit{, r-sqno, -}$\rangle$
                    message from process \textit{q}:
			\State \textit{$\langle$w-seq, w-val}$\rangle$ =
                max\{\textit{$\langle$seq, val}$\rangle$ : $\mu\in{R_p\cap{W_q}}$ and \textit{R$_\mu$[q] = $\langle$seq, val}$\rangle$\} \Comment{find \textit{val} with maximum \textit{seq}} \label{lin:line17}
			\State send $\langle$Ack-R\textit{, r-sqno, $\langle$w-seq, w-val}$\rangle$$\rangle$ to process \textit{q} \label{lin:rproposer5}
			\item[] \vspace{-0.2cm}
			\item[] \blank{-.7cm}
			\textbf{msg\_exchange\textit{$\langle$m, seq, val}$\rangle$: returns \textit{set of responses}} \label{lin:proposer16}
			\State send \textit{$\langle$m, seq, val}$\rangle$ to all processes
			\State \textit{responses = $\emptyset$}
			\Repeat
			\State wait to receive a message $m$ of the form $\langle$Ack\textit{-m, seq, -}$\rangle$
			\State \textit{counter} = \textit{counter} + 1
			\State \textit{responses} = \textit{responses} $\cup$ \{$m$\}
			\Until \textit{counter}$\geq n-f$\label{lin:proposer22}
			\State return\textit{(responses)}
		\end{algorithmic}
\end{algorithm}	

For each process $p$ and memory $\mu\in W_p$ there is a shared SWMR
register $R_\mu[p]$, writable by $p$ and readable by every process
that can read from $\mu$, i.e., every process in $R_\mu$.
For simplicity, we assume that each process $p$ has a shared memory
that it can read from and write to, i.e.,
a SWMR register that $p$ can both write to and read from.
This assumption can be made since this register can be
in the process's local memory.
In {\sc{Write($v$)}},
the writer $w$ increments its local write sequence number \textit{w-sqno}
and calls msg\_exchange().
This procedure sends a message of type W with value $v$
and \textit{w-sqno} to all processes.
On receiving a write message from $w$,
$p$ checks if the write value is more up-to-date than the last value it
has observed, by checking if \textit{w-sqno} is larger than \textit{last-sqno}.
If so, $p$ updates \textit{last-sqno} to be \textit{w-sqno} and
writes the value $v$ and sequence number \textit{w-sqno}
to all the registers it can write to.
When done, the process sends an acknowledgment to the writer $w$.
Once $w$ receives $n-f$ acknowledgments, it returns successfully.

In {\sc{Read}}, a reader process $q$ increments its local
read sequence number \textit{r-sqno} and calls msg\_exchange().
This procedure sends a message of type R and \textit{r-sqno} to all processes.
On receiving a read message from $q$, a process $p$ reads all the registers
it can read and finds the maximum sequence number and value stored in them
and sends this pair to the reader $q$.
Once $q$ receives $n-f$ acknowledgments,
it finds the value \textit{val} with maximum sequence number \textit{seq}
among the responses (i.e., it selects the most up-to-date value).
Then, $q$ calls  msg\_exchange(), with message type WB (write back)
and value \textit{val} and \textit{seq} to update other readers.
On receiving a write back message from $q$,
each process $p$ handles WB like W message,
checking if \textit{w-sqno} is larger than \textit{last-sqno}
and if so updating \textit{last-sqno} and all the registers it can write to.
When done, the process sends an acknowledgment to $q$.
Once $q$ receives $n-f$ acknowledgments,
it returns $val$ successfully.

The communication complexities of read and write operations are
dominated by the cost of a msg\_exchange(),
invoked once in a write and twice in a read.
This procedure takes one round-trip and $O(n)$ messages,
like the algorithm for the pure message-passing model~\cite{AttiyaBD1995}.
The number of shared SWMR registers depends on the shared-memory topology
and is ${\regsno = \sum_{\text{process $p$}}{|W_p|}}$,
as every process has a single register in each memory it can write to.
The number of accesses to the shared memory is
${\readno = \sum_{\text{process $p$}}{\sum_{\mu\in R_p}{|W_\mu|}}}$,
as every process reads all the registers it can read from.
Note that $\readno \leq n\regsno$,
since each process reads at most $\regsno$ registers.
Formally, since $p\in W_\mu \Leftrightarrow \mu \in W_p$,
$\regsno = \sum_{\text{process $p$}}{|W_p|} = \sum_{\mu\in M}{|W_\mu|}$.
Hence,
$$\readno = \sum_{\text{process $p$}}{\sum_{\mu\in R_p}{|W_\mu|}}
          \leq \sum_{\text{process $p$}}{\sum_{\mu\in M}{|W_\mu|}}
          = \sum_{\text{process $p$}}{\regsno} = n\regsno ~. $$


\subsection{Correctness of the Algorithm} \label{swmr-correct}

We argue that \algoref{swmr} is correct when the system is not $f$-partitionable.
The only statement that could prevent the completion of a {\sc Write}
or a {\sc Read} is waiting for $n-f$ responses (Line~\ref{lin:proposer22}).
Since at most $f$ processes may crash,
the wait statement eventually completes,
implying the next lemma.

\begin{lemma}\label{lem:rwcrash}
A {\sc Write} or {\sc Read} invoked by a process that does not crash completes.
\end{lemma}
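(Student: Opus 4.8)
The claim to prove is \lemref{rwcrash}: a \textsc{Write} or \textsc{Read} invoked by a process that does not crash completes. As the surrounding text already observes, the only place where progress can stall is the \texttt{wait} loop on Line~\ref{lin:proposer22}, which blocks until \textit{counter} reaches $n-f$. So the plan is to show that a non-crashing process performing a \texttt{msg\_exchange()} eventually receives $n-f$ acknowledgments of the appropriate type, and then to lift this from a single \texttt{msg\_exchange()} to a full \textsc{Write} (one invocation) or \textsc{Read} (two invocations).

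First I would fix the caller $p$ (the writer $w$ in a \textsc{Write}, or a reader $q$ in a \textsc{Read}) and assume $p$ does not crash. Since at most $f$ of the $n$ processes crash, at least $n-f$ processes are non-faulty; call this set $C$, so $|C|\geq n-f$. When $p$ executes \texttt{msg\_exchange$\langle m, \textit{seq}, \textit{val}\rangle$} it sends the message to all processes, including itself. Each process in $C$ eventually receives this message (links are reliable and the sender $p$ does not crash), executes the corresponding handler — which performs only local steps and a finite number of shared-memory accesses, hence terminates — and sends back an $\langle\textsf{Ack-}m, \textit{seq}, -\rangle$ message. These acknowledgments are eventually delivered to $p$ (again by link reliability and because $p$ does not crash). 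Each such delivery increments \textit{counter} and adds the message to \textit{responses}; since the handlers match on the message tag and on \textit{seq}, the $n-f$ acks from $C$ are exactly of the form the loop is waiting for. Therefore \textit{counter} reaches at least $n-f$ and the loop exits, so \texttt{msg\_exchange()} returns.

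With that established, completing a \textsc{Write} is immediate: it consists of a local increment, one call to \texttt{msg\_exchange()}, and a \texttt{return}, all of which terminate. For \textsc{Read}, after the first \texttt{msg\_exchange()} returns a nonempty set of tuples (it contains at least $p$'s own response, so the $\max$ on Line~\ref{lin:line12} is well-defined), the reader performs the write-back via a second \texttt{msg\_exchange()}, which terminates by the same argument, and then returns. Hence in all cases the operation completes.

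I do not anticipate a genuine obstacle here — the lemma is essentially a liveness bookkeeping argument. The only subtlety worth stating carefully is that the handlers invoked on receipt of a W/WB/R message themselves terminate (they touch only local variables and iterate over the finite sets $W_p$ and $R_p\cap W_q$), so a correct process cannot get stuck before sending its acknowledgment; and that the \texttt{counter} variable, though shared across the three \texttt{msg\_exchange()} calls of a \textsc{Read}, is only ever incremented, so each call still eventually sees it cross its own threshold. Notably, non-partitionability is \emph{not} needed for this lemma — only $f \le n-1$, so that at least one process (namely $p$ itself) survives; non-partitionability will be needed for safety (linearizability), treated separately.
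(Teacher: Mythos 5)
Your proof is correct and follows the same route as the paper, which disposes of this lemma in two sentences: the only blocking statement is the wait for $n-f$ acknowledgments, and since at most $f$ processes crash, at least $n-f$ correct processes eventually reply. Your additional observations (handlers terminate, reliable links, non-partitionability not needed for liveness) are accurate elaborations of that same argument.
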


\begin{lemma}\label{lem:msgexw}
Let $t_2$ be the largest sequence number returned in a
read msg\_exchange by reader $p_j$,
and assume that the msg\_exchange starts after the completion of
a write msg\_exchange,
either by the writer $w$ or in a write back by reader $p_i$,
with sequence number $t_1$, then, $t_1$ $\leq$ $t_2$.
\end{lemma}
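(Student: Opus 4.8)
The plan is to show that the write with sequence number $t_1$ has "left a trace" in the shared memory of enough processes that the later read is guaranteed to observe a sequence number at least $t_1$. Concretely, let $A$ be the set of $n-f$ processes whose acknowledgments completed the write msg\_exchange (of type W by $w$, or of type WB by $p_i$) with sequence number $t_1$; let $B$ be the set of $n-f$ processes whose Ack-R messages completed the read msg\_exchange of $p_j$. Each process in $A$, before sending its acknowledgment, executed Lines~\ref{lin:line5}--\ref{lin:line6}: since the write msg\_exchange completed \emph{before} the read msg\_exchange started, and messages to oneself are handled immediately, each $a \in A$ had $\textit{last-sqno} \geq t_1$ at that point, and therefore wrote a pair $\langle sqno, \cdot\rangle$ with $sqno \geq t_1$ into \emph{every} register $R_\mu[a]$ with $\mu \in W_a$. (The sequence number written is nondecreasing over time, so it never drops below $t_1$ afterwards.) Symmetrically, each $b \in B$, upon receiving the R message, executed Line~\ref{lin:line17}, reading every register $R_\mu[q]$ with $\mu \in R_b \cap W_q$, and reported to $p_j$ the maximum sequence number found.

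The second step is the key use of the non-partitionability hypothesis. The sets $A$ and $B$ each have size $n-f$, and since the system is not $f$-partitionable we have $\bothcanreadcomm{A}{B}$, in particular $\canreadcomm{B}{A}$: there exist $b \in B$ and $a \in A$ and a memory $\mu$ with $a \in W_\mu$ and $b \in R_\mu$, i.e., $\mu \in W_a$ and $\mu \in R_b$. When $b$ handled $p_j$'s read message, it read $R_\mu[a]$; at that time $a$ had already written a sequence number $\geq t_1$ there (because $b$'s handling of the R message happens after the read msg\_exchange started, which is after the write msg\_exchange completed, which is after $a$ wrote). Hence the pair $b$ reported to $p_j$ had sequence number $\geq t_1$, so the maximum $t_2$ over all of $p_j$'s responses satisfies $t_2 \geq t_1$.

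The main obstacle is pinning down the ordering of events precisely enough to justify "$a$ wrote a value $\geq t_1$ to $R_\mu[a]$ before $b$ read it." This requires: (i) noting that the write msg\_exchange completing means \emph{every} process in $A$ — including $a$ — had already performed its register writes and sent its ack, so the write to $R_\mu[a]$ is in the past of the completion event; (ii) the hypothesis that $p_j$'s read msg\_exchange \emph{starts} after that completion, so $p_j$'s R message is sent afterwards; (iii) FIFO/asynchrony only guarantee delivery, not timing, but since $b$ must \emph{receive} the R message before replying, and the R message is sent after the write completes, $b$'s read of $R_\mu[a]$ is after $a$'s write. The monotonicity of the stored sequence number (guarded by the test on Line~\ref{lin:line5}) then ensures the value read is $\geq t_1$ even if $a$ has since overwritten it with something larger. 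I would also handle the degenerate sub-case where $a = b$ (possible since the relation is reflexive and $A, B$ may intersect): then $b$ reads its own register, which it updated to $\geq t_1$ when processing the write, so the same conclusion holds a fortiori.
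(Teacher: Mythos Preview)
Your proposal is correct and follows essentially the same approach as the paper: define the two responder sets of size $n-f$, invoke non-$f$-partitionability to obtain $a\in A$, $b\in B$, and a memory $\mu\in W_a\cap R_b$, argue that $a$ stored a sequence number $\geq t_1$ in $R_\mu[a]$ before acknowledging, and that $b$ reads this (or a later value) when handling the read request. Your treatment is slightly more detailed on the timing chain and on the degenerate case $a=b$; conversely, the paper is a bit more explicit about the case split when $a$ receives the W/WB message (either $t_1>\textit{last-sqno}$ and $a$ writes $t_1$, or $\textit{last-sqno}\geq t_1$ already and the register was updated earlier), which is the cleaner justification for ``$R_\mu[a]\geq t_1$'' --- your aside about ``messages to oneself are handled immediately'' is not actually needed for that step.
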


\begin{proof}
Let $P$ be the set of processes that respond in the write msg\_exchange,
and let $P'$ be the set of processes that respond in
the read msg\_exchange.
By Line~\ref{lin:proposer22}, $|P|, |P'| \geq n-f$.
Since the system is not $f$-partitionable, $\bothcanreadcomm{P}{P'}$.
Thus, there are processes $p\in{P}$ and $p'\in{P'}$ such that
$\canreadcomm{p'}{p}$,
and there is a shared memory $\mu\in{W_p\cap{R_{p'}}}$
that $p$ can write to and $p'$ can read from.
	
Upon receiving a write message (W or WB) with sequence number $t_1$,
$p$ compares \textit{last-sqno} with $t_1$ (Line~\ref{lin:line5}).
If $t_1$ is greater than \textit{last-sqno},
then $p$ updates all the registers it can write to.
In particular, it updates $\textit{R}_\mu[p]$ with the value associated
with sequence number $t_1$ (Line~\ref{lin:line6}), before responding.
If $\textit{last-sqno}\geq t_1$,
by the update in lines~\ref{lin:line5}-\ref{lin:line6}
and the fact that the shared registers are SWMR registers,
$\textit{R}_\mu[p] = \langle\textit{last-sqno},-\rangle$.
Hence, $\textit{R}_\mu[p] \geq t_1$
when $p$ sends a write response $\langle \text{Ack-W}, t_1 \rangle$.

Upon receiving a read message from $p_j$,
$p'$ reads all the registers it can read, including those in $\mu$,
and selects the value associated with the maximum sequence number.
In particular, $p'$ reads $\textit{R}_\mu[p]$,
and since this read happens after $p$ sends the write response,
$p'$ reads a value larger than or equal to $t_1$.
Thus,
$p'$ sets $\textit{w-sqno}_{p'}$ to a value larger or equal to $t_1$,
before sending a read response
$\langle \text{Ack-R}, \textit{r-sqno}, \langle \textit{w-sqno}_{p'}, \textit{val} \rangle\rangle$
to $p_j$.
Since $p_j$ picks the value associated with the maximum sequence number $t_2$,
it follows that $t_2 \geq \textit{w-sqno}_{p'} \geq t_1$.
\end{proof}

We show that the algorithm implements an atomic register,
by explicitly ordering all completed reads and
all invoked writes (even if they are incomplete).
Note that values written by the writer $w$
have distinct write sequence numbers, and
are different from the initial value of the register, denoted $v_0$;
the value of the $k^{th}$ write operation is denoted $v_k$,
$k \geq 1$.

Writes are ordered by the order they are invoked by process $w$;
if the last write is incomplete, we place this write at the end.
Since only one process invokes write, this ordering is well-defined
and furthermore, the values written appear in the order
$v_1, v_2, \ldots$.

Next, we consider reads in the order they complete;
note that this means that non-overlapping operations are considered
in their order in the execution.
A read that returns the value $v_{k-1}$, $k \geq 0$,
is placed before the $k$-th write in the ordering, if this write exists,
and at the end of the ordering, otherwise.
For $k=0$, this means that the read is placed before the first write,
which may be at the end of the order, if there is no write.

%
%

Lemma~\ref{lem:msgexw} implies that this order respects the real-time
order of non-overlapping operations.
Together with Lemma~\ref{lem:rwcrash}, this implies:

\begin{theorem} \label{theorem:non partitionable SWMR}
If a system is not $f$-partitionable then \algoref{swmr}
implements an atomic SWMR register, in the presence of $f$ failures.
\end{theorem}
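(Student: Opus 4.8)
The plan is to establish the two properties that the excerpt has already set up as the ingredients of atomicity: termination (liveness) and the correctness of the linearization order. The first is immediate from \lemref{rwcrash}, which says that any {\sc Write} or {\sc Read} invoked by a nonfaulty process completes; this handles the requirement that the implementation is wait-free against $f$ failures, since the only blocking point is the wait for $n-f$ acknowledgments in msg\_exchange, and at most $f$ processes crash. So the bulk of the argument is showing that the explicit total order defined just before the theorem statement---writes in invocation order (incomplete last write at the end), and each read that returns $v_{k-1}$ placed just before the $k$-th write---is a valid linearization.

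First I would check that this order is well-defined and that each read returns the value of the write immediately preceding it in the order: this is essentially by construction, since a read returning $v_{k-1}$ is placed between the $(k-1)$-th and $k$-th writes, and the $(k-1)$-th write is the last write before it. I would note that a read returning the initial value $v_0$ is placed before the first write (possibly at the very end if there is no write at all), which is consistent. The substantive step is to verify that this order respects the real-time (non-overlapping) order of operations, split into the usual three cases: (i) write-before-write, which is automatic since writes are ordered by their (single, sequential) invocation order by $w$; (ii) write-before-read, i.e., if a write with sequence number $t_1$ completes before a read is invoked, the read must return $v_{t_1}$ or a later value---this is exactly \lemref{msgexw}, which gives $t_1 \le t_2$ where $t_2$ is the sequence number the read adopts, so the read returns $v_{t_2}$ with $t_2 \ge t_1$ and is thus placed at or after the $(t_2+1)$-th write, hence after the $t_1$-th write; (iii) read-before-write and read-before-read.

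For case (iii), read-before-read: if read $r_1$ returning $v_{a}$ completes before read $r_2$ is invoked, I need $v_b$ with $b \ge a$ returned by $r_2$. This follows by observing that $r_1$ performs a write-back msg\_exchange with sequence number $a$ that completes before $r_2$'s initial read msg\_exchange begins, so \lemref{msgexw} applies with the write-back playing the role of the write (the lemma is stated to cover exactly this: ``either by the writer $w$ or in a write back by reader $p_i$''), yielding $b \ge a$. For read-before-write: if a read returning $v_{k-1}$ completes before the $j$-th write is invoked, I must argue $k \le j$; since the read completed, by the structure of the algorithm and the fact that only $w$ issues writes sequentially, the value $v_{k-1}$ must have been written (its sequence number $k-1$ observed) before $w$ started its $j$-th write, hence $k-1 < j$, so the read's position (just before the $k$-th write) precedes the $j$-th write. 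I would phrase this last point carefully using that write sequence numbers are assigned in increasing order and a value with sequence number $k-1$ can only appear in the shared registers after $w$'s $(k-1)$-th invocation.

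The main obstacle is case (iii), and more precisely making the read-before-write argument rigorous: one has to rule out the possibility that a read ``sees the future,'' i.e., returns a value whose sequence number exceeds what any write invoked so far could have produced. This is intuitively clear because the value $v_k$ only enters any shared register $R_\mu[p]$ or any message after $w$ has invoked its $k$-th {\sc Write}, but spelling it out requires tracking how sequence numbers propagate and invoking the SWMR property of the underlying registers, much as in the proof of \lemref{msgexw}. Everything else is bookkeeping: collecting the four cases, citing \lemref{rwcrash} for termination and \lemref{msgexw} for cases (ii) and the read-before-read half of (iii), and concluding that the exhibited order is a correct linearization, which is the definition of an atomic SWMR register.
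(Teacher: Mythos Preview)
Your proposal is correct and follows essentially the same approach as the paper: both invoke \lemref{rwcrash} for termination and \lemref{msgexw} to argue that the explicit linearization order respects the real-time order of non-overlapping operations. The paper compresses the entire safety argument into the single sentence ``\lemref{msgexw} implies that this order respects the real-time order of non-overlapping operations,'' whereas you spell out the standard four-case analysis and correctly note that the read-before-write case needs the additional (easy) observation that a sequence number can only appear after the corresponding {\sc Write} is invoked---a point the paper leaves implicit.
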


\subsection{Impossibility of Implementing a Register in a Partitionable System}

The impossibility proof holds even if only \emph{regular}
register~\cite{Lamport86} is implemented.
In a regular register,
a read should return the value of a {\sc Write} operation
that either overlaps it, or immediately precedes it.
The proof is similar to the one in~\cite{Hadzilacos+:M&M:OPODIS:2019},
where they show that a SWMR register cannot be implemented in the
m\&m model if more than $\tL$ processes may fail,
but it is adapted to handle access restrictions.

To carry out the impossibility proof (and later one),
we need additional definitions.
A \emph{configuration} $C$ is a tuple with a state for each process,
a value for each shared register,
and a set of messages in transit (sent but not received)
between any pair of processes.
A \emph{schedule} is a sequence of process identifiers.
For a set of processes $P$, a schedule is \emph{$P$-free} if no
process from $P$ appears in the schedule;
a schedule is \emph{$P$-only} if only processes from $P$ appear
in the schedule.
An \emph{execution} $\alpha$ is an alternating sequence of configurations
and \emph{events}, where each event is a step by a single process that
takes the system from the preceding configuration to the following
configuration.
In a step, a process either accesses the shared memory
(read or write) or receives and sends messages.
Additionally, a step may involve the invocation of a higher-level
operation.
A schedule is associated with the execution in a natural way;
this induces notions of $P$-free and $P$-only executions.

\begin{theorem}\label{theorem:f-partitionable impossibility}
If a system is $f$-partitionable then for at least $n-f$ processes $p$,
there is no implementation of a regular SWMR register
with writer $p$ in the presence of $f$ failures.
\end{theorem}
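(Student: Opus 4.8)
The plan is to use a partitioning witness to build an indistinguishability argument. Since the system is $f$-partitionable, by \defref{fpartition} there are two sets $P$ and $Q$, each of size $n-f$, with $\bothcanreadcomm[not]{P}{Q}$; in particular $\canreadcomm[not]{P}{Q}$, so no process in $P$ can read anything written by a process in $Q$. First I would argue that at least one of $P, Q$ — say $P$ — contains at least $n - f$ processes that are candidates for the writer role; more carefully, the bound "for at least $n-f$ processes $p$" should come from observing that the set of processes for which the impossibility fails is contained in the complement of one side of the partition. So I would fix some process $w \in P$ as the intended writer and show that no regular SWMR register with writer $w$ tolerates $f$ failures, and then note that since $P$ has size $n-f$, this already gives the claimed $n-f$ bad writers. (If the more delicate counting is needed, one takes $w$ ranging over all of $P$.)

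Assuming toward a contradiction that such an implementation exists, I would run the standard $2$-set partition execution. Consider an execution $\exec_1$ that is $Q$-free: the $f$ processes outside $P$ (note $|P| = n-f$, so there are exactly $f$ of them — here I would double-check that $P \cap Q$ can be assumed empty or handle the overlap, since $|P| + |Q| = 2(n-f)$ may exceed $n$; the overlapping processes must be treated as "slow" rather than crashed) crash at the start, messages between $P$ and $Q$ are delayed indefinitely, and $w$ performs ${\sc Write}(v_1)$ with $v_1 \neq v_0$, which completes by the $f$-resilience assumption using only responses from $P$. Symmetrically, in a $P$-free execution $\exec_2$, the writer is silent, the initial value is $v_0$, and some reader $r \in Q$ performs a ${\sc Read}$ that completes using only responses from $Q$ and must return $v_0$ (regularity: no write overlaps or precedes it). Now splice: build $\exec_3$ in which $w \in P$ does ${\sc Write}(v_1)$ to completion, then afterwards $r \in Q$ does a ${\sc Read}$; delay all $P$–$Q$ messages, and let the processes in $P \cap Q$ (if any) take no steps during the read. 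Since $\canreadcomm[not]{P}{Q}$, the reader $r$ cannot observe, through shared memory, any register written by a process in $P$ in response to the write; and by message delay it gets no messages from $P$ either. Hence $\exec_3$ is indistinguishable to $r$ from $\exec_2$, so the ${\sc Read}$ returns $v_0$, violating regularity since ${\sc Write}(v_1)$ immediately precedes it.

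The main obstacle is bookkeeping around the overlap $P \cap Q$: when $2(n-f) > n$ the two sides share processes, so one cannot simply "crash everything outside $P$" and separately "crash everything outside $Q$." The fix is the usual one — the shared processes are never crashed but are simply scheduled to take no steps during the critical read, and one must verify the register values they hold cannot leak $v_1$ to $r$; this is exactly where $\canreadcomm[not]{P}{Q}$ (not merely "no direct $P$-to-$Q$ memory link") is used, since it rules out a process in $Q$ reading a memory that some process in $P$ wrote during the write. A secondary point requiring care is the precise "$n-f$ processes" count in the statement: I would phrase it as "the impossibility holds for every writer in $P$," and since $|P| = n-f$ this yields the theorem; if a sharper statement identifying the bad set is wanted, one observes the good writers all lie in the $\leq f$ processes outside whichever side one chooses. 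Finally, I would confirm that nothing in the argument uses atomicity — only regularity, and only a single write and a single read — so it applies to the weakest register specification, as claimed.
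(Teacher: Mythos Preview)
Your approach is the paper's: take the two size-$(n-f)$ witness sets, place the writer in one and a reader in the other, run the standard three-execution indistinguishability argument, and conclude that every process in the writer's set (all $n-f$ of them) is a bad writer. Two points to clean up.

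First, your ``main obstacle'' is not an obstacle. The relation $\rightarrow$ is reflexive (every process can read its own local writes, so $\canreadcomm{p}{p}$ always holds), hence $\bothcanreadcomm[not]{P}{Q}$ forces $P \cap Q = \emptyset$: any common $p$ would yield $\canreadcomm{P}{Q}$ via $\canreadcomm{p}{p}$. So the two witness sets are automatically disjoint and there is no overlap bookkeeping to do. The paper simply lets the processes outside $P \cup Q$ take no steps in all three executions; since $|P| = |Q| = n-f$ and the sets are disjoint, the complement of either one has size exactly $f$, and the $f$-resilience assumption applies cleanly to each side.

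Second, you extracted the wrong direction of the arrow. You correctly parse $\canreadcomm[not]{P}{Q}$ as ``no process in $P$ can read anything written by a process in $Q$,'' but then invoke it to conclude that the reader $r \in Q$ cannot observe shared-memory writes made by processes in $P$. That conclusion requires $\canreadcomm[not]{Q}{P}$, not $\canreadcomm[not]{P}{Q}$. Both directions follow from $\bothcanreadcomm[not]{P}{Q}$, so the fix is immediate, but make sure to cite the direction saying the \emph{reader's} side cannot read what the \emph{writer's} side wrote; that is the one the argument actually uses.
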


\begin{proof}
Assume, by way of contradiction, that there is an implementation of a
regular register $R$ that tolerates $f$ failures.
Since the system is $f$-partitionable,
there are two disjoint sets of processes $P$ and $P'$,
each of size $n-f$, such that $\canreadcomm[not]{P'}{P}$.
This means that there are no two processes $p\in{P}$
and $p'\in{P'}$, such that
$p'$ can read from a memory and $p$ can write to that same memory.
Let $Q$ be the set of nodes not in $P \cup P'$.
Since $|P|, |P'| = n-f$, it follows that $| P \cup Q | = | P' \cup Q | = f$.
Therefore, the implementation terminates in every
execution in which all processes in $P \cup Q$ or in $P' \cup Q$ crash,
i.e., in $(P \cup Q)$-free executions
and in $(P' \cup Q)$-free executions.

Assume that the writer $w$ is in $P$.

To prove the theorem, we execute a write operation by $w$,
followed by a read operation by some process in $P'$,
while partitioning them.
Namely, we delay all messages sent from $P$ until the read completes.
The processes in $P'$ cannot access any of the shared memories
that processes in $P$ write to,
implying that the read operation is unaware of the write operation,
violating the register specification.
Formally, this is done by constructing three executions.

The first execution, $\alpha_1$, is a $(P' \cup Q)$-free execution.
At time $0$, the writer $w$ invokes a {\sc Write}($v$) operation,
$v \neq v_0$ (the initial value of the implemented register).	
Since at most $f$ processes fail in the execution $\alpha_1$,
and the implementation tolerates $f$ failures,
the write must complete at some time $t_w^e$.

The second execution, $\alpha_2$, is a $(P \cup Q)$-free execution.
At some time $t_r^s > t_w^e$, a process $q \in P'$ invokes
a {\sc Read} operation.
Since at most $f$ processes fail in the execution $\alpha_2$,
and the implementation tolerates $f$ failures,
the read completes at some time $t_r^e$.
Since there is no write in $\alpha_2$, the read returns $v_0$.
	
Finally, the third execution $\alpha_3$
merges $\alpha_1$ and $\alpha_2$.
The execution is identical to $\alpha_1$ from time 0 to $t_w^e$
and to $\alpha_2$ from time $t_r^s$ to $t_r^e$.
All messages sent from processes in $P$ to processes in $P'$
and from processes in $P'$ to processes in $P$
are delivered after time $t_r^e$.
Processes in $Q$ take no steps in $\alpha_3$.
Since 
$q$ does not receive any messages from a process in $P$,
and since processes in $P'$ cannot read what
processes in $P$ write to the shared memory,
$q$ returns $v_0$ as in $\alpha_2$,
violating the specification of a regular register. 
Since the writer can be any process in the set $P$, there are 
at least $n-f$ processes for which the theorem holds.
\end{proof}

Note that even when the system is $f$-partitionable there could
be writers for which a SWMR register can be implemented.
A simple example is when we have single memory $\mu$,
writable by some process $w$ and readable by all processes.
A SWMR register writable by $w$ and readable by all the other
processes can be trivially implemented,
regardless of whether the system is partitionable.
However, a SWMR register writable by other processes cannot be implemented.



If the shared memory is without access restrictions,
the $\rightarrow$ relation is symmetric,
which implies the next observation:

\begin{observation}
If a system is $f$-partitionable and the shared memory has no
access restrictions then there is no implementation of a regular
SWMR register in the presence of $f$ failures, for any writer $p$.
\end{observation}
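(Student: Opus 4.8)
The plan is to derive this observation as a straightforward corollary of \theoremref{f-partitionable impossibility}, leveraging the symmetry of the $\rightarrow$ relation in the access-restriction-free case. First I would recall that by \defref{fpartition}, if the system is $f$-partitionable there exist sets $P$ and $Q$, each of size $n-f$, with $\bothcanreadcomm[not]{P}{Q}$, i.e., it is \emph{not} the case that both $\canreadcomm{P}{Q}$ and $\canreadcomm{Q}{P}$. When the shared memory has no access restrictions, the $\rightarrow$ relation is symmetric (as established earlier in Section~\ref{sys-model}), so $\canreadcomm{P}{Q}$ holds if and only if $\canreadcomm{Q}{P}$ holds. Hence $\bothcanreadcomm[not]{P}{Q}$ forces $\canreadcomm[not]{P}{Q}$ and $\canreadcomm[not]{Q}{P}$ simultaneously.

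Next I would invoke \theoremref{f-partitionable impossibility}, which already tells us that for at least $n-f$ processes $p$ there is no regular SWMR register implementation with writer $p$ tolerating $f$ failures. The point of the present observation is to upgrade ``at least $n-f$ processes'' to ``any process.'' The key step is to observe that the impossibility argument in the proof of \theoremref{f-partitionable impossibility} only used one direction of the partition, namely $\canreadcomm[not]{P'}{P}$, together with the assumption that the writer lies in $P$. In the symmetric setting, given \emph{any} designated writer $w$, I can always arrange the partition so that $w$ lies on the correct side: if $w \in P$ we use the partition $(P, Q)$ directly as in the original proof; if instead $w$ ends up naturally on the other side, the symmetry of $\rightarrow$ means the roles of $P$ and $Q$ can be swapped, since $\canreadcomm[not]{Q}{P}$ holds as well. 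Either way, we obtain two disjoint sets of size $n-f$ with no shared memory allowing the side containing a would-be reader to observe writes by $w$'s side, and the three-execution indistinguishability construction from the proof of \theoremref{f-partitionable impossibility} goes through verbatim.

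The mild subtlety — and the step I expect to require the most care — is handling the possibility that the fixed writer $w$ belongs to neither $P$ nor $Q$ (recall $Q$ in the notation of \defref{fpartition} need not partition all $n$ processes into two halves; there may be a third group). In that case I would note that we are free to choose which $n-f$ processes play the role of the writer's side: since $w$ communicates with itself reflexively, I can take a set $P''$ of size $n-f$ containing $w$ and, using the non-partitionability failure witnessed by $P$ and $Q$ together with symmetry, extract a second set of $n-f$ processes disjoint enough from $P''$ to reproduce the partition. Concretely, one of $P$ or $Q$ misses $w$ (they are disjoint, so at most one contains it); take that set as the reader's side and any size-$(n-f)$ superset-free set containing $w$ as the writer's side, then verify $\canreadcomm[not]{}{}$ between them via the symmetric partition property. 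Once the two sides are fixed with $w$ on one of them, the rest is an immediate appeal to the construction of \theoremref{f-partitionable impossibility}, so no new estimates are needed.
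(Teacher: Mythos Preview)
Your handling of the cases $w \in P$ and $w \in P'$ is correct and is exactly what the paper has in mind: the single sentence preceding the observation invokes only the symmetry of $\rightarrow$, which lets one swap the roles of $P$ and $P'$ in the proof of \theoremref{f-partitionable impossibility} and hence obtain impossibility for every writer in $P \cup P'$. The gap is in your third case, where $w$ lies in neither $P$ nor $P'$. There you propose to take ``any size-$(n-f)$ \ldots\ set containing $w$ as the writer's side'' and ``verify $\canreadcomm[not]{}{}$ between them via the symmetric partition property.'' But symmetry only gives $\canreadcomm[not]{P}{P'}$ and $\canreadcomm[not]{P'}{P}$ for the \emph{specific} witnessing sets; it says nothing about a new set formed by inserting $w$, and in general no such separated set exists.

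Concretely, take $n=5$ with two memories, $\mu_1$ accessible by $\{p_1,p_2,p_5\}$ and $\mu_2$ accessible by $\{p_3,p_4,p_5\}$ (no access restrictions). The system is $3$-partitionable via $P=\{p_1,p_2\}$ and $P'=\{p_3,p_4\}$, with $p_5$ in the middle. Since $\bothcanreadcomm{p_5}{p_i}$ for every $i$, \emph{every} size-$2$ set containing $p_5$ is $\leftrightarrow$-connected to every nonempty set, so your proposed writer-side set cannot be separated from anything. Indeed, here a regular SWMR register with writer $p_5$ is trivially implementable ($p_5$ writes the value to a register in each memory; each reader reads whichever memory it can access), tolerating any number of failures. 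Thus the impossibility genuinely fails for $w = p_5$, and no argument along the lines you sketch can close the gap for writers outside $P \cup P'$.
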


Since all the processes can write to a MWMR register,
we can show the following theorem ,
with the same proof as \theoremref{f-partitionable impossibility}.

\begin{theorem}
If a system is $f$-partitionable then there is no implementation of
a regular MWMR register in the presence of $f$ failures.
\end{theorem}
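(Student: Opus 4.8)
The plan is to reuse, essentially verbatim, the three-execution indistinguishability argument from the proof of \theoremref{f-partitionable impossibility}; the only new ingredient is that for a MWMR register the identity of the writer is not fixed, so we are free to pick it inside the set $P$. I would assume for contradiction that some algorithm implements a regular MWMR register tolerating $f$ failures. Since the system is $f$-partitionable, there are disjoint sets $P, P'$, each of size $n-f$, with $\canreadcomm[not]{P'}{P}$; letting $Q$ be the remaining processes we get $|Q| = 2f-n$ and $|P \cup Q| = |P' \cup Q| = f$, so the algorithm must terminate in every $(P' \cup Q)$-free execution and in every $(P \cup Q)$-free execution.

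Next I would fix any process $w \in P$ (this set is nonempty whenever $n-f \geq 1$) and let $w$ play the role of the writer --- legitimate precisely because every process is allowed to write to a MWMR register. Then I would build the same three executions: $\alpha_1$, a $(P' \cup Q)$-free execution in which $w$ performs a single $\textsc{Write}(v)$ with $v \neq v_0$, completing at time $t_w^e$; $\alpha_2$, a $(P \cup Q)$-free execution in which some $q \in P'$ performs a single $\textsc{Read}$ starting at $t_r^s > t_w^e$ and, seeing no write, returning $v_0$ at time $t_r^e$; and $\alpha_3$, which agrees with $\alpha_1$ on $[0, t_w^e]$ and with $\alpha_2$ on $[t_r^s, t_r^e]$, delays every message between $P$ and $P'$ until after $t_r^e$, and lets the processes in $Q$ take no steps. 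Exactly as in \theoremref{f-partitionable impossibility}, $q$ cannot distinguish $\alpha_3$ from $\alpha_2$: during its read it receives no message from a process in $P$, and since $\canreadcomm[not]{P'}{P}$ it cannot read any shared memory that $w$ writes to. Hence $q$ returns $v_0$ in $\alpha_3$ even though the unique write $\textsc{Write}(v)$ completed before the read was invoked, contradicting regularity.

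I do not expect a genuine obstacle here; the one point to verify is that the regularity specification of a MWMR register still rules out this behaviour, which it does, because in $\alpha_3$ the read overlaps no write and is immediately preceded by exactly one completed write, of value $v \neq v_0$, and so is obliged to return $v$. Equivalently, one can phrase the whole argument in a single line: a MWMR register implementation yields, for every process $w$, a SWMR register implementation with writer $w$, so the statement follows from \theoremref{f-partitionable impossibility} with the partition chosen so that $w \in P$.
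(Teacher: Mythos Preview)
Your proposal is correct and matches the paper's own treatment exactly: the paper simply states that the theorem follows ``with the same proof as \theoremref{f-partitionable impossibility},'' the point being precisely that in a MWMR register every process may write, so one is free to place the writer $w$ inside the set $P$. Your closing one-line reduction---a MWMR implementation yields a SWMR implementation for every writer, hence in particular for some $w\in P$---is the cleanest way to phrase it.
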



\section{Solving Other Problems in Mixed Systems}
\label{sec:tasks}

\subsection{Constructing Other Read/Write Registers}

The atomic SWMR register presented in the previous section
can be used as a basic building block for implementing
other shared-memory objects.
Recall that if a system is not $f$-partitionable
(i.e., $f\leq \fnonpartionable$),
a SWMR register can be implemented so that each operation
takes $O(1)$ time, $O(n)$ messages,
$O(\regsno)$ SWMR shared-memory registers
and $O(\readno)$ SWMR shared-memory accesses.
Given a shared-memory algorithm that uses $O(r)$ SWMR registers and
has $O(s)$ step complexity,
it can be simulated with $O(s)$ round-trips, $O(ns)$ messages,
and $O(\readno s)$ shared-memory accesses.
The simulation requires $O(\regsno r)$ SWMR shared-memory registers.
(Recall that $\regsno = \sum_{\text{process $p$}}|W_p|$ and
$\readno = \sum_{\text{process $p$}}{\sum_{\mu\in R_p}{|W_\mu|}}$.)

An atomic \emph{multi-writer multi-reader} (MWMR) register can be
built from atomic SWMR registers~\cite{VitanyiA86};
each read or write requires $O(n)$ round-trips,
$O(n^2)$ messages, $O(\regsno n)$ SWMR shared registers and $O(\alpha n)$ shared-memory accesses.

Atomic snapshots can also be implemented using
SWMR registers~\cite{AttiyaR98};
each scan or update takes $O(n\log n)$ round-trips,
$O(n^2\log n)$ messages, $O(\regsno n)$ SWMR shared registers
and $O(\readno n \log n)$ shared-memory accesses.


\subsection{Batching}
\label{sec:batching}

A simple optimization is \emph{batching} of read requests,
namely reading the registers of several processes simultaneously.
Batching is useful when each process replicates a register
for each other process---not for just one writer.
A process $p$ writes the value $v$ using {\sc{Write}$_p$}($v$).
In {\sc{Collect}}(), a process $p$ sends read requests
for all these registers together
instead of sending $n$ separate read requests
(for the registers of all processes), one after the other.
When a process $q$ receives the batched request from $p$,
it replies with a vector containing the values of all registers in
a single message, rather than sending them separately.
Process $p$ waits for vectors from $n-f$ processes,
and picks from them the latest value for each other process.
Finally, the reader does a write-back of this vector.
An invocation of {\sc{Collect}}() returns a vector $V$
with $n$ components, one for each process.
Each component contains a pair of a value with a sequence number.
For every process $p_i$,
$V[i]$ is the entry in the vector corresponding to $p_i$'s value.
A vector $V_1$ \emph{precedes} a vector $V_2$ if the sequence number
of each component of $V_1$ is smaller than or equal to the corresponding
component of $V_2$.
Pseudocode for the algorithm appears in Algorithm~\ref{alg:batching}.

\begin{algorithm}
	\caption{Batching, code for process $p$.}	
	\label{alg:batching}
	\begin{algorithmic}[1]
			\\ []{\bf Local Variables:}\\[]
			 {\textit{w-sqno}}: int, initially 0 \Comment{write sequence number}\\[]
			 {\textit{r-sqno}}: int, initially 0 \Comment{read sequence number}\\[]
			 \textit{val}: vector of size $n$, with all entries initially $0$ \\[]
			 \textbf{for each} $i\in[n]$: \\[]
			 \quad {\textit{last-sqno[$i$]}}: int, initially 0 \Comment{last write sequence number observed}
			  \\ []{\bf Shared Variables:}  \\[]
			  \textbf{for each} process $p$, memory $\mu\in{W_p}$ and $i\in[n]$: \\[]
			  \quad {\textit{R$_\mu$[p][$i$]}}: $\langle$int, int$\rangle$, initially $\langle0, v_0\rangle$
                    \Comment{writable by $p$ and readable by all processes that can read
                    from $\mu$, i.e., all the processes in $R_\mu$}\\[]
		 \hrulefill
			\makeatletter\setcounter{ALG@line}{0}\makeatother
			\item[] \blank{-.7cm} \textbf{write$_i$(\textit{sqno, v}):}
                \Comment{helper function that writes the value of process $p_i$
                            to the shared memory}
			\If {\textit{(sqno $>$ last-sqno[$i$])}}
			\State \textit{last-sqno[$i$]} = \textit{sqno}
			\ForEach {$\mu\in{W_p}$}
			    \Comment{write value and sequence number to every register $p$ can write to}
			\State \textit{R$_\mu$[p][$i$]} = \textit{$\langle$sqno, v}$\rangle$
			\EndFor
			\EndIf
			\item[] \vspace{-0.2cm}
			\item[] \blank{-.7cm} {\sc{Write}}$_p$(\textit{v}):
			\State \textit{w-sqno} = \textit{w-sqno} + 1 \Comment{increment the write sequence number}
			\State \textit{acks} =  {msg\_exchange}$\langle$W\textit{, w-sqno, v}$\rangle$
			\State return
			\item[] \vspace{-0.2cm}
			\State \textbf{Upon receipt of a $\langle$W\textit{, sqno, v}$\rangle$ message from process \textit{p$_i$}:}
			\State write$_i$(\textit{sqno, v})
			\State send $\langle$Ack-W\textit{, sqno}$\rangle$ to process \textit{p$_i$}
			\item[] \vspace{-0.2cm}
			\item[] \blank{-.7cm} {\sc{Collect}}():
			\State \textit{r-sqno} = \textit{r-sqno} + 1 \Comment{increment the read sequence number}
			\State \textit{set\_of\_tuples} =
                {msg\_exchange}$\langle$R\textit{, r-sqno, $\bot$}$\rangle$
            \ForEach{$i\in [n]$}
            \State \textit{set\_of\_tuples$_i$} = \textit{set\_of\_tuples[$i$]} \Comment{all the tuples received for process $p_i$}
            \State \textit{$\langle$seq$_i$, val$_i$}$\rangle$ = max\textit{(set\_of\_tuples$_i$)} \Comment{maximum \textit{$\langle$seq, val$\rangle$} recieved for process $p_i$}
            \State \textit{val[$i$]} = $\langle$\textit{seq$_i$, val$_i$}$\rangle$
            \EndFor
			\State \textit{acks} = {msg\_exchange}$\langle$WB\textit{, r-sqno, val}$\rangle$\Comment{write back}
			\State return \textit{val}
			\item[] \vspace{-0.2cm}
			\State \textbf{Upon receipt of a $\langle$R\textit{, r-sqno, -}$\rangle$ message from process \textit{q}:}
            \ForEach{$i\in [n]$}
            \State \textit{$\langle$w-seq$_i$, w-val$_i$}$\rangle$ =
                max\{\textit{$\langle$w-seq, w-val}$\rangle$ : $\mu\in{R_p\cap{W_q}}$ and \textit{R$_\mu$[q][$i$] = $\langle$w-seq, w-val}$\rangle$\} 
            \State \textit{val[$i$]} = \textit{$\langle$w-seq$_i$, w-val$_i$}$\rangle$
            \EndFor
			\State send $\langle$Ack-R\textit{, r-sqno, val}$\rangle$ to process \textit{q}
			\item[] \vspace{-0.2cm}
			\State \textbf{Upon receipt of a $\langle$WB\textit{, sqno, val}$\rangle$ message from process \textit{q}:}
			\ForEach{$i\in [n]$}
			\State $\langle$\textit{seq, v}$\rangle$ = \textit{val[$i$]}
            \State write$_i$(\textit{seq, v})
            \EndFor
			\State send $\langle$Ack-WB\textit{, sqno}$\rangle$ to process \textit{q}
		\end{algorithmic}
\end{algorithm}

Batching provides a \emph{regular collect}~\cite{AttiyaFG2002},
as stated in the next lemmas, which we bring without proof,
as they are similar to the SWMR register correctness proofs
(Section \ref{swmr-correct}).

\begin{lemma}
Let $t_2$ be the largest sequence number returned in a
read msg\_exchange for the value of process $p_i$ in a {\sc{Collect}}
operation by reader $r$,
and assume that the msg\_exchange starts after the completion of
a write msg\_exchange in a {\sc{Write}}$_{p_i}$ operation by process $p_i$
with sequence number $t_1$, then, $t_1$ $\leq$ $t_2$.
\end{lemma}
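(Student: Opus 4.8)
The plan is to mirror the proof of \lemref{msgexw} almost verbatim, tracking a single coordinate $i$ of the vector rather than a scalar value, using the fact that the batching algorithm performs the \emph{same} msg\_exchange pattern and therefore the same quorum intersection guarantee applies. Concretely, let $P$ be the set of processes that respond in the write msg\_exchange of the {\sc Write}$_{p_i}$ operation (the one carrying sequence number $t_1$), and let $P'$ be the set of processes that respond in the read msg\_exchange of the {\sc Collect} by reader $r$. By the \texttt{Until} condition in msg\_exchange (Line~\ref{lin:proposer22}), $|P|, |P'| \geq n-f$. Since the system is not $f$-partitionable, $\bothcanreadcomm{P}{P'}$, so there exist $p \in P$ and $p' \in P'$ with $\canreadcomm{p'}{p}$, witnessed by a memory $\mu \in W_p \cap R_{p'}$.

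Next I would argue, exactly as in \lemref{msgexw}, that after $p$ sends its write acknowledgment for $t_1$, the shared register $R_\mu[p][i]$ holds a pair whose sequence number is $\geq t_1$: either $t_1 > \textit{last-sqno}[i]$ and $p$ writes $\langle t_1, v\rangle$ into every $R_\mu[p][i]$ via \textsc{write}$_i$ before replying, or $\textit{last-sqno}[i] \geq t_1$, in which case a previous \textsc{write}$_i$ already stored a sequence number $\geq t_1$ into $R_\mu[p][i]$ (here one uses that these are SWMR registers and that $\textit{last-sqno}[i]$ is monotone, which is why the coordinate-$i$ bookkeeping is independent of the other coordinates). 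Then, when $p'$ handles the read message from $r$, it computes $\langle\textit{w-seq}_i, \textit{w-val}_i\rangle$ as the maximum over $\mu \in R_p \cap W_q$ of $R_\mu[q][i]$; since $p' $ reads $R_\mu[p][i]$ (with $q = p$) after $p$'s acknowledgment, $\textit{w-seq}_i \geq t_1$, and $p'$ sends this in its Ack-R reply to $r$. Finally, since $r$ takes the componentwise maximum over all $n-f$ responses for coordinate $i$, the returned sequence number $t_2$ satisfies $t_2 \geq \textit{w-seq}_i \geq t_1$.

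The only genuinely new point relative to \lemref{msgexw} is to check that nothing is lost by batching: each coordinate $i$ of the vector behaves like an independent SWMR register because \textsc{write}$_i$ guards its updates with a per-coordinate $\textit{last-sqno}[i]$ and writes only into $R_\mu[p][i]$, and the read/write-back handlers likewise operate coordinatewise; the shared quorum (the same $P$, $P'$) is reused across all coordinates, but the timing argument above only ever refers to coordinate $i$. So I do not expect any real obstacle—the main thing to be careful about is making the monotonicity/SWMR argument for $R_\mu[p][i]$ precise in the case $\textit{last-sqno}[i] \geq t_1$, namely that the value currently in $R_\mu[p][i]$ was placed there by some earlier invocation of \textsc{write}$_i(s, \cdot)$ with $s = \textit{last-sqno}[i] \geq t_1$, exactly as argued for $R_\mu[p]$ in the proof of \lemref{msgexw}. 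Given the paper's stated intent to omit this proof as "similar to the SWMR register correctness proofs," I would either reproduce the above in a few lines or simply cite the parallel with \lemref{msgexw}.
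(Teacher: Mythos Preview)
Your proposal is correct and takes exactly the approach the paper intends: the paper omits this proof entirely, stating only that it is ``similar to the SWMR register correctness proofs (Section~\ref{swmr-correct}),'' and your argument is precisely the coordinate-$i$ transcription of the proof of \lemref{msgexw}. The only thing to tidy is the brief notational tangle where you quote the handler code's local names $p,q$ alongside your own $p,p',r$ (the code's $p$ is your $p'$ and the code's $q$ is your $r$, so ``with $q=p$'' is not quite right); the substance---that $p'$ reads $R_\mu[p][i]$ after $p$'s acknowledgment and hence sees a sequence number $\geq t_1$---is unaffected.
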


\begin{lemma}
For every $1\leq i \leq n$, let $t_i$ be the largest sequence number returned
in a read msg\_exchange for the value of process $p_i$ in a {\sc{Collect}}
operation by reader $r$, and assume that the msg\_exchange starts
after the completion of a write back msg\_exchange writing the value \textit{val},
then \textit{seq$_i$} $\leq$ $t_i$, where $\langle$\textit{seq$_i$, -}$\rangle$ = \textit{val[$i$]}.
\end{lemma}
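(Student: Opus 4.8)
The plan is to mirror the structure of the two SWMR-register correctness lemmas (Lemma~\ref{lem:msgexw} and, implicitly, the write-back argument inside the atomicity proof), applied componentwise to the vector-valued registers. The statement to prove is that for each index $i$, if a {\sc Collect} by reader $r$ performs its read msg\_exchange \emph{after} a write-back msg\_exchange that wrote the vector \textit{val}, then the $i$-th sequence number $\textit{seq}_i$ stored in \textit{val[$i$]} is at most $t_i$, the largest sequence number returned in $r$'s read msg\_exchange for component $i$. So the write-back's knowledge about $p_i$ cannot be ``lost'' by a later collect.

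First I would fix the index $i$ and let $P$ be the set of $n-f$ processes that acknowledged the write-back msg\_exchange (the one that wrote \textit{val}), and $P'$ the set of $n-f$ processes that responded in $r$'s read msg\_exchange. Since the system is not $f$-partitionable, $\bothcanreadcomm{P}{P'}$, so in particular $\canreadcomm{P'}{P}$: there exist $p\in P$, $p'\in P'$ and a memory $\mu\in W_p\cap R_{p'}$. Next I would track the value of $\textit{R}_\mu[p][i]$. When $p$ processed the WB message carrying \textit{val}, it invoked $\textit{write}_i(\textit{seq}_i,\cdot)$; by the monotonicity check against $\textit{last-sqno}[i]$ and the fact that these are SWMR registers written only by $p$, after $p$ sends its Ack-WB the register $\textit{R}_\mu[p][i]$ holds a sequence number $\geq \textit{seq}_i$. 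Since $r$'s read msg\_exchange starts after the WB msg\_exchange completes, it starts after $p$ sent that acknowledgment; hence when $p'$ later handles $r$'s R message and reads $\textit{R}_\mu[p][i]$ (as part of the max over $\mu\in R_{p'}\cap W_q$, here $q=r$), it sees a sequence number $\geq \textit{seq}_i$, so it reports $\textit{w-seq}_i \geq \textit{seq}_i$ to $r$. Finally, $r$ takes the maximum over all responses for component $i$, yielding $t_i \geq \textit{w-seq}_i \geq \textit{seq}_i$, as claimed.

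One bookkeeping point deserves care, and I expect it to be the main (mild) obstacle: the msg\_exchange ordering hypothesis only says the read exchange \emph{starts} after the write-back exchange \emph{completes}; I must argue this implies $p'$'s read of $\textit{R}_\mu[p][i]$ happens in real time after $p$'s write of it during WB handling. This follows because $p$ writes $\textit{R}_\mu[p][i]$ before sending Ack-WB, the WB exchange completes only after collecting $n-f$ such acks including possibly $p$'s (more carefully: $p\in P$ means $p$ did ack, so $p$'s write precedes the WB exchange's completion), and $p'$ processes $r$'s R message only after $r$ sends it, which is after the WB exchange completes. Chaining these real-time inequalities gives the needed happens-before. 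The other subtlety is that, unlike the SWMR case where the writer is fixed, here the per-index register $\textit{R}_\mu[p][i]$ is written by $p$ on behalf of \emph{any} originating writer $p_i$ via $\textit{write}_i$; but since the slot $[i]$ is dedicated to $p_i$ and still SWMR in $p$, the same monotonicity argument goes through unchanged.
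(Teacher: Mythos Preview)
Your approach is correct and is precisely what the paper intends: the paper states this lemma without proof, saying only that it is ``similar to the SWMR register correctness proofs,'' and your argument is exactly the componentwise adaptation of the proof of Lemma~\ref{lem:msgexw} to the vector-valued registers of Algorithm~\ref{alg:batching}. One cosmetic quibble: your parenthetical ``as part of the max over $\mu\in R_{p'}\cap W_q$, here $q=r$'' is garbled, since taken literally that line would have $p'$ read $R_\mu[r][i]$ rather than $R_\mu[p][i]$; but this reflects a typo in the pseudocode (the prose description and the proof of Lemma~\ref{lem:msgexw} make clear the responder reads \emph{all} registers it can), and your actual argument uses the intended semantics, so the reasoning stands.
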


\begin{lemma}
Let \textit{V$_1$} be a vector returned by a {\sc{Collect}} operation $c_1$,
and let \textit{V$_2$} be a vector returned by a {\sc{Collect}} operation $c_2$,
such that $c_1$ returns before the invocation of $c_2$,
then $V_1$ precedes $V_2$.
\end{lemma}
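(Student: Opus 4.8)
The plan is to reduce the claim to the preceding lemma about a write-back msg\_exchange, in the same way that atomicity of the SWMR register followed from \lemref{msgexw}. The key observation is that every \textsc{Collect} operation performs a write-back of the vector it is about to return, and this write-back msg\_exchange completes before the \textsc{Collect} returns.

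First I would note that, since $c_1$ returns before $c_2$ is invoked, the read msg\_exchange of $c_2$ starts strictly after the completion of the write-back msg\_exchange of $c_1$, which writes exactly the vector $V_1$. Fix a component $i \in [n]$, and let $s_i$ be the sequence number stored in $V_1[i]$. Applying that write-back lemma, with the write-back msg\_exchange taken to be the one performed by $c_1$ and the \textsc{Collect} operation taken to be $c_2$, yields $s_i \le t_i$, where $t_i$ is the largest sequence number returned for $p_i$ in the read msg\_exchange of $c_2$.

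It then remains to observe that the $i$-th component of $V_2$ is set in $c_2$ to the componentwise maximum over the responses of that same read msg\_exchange, so the sequence number stored in $V_2[i]$ is exactly $t_i$; the subsequent write-back performed by $c_2$ does not alter the returned vector. Hence the sequence number of $V_1[i]$ is at most that of $V_2[i]$, and since $i \in [n]$ was arbitrary, $V_1$ precedes $V_2$. I do not anticipate a genuine obstacle: all the content is in the cited lemma, whose proof in turn mirrors \lemref{msgexw} and uses the hypothesis that the system is not $f$-partitionable to force an overlap between the responders of the write-back and the responders of the read. The only points needing care are that $c_1$'s write-back has truly completed before $c_2$ begins (immediate from the assumption that $c_1$ returns before $c_2$ is invoked) and that $V_2$ is the vector assembled from $c_2$'s read responses rather than something modified afterwards.
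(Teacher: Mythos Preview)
Your proposal is correct and matches the approach the paper indicates: the paper states this lemma without proof, noting only that it is ``similar to the SWMR register correctness proofs,'' and your reduction to the preceding write-back lemma is exactly the batching analog of how \lemref{msgexw} is used in Section~\ref{swmr-correct}. The two care points you flag (that $c_1$'s write-back completes before $c_2$ begins, and that $V_2$ is the vector assembled from $c_2$'s read responses) are indeed the only things to check, and both are immediate from the pseudocode.
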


Batching reduces the number of round-trips and messages,
and shared-memory registers and accesses,
but increases the size of messages and registers.
With batching,
an operation on a MWMR register requires $O(1)$ round-trips,
$O(n)$ messages, $O(\regsno)$ SWMR shared registers and $O(\alpha)$ shared-memory accesses,
when each process saves all the writers values
in a single SWMR register.
Batching can also be applied to atomic snapshots,
so that each scan or update takes $O(\log n)$ round-trips,
$O(n\log n)$ messages, $O(\regsno)$ SWMR shared registers and
$O(\readno \log n)$ shared-memory accesses.
%
%
These results are summarized in Table~\ref{tab:sim-ds}.

\begin{table}
    \centering
    \begin{tabular}{ |c|c|c|c|c| }
\hline
 & time & no.~messages & no.~registers & shared-memory accesses \\
\hline\hline
SWMR atomic register & $O(1)$ & $O(n)$ & $O(\regsno)$ & $O(\readno)$ \\
\hline
MWMR atomic register~\cite{VitanyiA86} & $O(1)$ & $O(n)$ & $O(\regsno)$ & $O(\readno)$ \\
\hline
atomic snapshot~\cite{AttiyaR98} & $O(\log n)$ & $O(n\log n)$ & $O(\regsno)$ & $O(\readno \log n)$ \\
\hline
\end{tabular}
\caption{Complexity bounds for simulated data structures (with batching)
in non-partitionable systems}
\label{tab:sim-ds}
\end{table}

Regular collects can be used in the following building block,
where a process repeatedly call collect, and returns a vector of
values if it has received it \emph{twice} (in two consecutive collects).
Two vectors are the same if they contain the same sequence numbers
in each component.
Process $p$ can write the value $v$ using procedure {\sc{Write}$_p$}($v$),
and repeatedly double collect all the processes current values using
the procedure \bb().
An invocation of \bb() returns a vector $V$
with $n$ components, one for each process.
Each component contains a pair of a value with a sequence number.
Pseudocode for this procedure appears in
Algorithm~\ref{alg:building-block}.

\begin{algorithm}[tb]
	\caption{Building block implementation, code for process $p$.}	
	\label{alg:building-block}
	\begin{algorithmic}[1]
			\item[] \blank{-.7cm} \bb():
			\State \textit{V$_1$} = {\sc{Collect}}()
			\While{\textit{true}}
			\State \textit{V$_2$} = \textit{V$_1$}; \textit{V$_1$} = {\sc{Collect}}()
			\If{\textit{V$_1$} = \textit{V$_2$}}
			\State return \textit{V$_1$}
			\EndIf
			\EndWhile
		\end{algorithmic}
\end{algorithm}

The next lemma is easy to prove if writes are atomic, but it holds
even without this assumption.

\begin{lemma}\label{lem:global-order}
If $V_1$ and $V_2$ are vectors returned by two pairs of successful
double collects by processes $p_{i_1}$ and $p_{i_2}$, respectively,
then either $V_1$ precedes $V_2$ or $V_2$ precedes $V_1$.
\end{lemma}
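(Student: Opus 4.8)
The plan is to derive the lemma from the last of the three \textsc{Collect} lemmas stated above, namely: if one \textsc{Collect} operation returns before another is invoked, then the vector returned by the first precedes the vector returned by the second. Call this the \emph{ordering property}; since it was obtained without assuming atomic writes, the present lemma will hold in the same generality. Write the successful double collect that produced $V_1$ as an ordered pair $(c_1^a,c_1^b)$ of consecutive \textsc{Collect} operations performed inside a single \bb() invocation by $p_{i_1}$, so that $c_1^b$ is invoked only after $c_1^a$ returns and both return vectors whose sequence numbers agree, componentwise, with those of $V_1$; likewise write the one producing $V_2$ as $(c_2^a,c_2^b)$ by $p_{i_2}$. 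Recall that ``$V$ precedes $V'$'' means $V[k].\textit{seq}\le V'[k].\textit{seq}$ for every component $k$, so if $V_1$ does not precede $V_2$ there is a component $j$ with $V_1[j].\textit{seq}>V_2[j].\textit{seq}$, and if $V_2$ does not precede $V_1$ there is a component $i$ with $V_2[i].\textit{seq}>V_1[i].\textit{seq}$.

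I would then argue by contradiction: assume neither vector precedes the other and fix components $i$ and $j$ as above. From component $i$: were $c_2^a$ to return before $c_1^b$ is invoked, the ordering property applied to $c_2^a$ and $c_1^b$ would give that $V_2$ precedes $V_1$, contradicting the choice of $i$; hence $c_2^a$ does not return before $c_1^b$ is invoked. Symmetrically, from component $j$: were $c_1^a$ to return before $c_2^b$ is invoked, the ordering property applied to $c_1^a$ and $c_2^b$ would give that $V_1$ precedes $V_2$, contradicting the choice of $j$; hence $c_1^a$ does not return before $c_2^b$ is invoked. Combining these with the fact that within each double collect the second \textsc{Collect} is invoked strictly after the first one returns, the invocations and returns of $c_1^a,c_1^b,c_2^a,c_2^b$ would satisfy
\[
\mathrm{inv}(c_2^b)\ \le\ \mathrm{ret}(c_1^a)\ <\ \mathrm{inv}(c_1^b)\ \le\ \mathrm{ret}(c_2^a)\ <\ \mathrm{inv}(c_2^b),
\]
that is, these events would be cyclically ordered in time, which is impossible. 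Hence one of $V_1,V_2$ precedes the other.

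The proof is short, and I do not expect a genuine obstacle; the only points requiring care are the bookkeeping of the ``returns-before-is-invoked'' relations and applying the ordering property in the correct direction for each of the two components (swapping the roles of the two double collects between the two cases). The case $i_1=i_2$, where both double collects are executed by the same process and are therefore sequential, is subsumed: then $c_1^b$ returns before $c_2^a$ is invoked (or vice versa), and a single application of the ordering property already yields comparability.
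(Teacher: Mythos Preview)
Your proof is correct and uses essentially the same idea as the paper: both hinge on the ordering property of \textsc{Collect} together with the temporal fact that one of the two first collects must return before the other side's second collect is invoked. The paper states that temporal disjunction directly and then applies the ordering property once, whereas you derive it by contradiction; the content is the same.
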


\begin{proof}
Note that either
the first collect of $p_{i_1}$ completes
before the second collect of $p_{i_2}$ starts or
the first collect of $p_{i_2}$ completes
before the second collect of $p_{i_1}$ starts.
We consider the first case, as the second one is symmetric.
Consider the sequence numbers $s_1$ and $s_2$ in $V_1[j]$ and $V_2[j]$,
for any process $p_j$.
Since the first collect of $p_{i_1}$ completes before
the second collect of $p_{i_2}$ starts,
the regularity of collect ensures that $s_1 \leq s_2$.
\end{proof}


This building block may not terminate (even if the system is not
$f$-partitionable), due to continuous writes.
However, if two consecutive collects are not equal then some sequence
number was incremented, i.e., a write by some process is in progress.

\subsection{Approximate Agreement}

In the \emph{approximate agreement} problem with parameter $\epsilon > 0$,
all processes start with a real-valued input and must decide on an output value,
so any two decision values are in distance at most $\epsilon$
from each other (\emph{agreement}),
and any decision value is in the range of all initial values (\emph{validity}).

There is a wait-free algorithm for the approximate agreement problem
in the shared-memory model, which uses only SWMR registers \cite{AttiyaLS94}.
This algorithm can be simulated if the system is not $f$-partitionable,
and at most $f$ processes fail.
Similarly to randomized consensus,
it can be shown that this problem is unsolvable in partitionable systems.

\begin{theorem}
\label{theorem:approx-agreement-impossibilty}
If a system is $f$-partitionable then approximate agreement is
unsolvable in the presence of $f$ failures.
\end{theorem}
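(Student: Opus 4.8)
The plan is to mimic the partition-and-merge argument of \theoremref{f-partitionable impossibility}, using the \emph{validity} and \emph{agreement} requirements of approximate agreement in place of the register semantics. Suppose, towards a contradiction, that some algorithm solves approximate agreement with parameter $\epsilon$ in the presence of $f$ failures. Since the system is $f$-partitionable, fix two disjoint sets $P$ and $P'$, each of size $n-f$, with $\bothcanreadcomm[not]{P}{P'}$; without loss of generality $\canreadcomm[not]{P'}{P}$ (otherwise swap the roles of $P$ and $P'$ throughout). Let $Q$ be the remaining processes; exactly as in the register proof, $|P\cup Q| = |P'\cup Q| = f$. Fix a real number $K>\epsilon$.

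First I would build the two ``pure'' executions. Let $\alpha_1$ be a $(P'\cup Q)$-free execution in which every process in $P$ starts with input $0$. Since at most $f$ processes crash and the algorithm is $f$-resilient, every process in $P$ decides, and by validity it decides $0$; let $t_1$ be a time by which every process in $P$ has decided, and have all processes in $P$ take no further steps after $t_1$. Symmetrically, let $\alpha_2$ be a $(P\cup Q)$-free execution in which every process in $P'$ starts with input $K$; every process in $P'$ decides, and by validity it decides $K$.

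Next I would merge them into $\alpha_3$. In $\alpha_3$ the inputs are $0$ for $P$, $K$ for $P'$, and arbitrary for $Q$; processes in $Q$ take no steps. During $[0,t_1]$ only the processes in $P$ take steps, exactly as in $\alpha_1$: this is consistent because a process in $P$ receives no message from $P'\cup Q$ (they are silent) and reads no value they wrote (they have written nothing yet), so its local view is identical to $\alpha_1$; hence every process in $P$ decides $0$ and then halts by $t_1$. After time $t_1$ the processes in $P'$ run exactly as in $\alpha_2$, with every message from $P$ to $P'$ delayed until after $P'$ has decided. This is consistent because a process in $P'$ receives no message from $Q$ (crashed) nor from $P$ (delayed), and---crucially---$\canreadcomm[not]{P'}{P}$ means it cannot read any shared memory that a process in $P$ writes; so its local view is identical to $\alpha_2$ and it decides $K$. (The crash pattern is legal: only the $\le f$ processes of $Q$ crash, and every other process that stops taking steps in $\alpha_3$ has already decided.) Thus in the single execution $\alpha_3$ some process decides $0$ and another decides $K$, and $|0-K| = K > \epsilon$ violates agreement.

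The step I expect to be the main obstacle---and the only place this departs from the register argument---is that the partition blocks shared-memory communication in only one direction ($\canreadcomm[not]{P'}{P}$, not necessarily $\canreadcomm[not]{P}{P'}$). This is handled by \emph{ordering} $P$ before $P'$: while $P$ runs it cannot be influenced by $P'$ simply because $P'$ has not yet moved, and once $P'$ runs it is genuinely ignorant of $P$ because $Q$ is crashed, the $P\to P'$ messages are delayed, and $P$'s shared-memory writes are unreadable by $P'$. Everything else is routine indistinguishability bookkeeping, together with the observation that validity forces the two decisions to be exactly the two (uniform) input values.
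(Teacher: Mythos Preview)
Your proposal is correct and follows essentially the same partition-and-merge argument as the paper: two one-sided executions with inputs $0$ and $K>\epsilon$ (the paper uses $2\epsilon$), merged sequentially so that $P$ decides first and then $P'$ runs blind to $P$'s writes, yielding decisions that violate agreement. Your discussion of why the one-directional assumption $\canreadcomm[not]{P'}{P}$ suffices---order $P$ before $P'$ so that $P$ is unaffected simply because $P'$ has not moved---is exactly how the paper handles it, just stated more explicitly.
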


\begin{proof}
Assume, by way of contradiction,
that there is an approximate agreement algorithm.
Since the system is $f$-partitionable,
there are two disjoint sets of processes $P$ and $P'$,
each of size $n-f$, such that $\bothcanreadcomm[not]{P}{P'}$.
Assume, without loss of generality, that $\canreadcomm[not]{P'}{P}$.
This means that any process in ${P'}$ cannot read memories
that processes in $P$ can write to.
Let $Q$ be the processes not in $P \cup P'$.
Since $|P|, |P'| = n-f$,
it follows that $| P \cup Q | = | P' \cup Q | = f$.

To prove the theorem, we construct three executions.

The first execution, $\alpha_1$, is a $(P' \cup Q)$-free execution,
in which all processes in $P$ have initial value $0$.
Since at most $f$ processes fail in the execution and
the implementation can tolerate $f$ failures,
all processes in $P$ must eventually terminate by some time $t_1$.
By validity, every process in $P$ decides $0$.

The second execution, $\alpha_2$, is a $(P \cup Q)$-free execution,
in which all processes in $P'$ have initial value $2\epsilon$.
Since at most $f$ processes fail in the execution and
the implementation can tolerate $f$ failures,
all processes in $P'$ must eventually terminate by some time $t_2$.
By validity, every process in $P$ decides $2\epsilon$.

The third and final execution, $\alpha_3$,
combines $\alpha_1$ and $\alpha_2$.
The initial value of processes in $P$ is $0$,
and the initial value of processes in $P'$ is $2\epsilon$.
Processes in $Q$ have arbitrary initial values,
and they take no steps in $\alpha_3$.
The execution is identical to $\alpha_1$ from time $0$ until time $t_1$,
and to $\alpha_2$ from this time until time $t_1 + t_2$.
All messages sent between processes in $P$ and processes in $P'$
are delivered after time $t_1 + t_2$.

Since processes in $P'$ do not take steps in $\alpha_3$ until time $t_1$,
all processes in $P$ decides $0$ until this time,
as in execution $\alpha_1$.
Processes in $P'$ cannot communicate with processes in $P$,
either by messages or read what they wrote to the shared memory,
therefore all processes in $P'$ decides $2\epsilon$,
as in execution $\alpha_2$.
Since $|0-2\epsilon|>\epsilon$, this violates the agreement property.
\end{proof}

\subsection{Renaming}
In the \emph{$M$-renaming} problem, processes start with unique
\emph{original} names from a large namespace $\{1,...,N\}$,
and the processes pick distinct \emph{new} names from
a smaller namespace $\{1,...,M\}$ ($M < N$).
To avoid a trivial solution, in which a process $p_i$ picks its
index $i$ as the new name, we require \emph{anonymity}:
a process $p_i$ with original name $m$ performs the same as
process $p_j$ with original name $m$.

Employing the SWMR register simulation in a $(2n-1)$-renaming
algorithm~\cite{AttiyaF01} yields
an algorithm that requires $O(n \log n)$ round-trips,
$O(n^2 \log n)$ messages,
$O(\regsno n^4)$ shared registers
and $O(\readno n \log n)$ shared-memory accesses.
The number of registers can be reduced to $O(\regsno)$,
at the cost of increasing their size.

This algorithm assumes that the system is not $f$-partitionable
and at most $f$ processes fail.
The next theorem shows that this is a necessary condition.

\begin{theorem} \label{theorem:renaming-impossibility}
If a system is $f$-partitionable then renaming is unsolvable
in the presence of $f$ failures.
\end{theorem}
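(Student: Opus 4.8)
The plan is to mimic the structure of the impossibility proofs for regular registers (\theoremref{f-partitionable impossibility}) and approximate agreement (\theoremref{approx-agreement-impossibilty}): a partition argument combining two executions in which the two sides of the partition are forced to decide incompatibly. Since the system is $f$-partitionable, fix two disjoint sets $P$ and $P'$, each of size $n-f$, with $\bothcanreadcomm[not]{P}{P'}$, and let $Q$ be the remaining $f - (n - |P \cup P'|)$... more precisely let $Q$ be the processes not in $P \cup P'$, so $|P \cup Q| = |P' \cup Q| = f$; hence the renaming algorithm must terminate in every $(P'\cup Q)$-free execution and every $(P \cup Q)$-free execution. Without loss of generality assume $\canreadcomm[not]{P'}{P}$, so no process in $P'$ can read a memory that a process in $P$ writes to.

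First I would run $\alpha_1$, a $(P'\cup Q)$-free execution in which the $n-f$ processes of $P$ are assigned some fixed profile of original names; by termination and correctness they all decide, by some time $t_1$, distinct new names in $\{1,\dots,M\}$, occupying a set $S_1$ of at least $n-f$ names. Next I would run $\alpha_2$, a $(P\cup Q)$-free execution in which the processes of $P'$ are assigned a carefully chosen profile of original names, so that by anonymity they are forced to decide exactly the same set of new names $S_1$ (this is the key use of the anonymity requirement: give $P'$ the same multiset of original names that $P$ had, matched up appropriately). They all decide by time $t_2$. Finally I would merge them into $\alpha_3$: $P$ gets its names, $P'$ gets its names, $Q$ takes no steps, $\alpha_3$ agrees with $\alpha_1$ up to $t_1$ and with $\alpha_2$ from $t_1$ to $t_1+t_2$, and all messages between $P$ and $P'$ are delayed past $t_1+t_2$. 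Since $P'$ takes no steps before $t_1$, processes in $P$ behave as in $\alpha_1$; since $P'$ can neither receive messages from $P$ nor read what $P$ writes, processes in $P'$ behave as in $\alpha_2$. So both a process in $P$ and a process in $P'$ decide the same new name, violating uniqueness of new names.

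The main obstacle is the anonymity step: I must make the two sides collide on a new name, not merely show each side alone uses $\geq n-f$ names. Anonymity says a process's behavior depends only on its original name, so if I choose the original-name assignment of $P'$ to be a relabeling of the original-name assignment of $P$ by a bijection that is order/identity-respecting in the way anonymity demands, then each process in $P'$ runs the same local algorithm, against the same message/memory interface (empty, in the partitioned run), as its counterpart in $P$ did in $\alpha_1$, and hence outputs the same new name. One subtlety to handle carefully: original names must be globally unique in any single execution, but $\alpha_1$ and $\alpha_3$ are different executions, so $P$ and $P'$ may reuse the same original names across the two executions, and inside $\alpha_3$ uniqueness of original names is only needed among the processes that actually interact — which $P$ and $P'$ do not, by construction. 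I would state this explicitly to avoid the objection. The rest (timing, message delays, $Q$ idle) is routine and parallels the earlier proofs.
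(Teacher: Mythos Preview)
Your partition setup and merging argument are fine, but there is a genuine gap in the step where you give $P'$ the \emph{same} original names that $P$ had and then merge. In the combined execution $\alpha_3$, each process in $P'$ then shares its original name with some process in $P$. The renaming problem, as stated, takes as a precondition that all processes start with \emph{distinct} original names; this is a condition on the inputs, not on who interacts with whom. Your claim that ``uniqueness of original names is only needed among the processes that actually interact'' is simply not part of the specification: an execution in which two processes have the same original name is not a legal instance of renaming, and the algorithm is under no obligation to produce distinct new names there. So $\alpha_3$, as you describe it, yields no contradiction.

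The paper's proof closes exactly this gap with a pigeonhole step that you are missing. For a fixed round-robin schedule on $P$, each vector $I$ of $n-f$ original names determines which new name each $p_{i_j}$ outputs in $\alpha(I,P)$. The new namespace $\{1,\dots,M\}$ is fixed, while the original namespace $\{1,\dots,N\}$ can be taken large enough that among many pairwise \emph{disjoint} name vectors there must exist two, $I_1$ and $I_2$, for which some index $j$ outputs the same new name $r$ in both $\alpha(I_1,P)$ and $\alpha(I_2,P)$. Now run $\alpha_1=\alpha(I_1,P)$ and $\alpha_2=\alpha(I_2,P')$; by anonymity $p'_{i_j}$ outputs $r$ in $\alpha_2$. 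In the merged $\alpha_3$, $P$ uses $I_1$ and $P'$ uses $I_2$, which are disjoint, so all original names in $\alpha_3$ are distinct and the execution is legal. The rest of your indistinguishability argument then goes through and gives two processes outputting $r$.
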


\begin{proof}
Assume, by way of contradiction, that there is a renaming algorithm.
Since the system is $f$-partitionable,
there are two disjoint sets of processes $P$ and $P'$, each of size $n-f$,
such that $\canreadcomm[not]{P'}{P}$.
Denote $P=\{p_{i_1},...,p_{i_{n-f}}\}$ and $P'=\{p'_{i_1},...,p'_{i_{n-f}}\}$.
Let $Q$ be the set of processes not in $P \cup P'$.
Since $|P|, |P'| = n-f$,
we have that $| P \cup Q | = | P' \cup Q | = f$.

Given a vector $I$ of $n-f$ original names, denote by $\alpha(I,P)$ the
$P$-only execution in which processes in $P$ have original names $I$:
processes in ($P'\cup Q$) crash and take no step,
and processes in $P$ are scheduled in round-robin.
Since at most $f$ processes fail in $\alpha(I,P)$,
eventually all processes in $P$ pick distinct new names,
say by time $t(I)$.
Note that by anonymity, the same names are picked in the execution
$\alpha(I,P')$, in which $p'_{i_j}$ starts with the same original name
as $p_{i_j}$ and takes analogous steps.

Consider $\alpha(I_i,P)$, for any possible set of original names.
The original name space can be picked to be big enough to ensure
that for two \emph{disjoint} name assignments, $I_1$ and $I_2$,
some process $p_{i_j}\in P$ decides the same new name $r$ in the executions
$\alpha(I_1,P)$ and $\alpha(I_2,P)$.

Denote $\alpha_1 = \alpha(I_1,P)$ and $\alpha_2=\alpha(I_2,P')$,
namely, the execution in which processes in $P'$ replace the
corresponding processes from $P$.
The anonymity assumption ensures that $p_{i_j}'$ decides on $r$,
just as $p_{i_j}$ decides on $r$ in $\alpha(I_1,P)$ and $\alpha(I_2,P)$.

The execution $\alpha_3$ combines $\alpha_1$ and $\alpha_2$, as follows.
Processes in $Q$ take no steps in $\alpha_3$.
The original names of processes in $P$ are $I_1$,
and original names of processes in $P'$ are $I_2$.
The execution is identical to $\alpha_1$ from time 0 until time $t(I_1)$,
and to $\alpha_2$ from this time until time $t(I_1)+t(I_2)$.
All messages sent from processes in $P$ to processes in $P'$
and from processes in $P'$ to processes in $P$ are delivered after
time $t(I_1)+t(I_2)$.

In $\alpha_3$,
processes in $P$ do not receive messages from processes in $P' \cup Q$.
Furthermore, $\canreadcomm[not]{P'}{P}$; i.e., processes in $P'$
cannot read what processes in $P$ wrote to the shared memory.
Hence, $\alpha_3$ is indistinguishable to $p_{i_j}$ from $\alpha_1$,
and hence, it picks new name $r$.
Similarly, $\alpha_3$ is indistinguishable to $p'_{i_j}$ from $\alpha_2$,
and hence, it also picks new name $r$,
which contradicts the uniqueness of new names.
\end{proof}

\subsection{Consensus}
\label{sec:consensus}

In the \emph{consensus} problem, a process starts with an
input value and decides on an output value,
so that all processes decide on the same value (\emph{agreement}),
which is the input value of some process (\emph{validity}).

With a standard \emph{termination} requirement, it is well known
that consensus cannot be solved in an asynchronous system~\cite{FLP}.
This result holds whether processes communicate through shared memory
or by message passing, and even if only a single process fails.
However, consensus can be solved if the termination condition is
weakened, either to be required only with high probability
(\emph{randomized consensus}),
or to hold when it is possible to eventually detect failures
(using a \emph{failure detector}),
or to happen only under fortunate situations.

There are numerous shared-memory randomized consensus algorithms,
which rely on read / write registers,
or objects constructed out of them.
Using these algorithms together with linearizable register implementations
is not obvious since linearizability does not preserve
\emph{hyperproperties}~\cite{GolabHW2011,AttiyaE2019}.
It has been shown~\cite{HadzilacosHT2020arxiv}
that the ABD register implementation~\cite{AttiyaBD1995}
is not \emph{strongly linearizable}~\cite{GolabHW2011}.
This extends to the mixed-model register implementations,
as ABD is a special case of them.

Hadzilacos et al.~\cite{HadzilacosHT2020regular} have proved that
the simple randomized consensus algorithm of~\cite{AspnesH1990}
works correctly with \emph{regular} registers,
and used it to obtain consensus in m\&m systems~\cite{HadzilacosHT2020arxiv}.
Their algorithm inherits exponential complexity from the
simple algorithm of~\cite{AspnesH1990},
which employs independent coin flips by the processes.

A \emph{weak shared coin} ensures that,
for every value $v\in\{-1,1\}$,
all processes obtain the value $v$ with probability $\delta > 0$;
$\delta$ is called the \emph{agreement parameter} of the coin.
Using a weak shared coin with a constant agreement parameter in the
algorithm of~\cite{AspnesH1990},
yields an algorithm with constant number of rounds.
Thus, randomized consensus inherits the complexity of the
weak shared coin.

Here, we explain how to use \bb()
to emulate the weak shared coin of~\cite{AspnesH1990},
following~\cite{BarNoyD1993}.
This holds when there are at most $f$ failures,
if the system is not $f$-partitionable.

\begin{algorithm}[t]
	\caption{Weak shared coin, based on \cite{AspnesH1990}}	
	\label{alg:coin}
	\begin{algorithmic}[1]
			\\ []{\bf Local Variables:}\\[]
			 {\textit{my-counter}}: int, initially 0 \\[]
			 $V$: vector of size $n$, with all entries initially $0$ \\[]
		 \hrulefill
			\makeatletter\setcounter{ALG@line}{0}\makeatother
			\item[] \blank{-.7cm} {\sc{Coin}}() --- \textbf{Code for process $p$:}
			\While{\textit{true}}
			\State \textit{my-counter = my-counter} + flip()
			\State {\sc{Write}}$_p$(\textit{my-counter})
			\State $V$ = \bb()
			\If {$\text{sum}(V) \geq c\cdot n$} {return 1}
			\ElsIf {$\text{sum}(V) \leq -c\cdot n$} {return -1} \label{lin:tail}
			\EndIf
			\EndWhile
		\end{algorithmic}
\end{algorithm}	

In \algoref{coin},
a process flips a coin using a local function flip(),
which returns the value 1 or -1, each with probably 1/2.
Invoking flip() is a single atomic step.
After each flip, a process writes its outcome in an individual
cumulative sum.
Then it calls \bb() to obtain a vector $V$
with the individual cumulative sums of all processes.
(We assume that the initial value in each component is $0$.)
The process then checks the absolute value of the total sum of the
individual cumulative sums, denoted sum($V$).
If it is at least $c\cdot n$ for some constant $c>1$,
then the process returns its sign.

Intuitively, the only way the adversary can create disagreement on
the outcome of the shared coin is by preventing as many processors
as possible to move the counter in the unwanted direction.
We will show that the adversary cannot ``hide'' more than
$n-1$ coin flips.
(This was originally proved when processes use atomic
writes~\cite{AspnesH1990};
here, we show it holds even when writes are not atomic.)
Therefore, after the cumulative sum is big or small enough the
adversary can no longer affect the outcome of the shared coin,
and cannot prevent the processes from terminating.


Fix an execution $\exec$ of the weak shared coin.
Let $H$ and $T$ be the number of 1 and -1 (respectively) flipped by
all processes after some prefix of the execution.
These numbers are well-defined since the local coin flips are atomic.

\begin{lemma}\label{lem:ah-15}
If $H-T < -(c+1)\cdot n$ (respectively, $H-T>(c+1)\cdot n$) after
some prefix $\exec'$ of the execution,
then a process that invokes \bb() in $\exec'$
and completes it, returns $-1$ (respectively, 1) from {\sc{Coin}}().
\end{lemma}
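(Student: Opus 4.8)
The plan is to show that once the global flip imbalance $H-T$ is more negative than $-(c+1)n$, any vector $V$ returned by a completed \bb() must have $\text{sum}(V) \le -cn$, so Line~\ref{lin:tail} fires and the process returns $-1$; the case $H-T>(c+1)n$ is symmetric. The key point is to bound the gap between the ``true'' global sum $H-T$ at the time the underlying double-collect is taken and the value $\text{sum}(V)$ actually observed, and to argue this gap is at most $n-1$ in absolute value, which is where the ``$+n$'' slack in the hypothesis comes from (one extra $n$ beyond the $cn$ threshold, minus one).

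First I would fix a process that invokes \bb() within the prefix $\exec'$ and completes it, and let $V_1=V_2$ be the common vector returned by its successful pair of consecutive collects; call them $c_1$ and $c_2$. For each process $p_j$, the component $V[j]$ records a cumulative-sum value that $p_j$ wrote at some earlier point, namely the value $p_j$ had after some number of its own flips. Because each process's cumulative sum changes by exactly $\pm 1$ per flip and is written after each flip (Algorithm~\ref{alg:coin}, Lines 2--3), the discrepancy between $V[j]$ and $p_j$'s cumulative sum at the instant $c_2$ starts is at most the number of flips $p_j$ performed but had not yet reflected in a write observed by $c_1$. The standard Aspnes--Herlihy argument is that the double collect pins this down: since $c_1$ and $c_2$ return the same value for $p_j$, no new write by $p_j$ was observed between them, so at most one ``in-flight'' flip per process can be hidden — the one whose write is concurrent with the collects. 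Summing over the $n$ processes, and noting the reader's own contribution is not hidden from itself, gives $\bigl|\,\text{sum}(V) - (H'-T')\,\bigr| \le n-1$, where $H'-T'$ is the global imbalance at the linearization point of $c_2$.

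Next I would combine this with monotonicity in time: since $c_2$ starts within $\exec'$ — actually one must be slightly careful and take the relevant point to be no later than the end of $\exec'$, using that flips only accumulate — the hypothesis $H-T<-(c+1)n$ (applied at the appropriate prefix, or using that $H-T$ is taken at the end of $\exec'$ and $c_2$ completes within it) yields $H'-T' \le -(c+1)n$ up to the same slack, hence $\text{sum}(V) \le -(c+1)n + (n-1) < -cn$. Therefore the test on Line~\ref{lin:tail} succeeds and the process returns $-1$ from {\sc{Coin}}(). The symmetric inequality handles the $+1$ case.

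The main obstacle is the bookkeeping in the second step: the lemma statement quantifies $H-T$ over ``some prefix $\exec'$'' while the double collect's linearization point need not coincide with the end of $\exec'$, so I need the monotonicity observation that extending the prefix only makes $H-T$ drift by at most the number of intervening flips, and I must make sure the ``at most $n-1$ hidden flips'' bound is stated relative to the right instant. The other delicate point — and the reason the parenthetical remark about non-atomic writes matters — is justifying the ``one hidden flip per process'' claim without assuming writes are atomic; this should follow from \lemref{global-order} and the regularity of {\sc{Collect}}() (Section~\ref{sec:batching}), which guarantee that equal double-collected sequence numbers really do certify that no fresher write was linearized in between, exactly as in the atomic case.
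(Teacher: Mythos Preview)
Your proposal has a genuine gap in the second step. You correctly observe that for a successful double collect returning $V$, the gap $|\text{sum}(V)-(H'-T')|$ is at most about $n$, where $H'-T'$ is the imbalance at (roughly) the time of the collect. But the lemma's hypothesis concerns $H-T$ at the end of $\exec'$, not at the time of the collect, and you give no bound on the drift of $H-T$ between these two instants. Your ``monotonicity observation'' says the drift is at most the number of intervening flips---but that number is exactly what is unbounded a priori: if many $+1$'s are flipped after $\exec'$, $H-T$ can climb back above $-cn$ before the successful double collect happens. (Relatedly, even if the \bb{}() call is invoked inside $\exec'$, \bb{}() is a while loop and its successful pair of collects can occur arbitrarily late, so the assumption ``$c_2$ starts within $\exec'$'' is not justified.)

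The paper closes this gap with an inductive argument that you do not have. It orders the processes by when their \bb{}() returns after $\exec'$ and proves inductively that each returns $-1$ and therefore \emph{stops writing}. The key structural fact is that every write is immediately followed by a \bb{}() call, so between the end of $\exec'$ and the return of the $i$-th such \bb{}(), each process contributes at most one write: processes $p_{j_1},\dots,p_{j_{i-1}}$ because (by induction) they already decided and stopped, and every other process because its post-$\exec'$ \bb{}() has not yet returned. This caps the total number of writes after $\exec'$ at $n$, and the paper then compares $V_i[k]$ directly to $p_k$'s counter \emph{at the end of $\exec'$} (not at some later instant), obtaining $\text{sum}(V_i)<-cn$. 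Your two-stage comparison could in principle be salvaged, but only after establishing this same write bound, and that requires the induction.
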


\begin{proof}
We consider the first case; the other case is symmetric.
Assume $H-T < -(c+1)\cdot n$ after an execution prefix $\exec'$.
Consider the processes invoking \bb()
after $\exec'$, in the order their \bb() return.
Let $p_{j_i}$, $i \geq 1$, be the $i$th process in this order,
and let $V_i$ be the vector returned by its \bb().
We prove, by induction on $i$, that $\text{sum}(V_i) \leq -c\cdot n$,
and hence, $p_{j_i}$ returns $-1$.
(See \figref{ah15-fig}.)

In the base case, $i=1$.
Since a process invokes \bb() after every write,
there can be at most $n$ writes by all processes
after $\exec'$ 
and before the return of \bb() by $p_{j_1}$.
These writes can be of flips tossed in $\exec'$,
or flips tossed after $\exec'$.
The value $p_{j_1}$ 
returns in $V_1[k]$ can be different than the local value
of \textit{my-counter} at $p_k$ after $\exec'$
in one of two cases:
\begin{enumerate}
    \item If $p_k$ tossed a coin in $\exec'$,
    and the write of this coin does not complete before
    \bb() by $p_{j_1}$ returned
    and $V_1$ does not contain this value.
    \item If $p$ tossed a coin after $\exec'$,
    and $V_1$ contains this value.
    Hence, the write of this coin has started before
    \bb() by $p_{j_1}$ returns.
\end{enumerate}
Since there is at most one write by process $p_k$ after $\exec'$
until the \bb() by $p_{j_1}$ returns,
the regularity of collect implies that $p_{j_1}$ reads a value different
by at most 1 from the value of $p_k$'s counter after $\exec'$.
Therefore, $p_{j_1}$ reads a value bigger by at most $n$
from the value of $H-T$ after $\exec'$ and $\text{sum}(V_1)< -c\cdot n$,
thus $p_{j_1}$ decides -1 in Line~\ref{lin:tail}.

Inductive step: Assume that for $i > 1$,
processes $p_{j_1},...,p_{j_{i-1}}$ decide after their
\bb() invocation returns.
These processes have at most one write after $\exec'$ and before
they invoke \bb().
By the induction hypothesis, they return from {\sc{Coin}}()
after their \bb() invocation returns.
Therefore, these process have no additional writes after $\exec'$.
For any other process $p_k$,
there is at most one write after $\exec'$ until the
\bb() by $p_{j_i}$ returns,
since there must be a \bb() operation
by $p_k$ between any two writes by this process.
Therefore, there are at most $n$ writes after $\exec'$ until $p_{j_i}$
returns from \bb().
It follows that $p_{j_i}$ reads at most $n$ additional values from $H-T$
and returns $-1$ from {\sc{Coin}}().
\end{proof}

\begin{figure}
		\centering
		\captionsetup{justification=centering}
		\scalebox{.85}{\input{figs/ah_15.pdf_t}}
		\caption{Illustration of \lemref{ah-15}}
		\label{fig:ah15-fig}
\end{figure}

\begin{lemma}\label{lem:ah-16}
If process $p$ returns 1 (respectively, -1) from the shared coin,
then $H-T>(c-1)\cdot n$ (respectively, $H-T<-(c-1)\cdot n$)
at some point during its last call to \bb().
\end{lemma}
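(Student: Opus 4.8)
The plan is to prove the statement for the value $1$; the case of $-1$ is entirely symmetric. Suppose $p$ returns $1$ from {\sc{Coin}}(). Then in the last iteration of its while loop the call to \bb() returns a vector $V$ with $\text{sum}(V)\geq c\cdot n$, and it suffices to exhibit a single instant $\tau$ \emph{inside} this call to \bb() at which $H-T\geq\text{sum}(V)-(n-1)$: this yields $H-T\geq c\cdot n-(n-1)=(c-1)\cdot n+1>(c-1)\cdot n$.

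First I would unfold \bb(): it returns $V$ only once two consecutive collects, call them $c_{\mathrm{prev}}$ and $c_{\mathrm{last}}$, both return the same vector $V$, with $c_{\mathrm{prev}}$ completing before $c_{\mathrm{last}}$ is invoked. By the definition of two vectors being the same, for every process $p_k$ both collects return the same pair $\langle s_k,x_k\rangle$ in component $k$, and $\text{sum}(V)=\sum_k x_k$. Let $\tau$ be the instant $c_{\mathrm{prev}}$ completes; since $c_{\mathrm{last}}$ still runs afterwards, $\tau$ lies strictly inside $p$'s last call to \bb(). Write $j_k(\tau)$ for the number of flips $p_k$ has executed by time $\tau$; since flips are atomic, at time $\tau$ the counter of $p_k$ equals the value carried by its $j_k(\tau)$-th {\sc{Write}}, and $H-T$ at time $\tau$ equals $\sum_k(\text{counter of }p_k\text{ at }\tau)$.

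The crux is to show $j_k(\tau)\in\{s_k,s_k+1\}$ for every $k$, and in addition $j_p(\tau)=s_p$ for the reader $p$ itself. For $j_k(\tau)\geq s_k$: since $c_{\mathrm{prev}}$ returns $\langle s_k,\cdot\rangle$ in component $k$, regularity of collect says the $s_k$-th {\sc{Write}} of $p_k$ overlaps $c_{\mathrm{prev}}$ or immediately precedes it, so it --- and hence $p_k$'s $s_k$-th flip --- begins before $c_{\mathrm{prev}}$ completes, i.e.\ before $\tau$. For $j_k(\tau)\leq s_k+1$: assume toward a contradiction that $p_k$ already performed its $(s_k+2)$-nd flip by time $\tau$. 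The code places $p_k$'s $(s_k+1)$-st {\sc{Write}} between its $(s_k+1)$-st and $(s_k+2)$-nd flips, so that {\sc{Write}} completes before $\tau$, hence before $c_{\mathrm{last}}$ is invoked. But then $p_k$'s $s_k$-th {\sc{Write}} completed before $c_{\mathrm{last}}$ started (it precedes the $(s_k+1)$-st {\sc{Write}}, which precedes $c_{\mathrm{last}}$), so it does not overlap $c_{\mathrm{last}}$ and is not the latest {\sc{Write}} of $p_k$ preceding $c_{\mathrm{last}}$; regularity then forbids $c_{\mathrm{last}}$ from returning $\langle s_k,\cdot\rangle$ in component $k$, a contradiction. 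Finally, $p$ issues no {\sc{Write}} between invoking this \bb() and its return, so the only relevant {\sc{Write}} of $p$ has the very sequence number $s_p$ that $c_{\mathrm{prev}}$ reads, giving $j_p(\tau)=s_p$ exactly.

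It then remains to sum the errors. If $j_k(\tau)=s_k$, then $p_k$'s counter at $\tau$ equals $x_k$; if $j_k(\tau)=s_k+1$, the counter differs from $x_k$ by exactly one flip, hence by at most $1$; and for $p$ itself the difference is $0$. Therefore $\bigl|(H-T\text{ at }\tau)-\text{sum}(V)\bigr|\leq n-1$, so $H-T$ at $\tau$ is at least $\text{sum}(V)-(n-1)>(c-1)\cdot n$, as required. I expect the main obstacle to be precisely this ``within one flip'' step: it is where regularity of collect must be used carefully to recover, for the weaker regular-collect primitive, the Aspnes--Herlihy intuition that the adversary cannot hide more than $n-1$ flips (the reader's own flip never being in flight); the rest is timing bookkeeping of the kind already pictured in \figref{ah15-fig}.
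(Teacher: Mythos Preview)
Your proposal is correct and follows essentially the same approach as the paper: take $\tau$ to be the completion time of the first of the two matching collects, bound each component of $V$ to be within one flip of $p_k$'s counter at $\tau$, and sum. Your treatment is in fact a bit more explicit than the paper's in two places---you separately argue both directions $s_k\le j_k(\tau)\le s_k+1$, and you spell out why the reader's own component contributes zero error (which is what makes the bound $n-1$ rather than $n$ and yields the strict inequality)---but the underlying argument is the same.
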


\begin{proof}
We consider the first case; the other case is symmetric.
Consider the last pair of collects in the last \bb()
invocation before process $p$ returns,
and assume they return a vector $V$.
Assume $p$ misses a write by some process $p_k$
that overlaps the first collect,
i.e., the sequence number of this write is smaller than
the corresponding sequence number in $V$.
Then $p_k$'s write overlaps $p$'s first collect,
and it returns after the second collect starts.
(Otherwise, the regularity of collect implies that the second
collect returns this write by $p_k$, or a later one,
contradicting the fact it is equal to the first collect.)
Therefore, each process has at most one write that overlaps
the first collect and is missed by the first collect.

$V[k]$ is different from the local value of \textit{my-counter} at $p_k$
when the first collect completes,
if either $p_k$ did not start a write for a tossed coin
or a write by $p_k$ is pending at this point.
In the first case, $p$ reads all $p_k$'s previous writes.
In the second case, $p$ reads all previous writes by $p_k$,
except maybe the pending write.
Thus, from the regularity of collect, the value of $p_k$ read by $p$
differs by at most 1 from its value when
the first collect of $p$ completes.
Therefore, the sum of $V$ differs by at most $n-1$ values from
the value of $H-T$ when the first collect completes.
Since $p$ returns 1, $\text{sum}(V) \geq c\cdot n$, and hence,
$H-T>(c-1)\cdot n$ when the first collect of $p$ completes.
\end{proof}


\begin{lemma}
The agreement parameter of the weak shared coin algorithm is $(c-1)/2c$.
\end{lemma}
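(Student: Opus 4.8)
The plan is to reduce the statement to a gambler's-ruin estimate for a simple random walk. By the $1 \leftrightarrow -1$ symmetry of \algoref{coin} it suffices to prove that, with probability at least $(c-1)/2c$, every process that decides, decides $1$. I would start from the standard observation underlying the analysis of~\cite{AspnesH1990}: in any execution $\exec$, regardless of how the adaptive adversary schedules the processes, the outcomes of the flip() calls, listed in the order in which these atomic steps occur, form a sequence of independent uniform $\pm1$ variables; hence, letting $H_k$ and $T_k$ count the $1$'s and $-1$'s among the first $k$ flips, $S_k = H_k - T_k$ is a simple symmetric random walk started at $0$. Phrased as a coupling: the flips performed in $\exec$ are a prefix (possibly proper, possibly infinite) of an infinite i.i.d.\ $\pm1$ sequence, and $(S_k)_{k\ge 0}$ are the partial sums of that sequence. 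Let $A$ be the event that $(S_k)$ reaches a value exceeding $(c+1)n$ strictly before it reaches a value below $-(c-1)n$.

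The heart of the argument is the claim that on $A$, every deciding process decides $1$, which I would prove by cases on whether the flips actually performed in $\exec$ ever push $H-T$ above $(c+1)n$. If they do, let $\exec'$ be the prefix of $\exec$ ending right after this first occurs: by the definition of $A$, throughout $\exec'$ we have $H-T \ge -(c-1)n$, so the contrapositive of (the $-1$ case of) \lemref{ah-16} shows that no process decides $-1$ while within $\exec'$, while the mirror image of \lemref{ah-15} applied to $\exec'$ shows that every process whose \bb() completes after $\exec'$ decides $1$. Moreover, after $\exec'$ any process that completes a collect stops writing, so only finitely many writes occur in all, whence every \bb() eventually sees two equal consecutive collects, completes, and every nonfaulty process terminates with output $1$. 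If the flips performed in $\exec$ never push $H-T$ above $(c+1)n$, then---since on $A$ the infinite walk does eventually exceed $(c+1)n$---the execution $\exec$ contains only finitely many flips; this again forces finitely many writes, so every nonfaulty process terminates, and, as $H-T \ge -(c-1)n$ throughout $\exec$, \lemref{ah-16} forbids an output of $-1$. Either way, on $A$ every deciding process decides $1$.

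It remains to bound $\Pr[A]$. The walk $(S_k)$ is a martingale and, by recurrence of the one-dimensional symmetric random walk, almost surely reaches one of the two barriers; applying the optional stopping theorem at the first such time $\tau$ gives $E[S_\tau]=0$, whence $\Pr[A] = \frac{(c-1)n}{(c-1)n+(c+1)n} = \frac{c-1}{2c}$. Therefore the probability that all processes decide $1$ is at least $(c-1)/2c$, and by symmetry the same bound holds for $-1$; since $c>1$ this is a positive constant, so the agreement parameter of the coin is $(c-1)/2c$.

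I expect the main obstacle to be the coupling step of the second paragraph: matching the quantities in \lemref{ah-15} and \lemref{ah-16}---``$H-T$ after a prefix $\exec'$'' and ``$H-T$ at some moment during a process's last \bb()''---to the excursions of the abstract random walk, and in particular handling the fact that the walk moves in integer steps while the thresholds $(c+1)n$ and $-(c-1)n$ need not be integers. This shifts the effective barriers by at most one step and hence perturbs $\Pr[A]$ only by lower-order terms; since the shift is in the favorable direction, $(c-1)/2c$ remains a correct lower bound. The only other point needing care is the termination argument used in the case analysis---that once $H-T$ is large (or once there are finitely many flips) only finitely many writes occur and every \bb() returns---which follows since a process performs a \bb() between any two of its writes and stops writing once one of those calls returns.
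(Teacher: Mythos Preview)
Your approach is essentially the paper's: both reduce to a gambler's-ruin bound via \lemref{ah-15} and \lemref{ah-16}, and your explicit coupling with an infinite i.i.d.\ walk together with the case split on whether the execution actually reaches the upper barrier is a clean way to handle termination (the paper does the analogous thing more tersely).

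There is one small wrinkle in your Case~1. You invoke the mirror of \lemref{ah-15} to conclude that every process whose \bb() \emph{completes} after $\exec'$ decides~$1$, but the lemma only covers processes that \emph{invoke} \bb() after $\exec'$; a \bb() that was already in progress at the end of $\exec'$ and completes afterwards is not directly handled, and nothing you have said yet rules out that its returned vector has sum $\le -cn$. The paper avoids this by using \lemref{ah-15} only indirectly---to bound the number of post-$\exec'$ coin flips by~$n$, so that $H-T>cn$ for the remainder of the execution---and then applying \lemref{ah-16} once, globally: since $H-T\ge -(c-1)n$ holds at every moment of the entire execution, no process can ever output~$-1$, regardless of when its last \bb() was invoked. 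With that one-line rerouting your argument is complete.
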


\begin{proof}
We show that with probability $(c-1)/2c$ all the processes decide -1;
the proof for 1 is analogous.
\lemref{ah-16} implies that $H-T$ reaches $(c-1)\cdot n$ before
a process decides 1.
By~\lemref{ah-15}, after $H-T$ drops below $-(c+1)\cdot n$ there are
at most $n$ more \bb() invocations in the execution.
Therefore,
there are at most $n$ more writes in the rest of the execution.
After every coin flip there is a write by the process tossing the coin.
As there are at most $n$ additional writes,
there are also at most $n$ more coin flips in the rest of the execution.
Thus, $H-T<-c\cdot n< (c-1)\cdot n$ until the end of the execution.
It follows that no process can decide 1 if
$H-T$ drops below $-(c+1)\cdot n$ before reaching $(c-1)\cdot n$.
In addition, all processes eventually decide -1, since otherwise,
a process that has not decided earlier invokes
a \bb() operation.

In the worst case, the adversary can make some undecided process
decide 1 if $H-T$ reaches $(c-1)\cdot n$.
Viewing the value of $H-T$ as a random walk~\cite{Feller57},
starting at the origin with the absorbing barriers at $-(c+1)\cdot n$
(all decide -1) and at $(c-1)\cdot n$ (some may decide 1).
By a classical result of random walk theory,
the probability of reaching $-(c+1)\cdot n$ before $(c-1)\cdot n$ is
$$\frac{(c-1)\cdot n}{(c+1)\cdot n + (c-1)\cdot n}
= \frac{(c-1)\cdot n}{2cn} = \frac{c-1}{2c} .$$
\end{proof}

The next theorem can be proved along similar lines to~\cite{AspnesH1990}.

\begin{theorem}
For a constant $c>1$, the expected number of coin flips in an execution
of the weak shared coin is $O(n^2)$.
\end{theorem}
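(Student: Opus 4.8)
The plan is to view the running sum $H-T$ of coin outcomes as a simple symmetric random walk, and to combine two ingredients: (i) \lemref{ah-15}, which guarantees that once $|H-T|$ exceeds $(c+1)\cdot n$ the execution produces at most $n$ further coin flips before every process returns; and (ii) the classical fact that a symmetric random walk started at the origin leaves an interval of half-width $\Theta(n)$ within $\Theta(n^2)$ expected steps. Together these bound the expected total number of flips by $\Theta(n^2) + n = O(n^2)$ when $c$ is a constant.

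First I would set up the random-walk structure. Although the adversary is adaptive --- it decides which process takes each step, based on the history of the execution --- each invocation of flip() is a single atomic step returning $+1$ or $-1$ with probability $1/2$, independently of everything that happened before. Hence, if $X_1, X_2, \dots$ denote the outcomes of the coin flips in the order in which they occur in $\exec$, then $(X_k)_{k \geq 1}$ is a sequence of independent fair $\pm 1$ variables with respect to the filtration that also records the adversary's scheduling choices, and $S_k = X_1 + \dots + X_k$ equals the value of $H-T$ after the first $k$ flips. In particular $(S_k)$ is a simple symmetric random walk and $S_k^2 - k$ is a martingale. To make the stopping time below well-defined even when the algorithm terminates, I would couple $\exec$ with an infinite sequence of fair coins, continuing to sample $X_k$ past the point where the execution ends; this does not change the number of flips the algorithm actually performs.

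Next I would bound the exit time. Let $a = \lceil (c+1) \cdot n \rceil + 1$ and let $\tau = \min\{k : |S_k| \geq a\}$, the index of the first flip at which $|H-T| > (c+1) \cdot n$. Since a simple symmetric random walk is recurrent, $\tau$ is finite almost surely and has finite expectation; since the steps are $\pm 1$ we have $|S_\tau| = a$, so optional stopping applied to $S_k^2 - k$ gives $E[\tau] = E[S_\tau^2] = a^2 = O(n^2)$, using that $c$ is a constant. Finally, by \lemref{ah-15}, from flip $\tau$ onward every process that completes a \bb() invocation returns from {\sc{Coin}}(); as in the preceding lemma, each process then has at most one more write, so there are at most $n$ further writes and hence at most $n$ further coin flips in the remainder of $\exec$. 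Thus the total number of coin flips is at most $\tau + n$, and its expectation is $O(n^2) + n = O(n^2)$.

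The main obstacle is the first step: making precise that, despite the adaptive adversary, the partial sums of the coin outcomes form a genuine martingale --- that is, conditioned on the full history (including which process was scheduled and what it did), the next flip is still an independent fair bit. This is exactly what the modelling assumption that flip() is a single atomic step buys us, and it is what justifies applying optional stopping. Given this, the random-walk exit-time computation and the $\tau + n$ bound on the number of flips are routine.
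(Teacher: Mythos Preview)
Your proposal is correct and is precisely the standard random-walk argument: the paper does not give its own proof but defers to~\cite{AspnesH1990}, and your proof is exactly that argument --- use \lemref{ah-15} to cap the post-exit flips at $n$, and use the martingale $S_k^2-k$ with optional stopping to get $E[\tau]=a^2=O(n^2)$ for the exit time from $[-a,a]$ with $a=\Theta(n)$. The only point worth tightening is the justification for applying optional stopping to the unbounded martingale $S_k^2-k$: you should first note that $E[\tau]<\infty$ (e.g., because every block of $a$ consecutive flips exits with probability $\geq 2^{-a}$) and then invoke Wald's second identity, or equivalently apply optional stopping to the bounded stopped process $S_{k\wedge\tau}^2-(k\wedge\tau)$ and pass to the limit; this is routine, but ``recurrence'' alone does not give finite expectation.
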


Since the expected number of coin flips is $O(n^2)$,
the expected number of write and building block invocations
is also $O(n^2)$.
The total number of collect operations in these building block invocations
for all the processes is $O(n^3)$ in expectation.
This is because a double collect fails only when another coin is written,
and thus each write can cause at most $n-1$ double collects to fail. 
Therefore, there are at most $O(n^3)$ failed double collects 
and $O(n^2)$ successful double collects, 
all part of $O(n^2)$ building block invocations by all the processes.
Therefore, the weak shared coin requires $O(n^3)$ round-trips,
$O(n^4)$ messages and $O(\readno n^3)$ shared-memory accesses in expectation
and uses $O(\regsno)$ registers.
Plugging the weak shared coin in the overall algorithm of~\cite{AspnesH1990},
proved to be correct by~\cite{HadzilacosHT2020regular},
yields a randomized consensus algorithm with the same expected
complexities as the weak shared coin.

\begin{theorem}
Randomized consensus can be solved in the presence of $f$ failures
if the system in not $f$-partitionable.
The algorithm takes $O(n^3)$ round-trips, $O(n^4)$ messages
and $O(\readno n^3)$ shared-memory accesses in expectation
and uses $O(\regsno)$ registers.
\end{theorem}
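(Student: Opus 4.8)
The plan is to assemble the weak shared coin of \algoref{coin} into the randomized consensus framework of~\cite{AspnesH1990}, using the components already established. Since the system is not $f$-partitionable, \theoremref{non partitionable SWMR} and the regular-collect lemmas of Section~\ref{sec:batching} imply that \algoref{swmr}, \algoref{batching}, and hence the double-collect building block \bb() of \algoref{building-block}, all work correctly in the presence of $f$ failures. In particular, \lemref{global-order} together with the regularity of \bb() provides exactly the guarantees that~\cite{HadzilacosHT2020regular} shows are sufficient for the algorithm of~\cite{AspnesH1990} to solve consensus; we never appeal to linearizability (which, as noted above, the register implementation lacks), only to regularity, which carries over from the register to \bb(). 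Combined with the lemma establishing that the coin of \algoref{coin} has agreement parameter $(c-1)/(2c)$---a positive constant once $c>1$ is fixed---this yields a correct randomized consensus algorithm whose expected complexity, by the analysis of~\cite{AspnesH1990} as used in~\cite{HadzilacosHT2020arxiv}, matches that of the weak shared coin up to constant factors.

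It then remains to bound the cost of the weak shared coin. By the preceding theorem the expected number of local coin flips is $O(n^2)$, and since every flip is followed by one {\sc{Write}}$_p$ and one invocation of \bb(), there are $O(n^2)$ building-block invocations in expectation. Inside a building block a double collect fails only if some process writes, and each write can invalidate at most $n-1$ ongoing double collects; hence there are $O(n^3)$ failed and $O(n^2)$ successful double collects in total. With batching, each {\sc{Collect}}() costs $O(1)$ round-trips, $O(n)$ messages and $O(\readno)$ shared-memory accesses, while the implementation uses $O(\regsno)$ SWMR registers overall. Multiplying through gives the weak shared coin a cost of $O(n^3)$ round-trips, $O(n^4)$ messages, and $O(\readno n^3)$ shared-memory accesses in expectation, and $O(\regsno)$ registers; the consensus wrapper of~\cite{AspnesH1990} does not change these asymptotics, which gives the stated bounds.

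The main obstacle is not a single hard step but the care needed to justify that regularity---rather than strong linearizability---is enough, and that the double-collect failure count is genuinely $O(n^3)$. The former is discharged by~\cite{HadzilacosHT2020regular,HadzilacosHT2020arxiv}, which analyze the algorithm of~\cite{AspnesH1990} over regular registers; the latter rests on the observation, already recorded before the theorem, that a write can cause at most $n-1$ double collects to fail.
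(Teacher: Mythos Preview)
Your proposal is correct and follows essentially the same argument as the paper: establish the weak shared coin's constant agreement parameter and $O(n^2)$ expected flips, bound the total number of collects by $O(n^3)$ via the ``each write fails at most $n-1$ double collects'' observation, multiply by the per-collect cost from batching, and plug the coin into the framework of~\cite{AspnesH1990} whose correctness over regular registers is given by~\cite{HadzilacosHT2020regular}. If anything, you spell out the correctness justification (regularity rather than linearizability) more carefully than the paper's own summary paragraph.
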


Next, we prove that randomized consensus cannot be solved in a
partitionable system, by considering the more general problem
of \emph{non-deterministic $f$-terminating} consensus,
an extension of \emph{non-deterministic solo termination}~\cite{FichHS98}.
This variant of consensus has the usual validity and agreement properties,
with the following termination property:
\begin{description}
\item[Non-deterministic $f$-termination:]
For every configuration $C$, process $p$
and set $F$ of at most $f$ processes,
such that $p \notin F$,
there is an $F$-free execution in which process $p$ terminates.
\end{description}


\begin{theorem}\label{theorem:consensus-impossibility}
If a system is $f$-partitionable then
non-deterministic $f$-terminating consensus is unsolvable.
\end{theorem}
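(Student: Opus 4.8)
The plan is to adapt the partition-and-merge argument already used for \theoremref{f-partitionable impossibility}, \theoremref{approx-agreement-impossibilty} and \theoremref{renaming-impossibility}; no bivalency machinery is needed, because the weak (non-deterministic $f$-terminating) termination property is used only to produce the two ``solo-team'' executions, and validity plus agreement do the rest. Assume for contradiction an algorithm $\mathcal{A}$ solving non-deterministic $f$-terminating consensus. Since the system is $f$-partitionable there are sets $P,P'$, each of size $n-f$, with $\bothcanreadcomm[not]{P}{P'}$; these are necessarily disjoint (a shared process $r$ would give $\canreadcomm{r}{r}$, hence $\bothcanreadcomm{P}{P'}$), so writing $Q$ for the remaining processes we have $|P\cup Q| = |P'\cup Q| = f$, and we may assume w.l.o.g.\ that $\canreadcomm[not]{P'}{P}$, i.e.\ no process in $P'$ can read a memory some process in $P$ can write to.

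First I would build two executions on opposite inputs. Let $\alpha_1$ start from the all-$0$ initial configuration; applying non-deterministic $f$-termination with crash set $F = P'\cup Q$ (of size $f$) and some process $p\in P$ (so $p\notin F$), there is a $(P'\cup Q)$-free execution in which $p$ terminates, and by validity $p$ decides $0$; let $t_1$ be that time. Symmetrically, let $\alpha_2$ start from the all-$1$ configuration; with $F = P\cup Q$ and some $p'\in P'$, there is a $(P\cup Q)$-free execution in which $p'$ terminates, deciding $1$ by validity, at some time $t_2$.

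Then I would merge them into $\alpha_3$: its initial configuration assigns input $0$ to every process in $P$, input $1$ to every process in $P'$, and arbitrary inputs to $Q$; processes in $Q$ take no steps; the schedule follows $\alpha_1$ restricted to $P$ on $[0,t_1]$ and then $\alpha_2$ restricted to $P'$ on $[t_1,t_1+t_2]$; all messages between $P$ and $P'$ are delivered only after $t_1+t_2$, while $P$-internal and $P'$-internal messages are delivered exactly as in $\alpha_1$ and $\alpha_2$. On $[0,t_1]$ no process outside $P$ has taken a step, so every message $p$ receives and every shared value $p$ reads is as in $\alpha_1$; hence $p$ decides $0$. On $[t_1,t_1+t_2]$, $p'$ receives no message from $P$ (delayed) or $Q$ (idle), and since $\canreadcomm[not]{P'}{P}$ it cannot read anything written by $P$ while $Q$ wrote nothing, so $p'$'s view equals its view in $\alpha_2$ and $p'$ decides $1$. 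Two processes decide differently in $\alpha_3$, contradicting agreement.

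The routine-but-essential step is verifying that $\alpha_3$ is a legal execution of $\mathcal{A}$ — that the spliced schedule is consistent with asynchrony and FIFO links and that neither team can detect the other's activity — and its heart is exactly the shared-memory side: the one-directional disconnect $\canreadcomm[not]{P'}{P}$ is what prevents $P'$ from learning about $P$'s writes indirectly (note $\canreadcomm[not]{P}{P'}$ is never used, since $P'$ is idle during the first phase). Once this indistinguishability is established, validity fixes the two decisions and agreement yields the contradiction.
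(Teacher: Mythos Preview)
Your proposal is correct and follows essentially the same partition-and-merge argument as the paper's own proof: build a $(P'\cup Q)$-free execution $\alpha_1$ with all-$0$ inputs in $P$ and a $(P\cup Q)$-free execution $\alpha_2$ with all-$1$ inputs in $P'$ via non-deterministic $f$-termination, then splice them into $\alpha_3$ with delayed cross-messages and use $\canreadcomm[not]{P'}{P}$ for indistinguishability. You are in fact slightly more explicit than the paper in justifying disjointness of $P$ and $P'$ from reflexivity of $\rightarrow$, and in noting that only the one-sided direction $\canreadcomm[not]{P'}{P}$ is actually used.
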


\begin{proof}
Assume, by way of contradiction, that there is an non-deterministic
$f$-terminating consensus algorithm.
Since the system is $f$-partitionable,
there are two disjoint sets of processes $P$ and $P'$,
each of size $n-f$, such that $\canreadcomm[not]{P'}{P}$.
Therefore, there are no two processes $p\in{P}$ and $p'\in{P'}$ so that
$p'$ can read from a memory and $p$ can write to that same memory.
Let $Q$ be the processes not in $P \cup P'$.
Since $|P|, |P'| = n-f$,
it follows that $| P \cup Q | = | P' \cup Q | = f$.

To prove the theorem, we construct three executions.
Consider an initial configuration, in which all processes
in $P$ have initial value $0$.
Since $|P' \cup Q| = f$, non-deterministic $f$-termination
implies there is a $(P' \cup Q)$-free execution,
in which some process $p \in P$ terminates, say by time $t_1$.
Call this execution $\alpha_1$,
and note that only processes in $P$ take steps in $\alpha_1$.
By validity, $p$ decides 0.

In a similar manner, we can get a $(P \cup Q)$-free execution,
$\alpha_2$, in which initial values of all the processes in $P'$ are 1,
and by non-deterministic $f$-termination, some process $p' \in P'$
decides on 1, say by time $t_2$.
Note that only processes in $P'$ take steps in $\alpha_2$.

Finally, the third execution $\alpha_3$ combines $\alpha_1$ and $\alpha_2$.
The initial value of processes in $P$ is 0,
and the initial value of processes in $P'$ is 1.
Processes in $Q$ have arbitrary initial values,
and they take no steps in $\alpha_3$.
The execution is identical to $\alpha_1$ from time 0 until time $t_1$,
and to $\alpha_2$ from this time until time $t_1+t_2$.
All messages sent between processes in $P$ and processes in $P'$
are delivered after time $t_1 + t_2$.
Since processes in $P'$ do not take steps in $\alpha_3$ until time $t_1$,
all processes in $P$ decides $0$, as in $\alpha_1$.
Processes in $P'$ cannot receive messages from processes in $P$
or read what processes in $P$ write to the shared memory,
therefore all processes in $P'$ decides 1, as in execution $\alpha_2$,
violating the agreement property.
\end{proof}

\section{The M\&M Model}\label{section:mm}

In the m\&m model~\cite{Aguilera:PODC:2018,Hadzilacos+:M&M:OPODIS:2019},
the shared memory connections are defined by a
\textit{shared-memory domain} $L$,
which is a collection of sets of processes.
For each set $S\in L$,
all the processes in the set may share any number of registers among them.
Our model when the shared memory has no access restrictions is
a dual of the general m\&m model, and they both capture the same systems.
We say that $L$ is \emph{uniform} if it is induced by an undirected
\emph{shared-memory graph} $G=(V,E)$,
where each vertex in $V$ represents a process $p$.
For every process $p$, $S_p=\{p\}\cup\{q:(p,q)\in{E}\}$,
then $L=\{S_p: \text{$p$ is a process}\}$.
In the uniform m\&m model each memory is associated with a process $p$,
and all the processes in $S_p$ may access it.
That is, a process can access its own memory and the memories of its
neighbors.

In the m\&m model, there are no access restrictions on the shared memory.
Hence, for every process $p$, $|R_p|=|W_p|=|S_p|$.
Therefore, $\regsno = \sum_{\text{process $p$}}{|S_p|} =
\sum_{\text{process $p$}}{d(p)+1}=2|E|+n=O(n^2)$ and $\readno = O(n^3)$,
where $d(p)$ is the degree of process $p$ in the graph.
Substituting into the algorithms presented in Section \ref{sec:tasks},
we obtain polynomial complexity for all of them,
including a polynomial randomized consensus algorithm.
In the general m\&m model,
$\regsno$ and $\readno$ are unbounded.


\begin{definition}[\cite{Hadzilacos+:M&M:OPODIS:2019}]
Given a shared-memory domain $L$ , $\tL$ is the largest integer $f$
such that for all process subsets $P$ and $P'$ of size $n-f$ each,
either $P\cap{P'}\neq{\emptyset}$ or there is a set $S\in L$
that contains both a process from $P$ and a process from $P'$.
\end{definition}

Hadzilacos, Hu and Toueg~\cite{Hadzilacos+:M&M:OPODIS:2019} show that
an SWMR register can be implemented in the m\&m model if and only if
at most $\tL$ process may fail.
Therefore in the m\&m model, $\fnonpartionable = \tL$.
We can see the connection between the two definitions by observing that,
in this model, $\bothcanreadcomm{p}{q}$ if $p=q$ or
there is a set $S\in L$ such that $p,q \in S$.
We simply write $\bothcanreadcomm{}{}$,
since the shared memory has no access restrictions.

The \emph{square} of a graph $G = (V,E)$ is the graph $G^2=(V,E^2)$,
where  
\[ 
E^2= E \cup\{(u,v):
    \text{$\exists{w\in E}$ such that $(u,w)\in E$ and $(w,v)\in E$} .
\]
I.e., there is an edge in $G^2$ between every two vertices that are
in distance at most 2 in the graph $G$.

%
%

\begin{definition}[\cite{Hadzilacos+:M&M:OPODIS:2019}]
\label{def:fg}
Given an undirected graph $G=(V,E)$, $\fG$ is the largest integer $f$
such that for all subsets $P$ and $P'$ of $V$ of size $n-f$ each,
either $P\cap{P'}\neq{\emptyset}$ or $G^2$ has an edge $(u,v)$
such that $u\in{P}$ and $v\in{P'}$.
\end{definition}

In the uniform m\&m model,
$\tL=\fG=\fnonpartionable$~\cite{Hadzilacos+:M&M:OPODIS:2019},
and $\sharedmemcomm{p}{q}$ if $p=q$ or $(p,q)$ is an edge in $G^2$.

\begin{observation}
\label{obs:fnonpartitionisfg}
In the uniform m\&m model, $\fnonpartionable=\fG$.
\end{observation}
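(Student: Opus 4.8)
The plan is to unwind the definitions on both sides and show that the defining conditions coincide verbatim for the uniform m\&m model. Recall that $\fnonpartionable$ is the largest $f$ such that for every pair of process sets $P,Q$ of size $n-f$ we have $\bothcanreadcomm{P}{Q}$, while $\fG$ (\defref{fg}) is the largest $f$ such that for every pair of size-$(n-f)$ subsets $P,P'$ of $V$, either $P\cap P'\neq\emptyset$ or $G^2$ contains an edge between $P$ and $P'$. So it suffices to prove that, in the uniform m\&m model, the predicate ``$\bothcanreadcomm{P}{Q}$'' is equivalent to ``$P\cap Q\neq\emptyset$ or $G^2$ has an edge $(u,v)$ with $u\in P,v\in Q$''; since the two quantities are each defined as the largest $f$ for which the respective universally-quantified predicate holds over all size-$(n-f)$ pairs, equivalence of the predicates immediately gives $\fnonpartionable=\fG$.

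First I would record the elementary fact, already noted in the excerpt just before the observation, that in the uniform m\&m model $\sharedmemcomm{p}{q}$ holds iff $p=q$ or $(p,q)$ is an edge of $G^2$; this follows because the shared memory has no access restrictions (so $\rightarrow$ is symmetric and $\bothcanreadcomm{p}{q}\iff \canreadcomm{p}{q}$), and because two processes can read/write a common memory exactly when both lie in some $S_r$, i.e. both are within distance $1$ of some common vertex $r$, which is precisely the condition that $p$ and $q$ are at distance at most $2$ in $G$ (taking $r=p$ or $r=q$ covers the distance-$\le 1$ cases, including $p=q$). Then $\bothcanreadcomm{P}{Q}$ means by \defref{sharedmemcommset} that there exist $p\in P$, $q\in Q$ with $\bothcanreadcomm{p}{q}$, which by the above is exactly: there exist $p\in P,q\in Q$ with $p=q$ or $(p,q)\in E^2$ — i.e. $P\cap Q\neq\emptyset$ or $G^2$ has an edge between $P$ and $Q$. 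That is literally the disjunction appearing in \defref{fg}.

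Having matched the predicates, I would conclude by a one-line argument: for each $f$, the statement ``for all size-$(n-f)$ pairs $P,Q$, $\bothcanreadcomm{P}{Q}$'' is true iff ``for all size-$(n-f)$ pairs $P,P'$, $P\cap P'\neq\emptyset$ or $G^2$ has an edge between them'' is true; hence the largest such $f$ is the same on both sides, giving $\fnonpartionable=\fG$. One could alternatively just cite the chain $\tL=\fG=\fnonpartionable$ attributed to \cite{Hadzilacos+:M&M:OPODIS:2019} in the sentence preceding the observation, but the self-contained route above is cleaner.

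I do not expect a real obstacle here; the only point requiring a moment's care is the ``distance at most $2$'' characterization of $\bothcanreadcomm{p}{q}$ — specifically checking that the definition $S_p=\{p\}\cup\{q:(p,q)\in E\}$ together with ``all processes in $S_r$ may access memory $\mu_r$'' yields exactly the edge set $E^2=E\cup E\circ E$ of \defref{fg}, including the reflexive case $p=q$ and the distance-$1$ case. Everything else is pure definition-chasing.
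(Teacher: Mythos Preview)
Your proposal is correct and matches the paper's (implicit) argument: the paper records this as an observation with no proof, relying on the sentence immediately preceding it, which states that in the uniform m\&m model $\sharedmemcomm{p}{q}$ iff $p=q$ or $(p,q)\in G^2$ (and cites the chain $\tL=\fG=\fnonpartionable$ from~\cite{Hadzilacos+:M&M:OPODIS:2019}). Your write-up simply makes explicit the definition-chase that this sentence summarizes.
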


\lemref{fnonpartionleqfmajority} proves that 
$\fnonpartionable\leq\fmajority$.
The next lemma shows a graph with $n=9$, 
where $\fnonpartionable < \fmajority$.
Thus, the converse inequality does not necessarily
hold in the uniform m\&m model.
Hence, it also does not hold in the general m\&m model.

\begin{lemma}
\label{fmajorityneqfnonpartion}
There is a shared-memory graph such that $\fnonpartionable < \fmajority$
in the uniform m\&m model.
\end{lemma}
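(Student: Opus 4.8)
The plan is to exhibit one concrete shared-memory graph $G$ on $n=9$ processes and to compute $\fnonpartionable$ and $\fmajority$ for it by hand. I will take $G$ to be the tree (a ``double broom'') on vertices $\{1,\dots,9\}$ whose edges are $\{1,4\},\{2,4\},\{3,4\}$, $\{4,5\}$, $\{5,6\}$, and $\{6,7\},\{6,8\},\{6,9\}$; informally, the leaves $1,2,3$ hang off the hub $4$, the leaves $7,8,9$ hang off the hub $6$, and the two hubs are joined through a middle vertex $5$. Throughout I use the two characterisations valid in the uniform m\&m model: $\bothcanreadcomm{p}{q}$ iff $d_G(p,q)\le 2$, and $\communicateset{P}$ equals $N^2[P]$, the set of vertices at distance at most $2$ from $P$ in $G$. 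Since $G$ is a tree, every distance I need is the length of a unique path and can be read off directly.

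The computation rests on one combinatorial fact. First I would tabulate the distance-$2$ neighbourhoods of the vertices: $N^2[\{i\}]=\{1,2,3,4,5\}$ for $i\in\{1,2,3\}$, $N^2[\{i\}]=\{5,6,7,8,9\}$ for $i\in\{7,8,9\}$, $N^2[\{4\}]=\{1,\dots,6\}$, $N^2[\{6\}]=\{4,\dots,9\}$, and $N^2[\{5\}]=\{1,\dots,9\}$. From this list, a short case split on $P\cap\{4,5,6\}$ shows that \emph{every} $4$-element set $P$ of processes satisfies $|N^2[P]|\ge 6$: if $P$ meets $\{4,5,6\}$ this is immediate from the table, and if $P\subseteq\{1,2,3,7,8,9\}$ then $P$ (having four elements) contains a leaf near $4$ and a leaf near $6$, and the union of their two distance-$2$ neighbourhoods is already all of $V$.

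With this fact, $\fnonpartionable=5$ follows quickly. The system is not $5$-partitionable: if $P,Q$ were disjoint sets of size $4$ with $\bothcanreadcomm[not]{P}{Q}$, then no vertex of $Q$ would lie within distance $2$ of $P$, so $Q\subseteq V\setminus N^2[P]$ and hence $|N^2[P]|\le 9-4=5$, contradicting the fact above; thus $\fnonpartionable\ge 5$. Conversely, $\{1,2,3\}$ and $\{7,8,9\}$ are disjoint sets of size $3$ with all nine pairwise distances equal to $4$, so $\bothcanreadcomm[not]{\{1,2,3\}}{\{7,8,9\}}$ and the system is $6$-partitionable, giving $\fnonpartionable\le 5$. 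For $\fmajority$, monotonicity of $P\mapsto N^2[P]$ together with $|N^2[\{v\}]|\ge 5$ for every vertex $v$ yields $|\communicateset{P}|\ge 5>\lfloor 9/2\rfloor$ for every nonempty $P$; hence the defining condition of $\fmajority$ holds for all $f\le 8$ and fails only for $f=9$ (where the relevant set is empty), so $\fmajority=8$. Therefore $\fnonpartionable=5<8=\fmajority$, which is the claim.

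The step I expect to take the most care is the ``not $5$-partitionable'' direction, which a priori quantifies over all pairs of $4$-sets of processes; the proof becomes routine once the single inequality $|N^2[P]|\ge 6$ for $|P|=4$ is isolated, after which everything is counting and no reasoning about executions is needed. It is also worth checking that $5$ lies strictly between $\lfloor(n-1)/2\rfloor=4$ and $n-1=8$, so the example genuinely sits in the intermediate range and is consistent with \lemref{fnonpartionleqfmajority}.
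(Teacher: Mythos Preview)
Your proof is correct and follows the same strategy as the paper: exhibit a concrete shared-memory graph on $n=9$ vertices and verify the two quantities by direct computation of distance-$2$ neighbourhoods. The paper uses a different $9$-vertex graph (its \figref{ce1}) and only argues $\fmajority=7$ while $\fnonpartionable<7$, whereas your double-broom tree yields the sharper separation $\fnonpartionable=5<8=\fmajority$; the method, however, is identical.
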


\begin{proof}
In the uniform m\&m model $\fG=\fnonpartionable$.
Consider the graph $G$ presented in \figref{ce1} with $n=9$.
For every set of processes $P$ of size $n-7=2$,
$|{\communicateset{P}}|\geq{5}>4=\lfloor{n/2}\rfloor$,
therefore, $\fmajority=7$.
We next show that $\fnonpartionable<7$.
Consider the following subsets, $P=\{p_7,p_9\}$ and $Q=\{p_3,p_5\}$,
both of size $n-\fmajority$.
$P\cap{Q}=\emptyset$ and
$(p_7,p_3),(p_7,p_5),(p_9,p_3),(p_9,p_5)\notin{G^{2}}$,
thus, $\fnonpartionable=\fG<7=\fmajority$
and we get that $\fnonpartionable\neq{\fmajority}$ in the m\&m model.
\end{proof}

\begin{figure}
	\begin{minipage}[b]{0.49\textwidth}
		\centering
		\captionsetup{justification=centering}
		\scalebox{.45}{\input{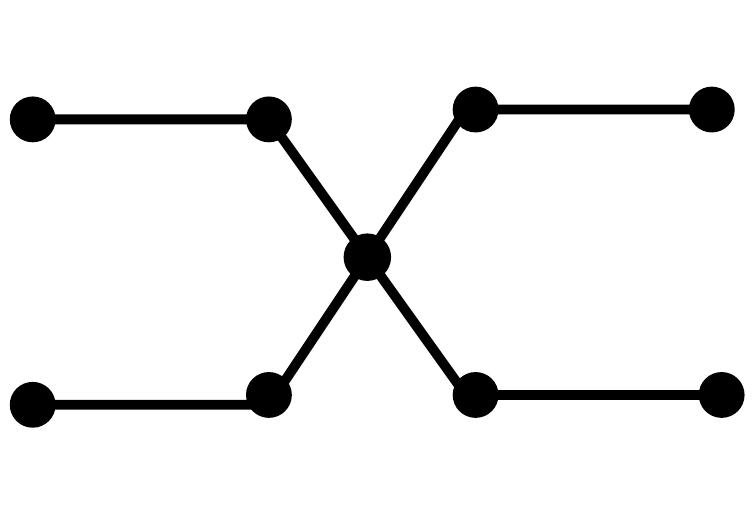_t}}
		\caption{Counter example}
		\label{fig:ce1}
	\end{minipage}
	\hfill
	\begin{minipage}[b]{0.49\textwidth}
		\centering
		\captionsetup{justification=centering}
		\scalebox{.45}{\input{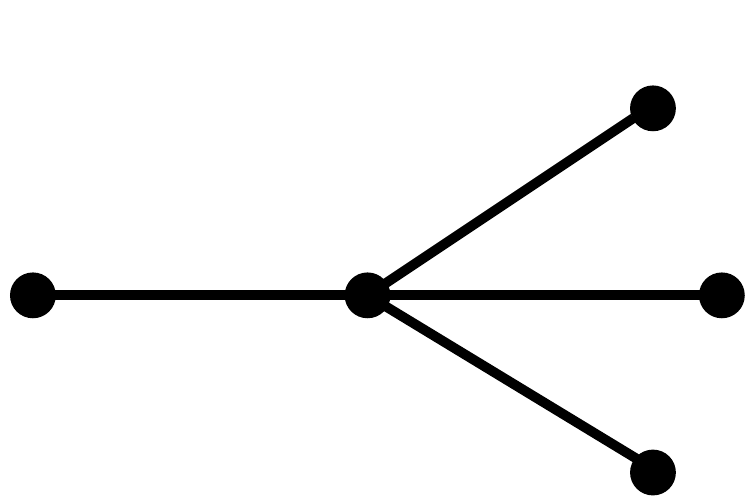_t}}
		\caption{Counter example for $n=5$}
		\label{fig:ce2}
	\end{minipage}
\end{figure}

\begin{definition}[\cite{Aguilera:PODC:2018}]
\label{def:represents}
A process $p$ \emph{represents} itself and all its neighbors,
that is, $\{p\} \cup \{q:(p,q)\in E\}$.
A set of processes $P$ represents the union of all the processes
represented by processes in $P$.
\end{definition}

Aguilera et al.~\cite{Aguilera:PODC:2018} present a randomized
consensus algorithm, called HBO,
which is based on Ben-Or's algorithm~\cite{BenOr1983}.
Like Ben-Or's algorithm, HBO has exponential time and message
complexities.
HBO assumes that the nonfaulty processes represent a majority of the
processes.
Below, we show that the resilience of the HBO algorithm is not optimal.
We first capture the condition required for the correctness of the HBO
algorithm, with the next definition.

\begin{definition}
\label{def:fmandm}
$\fmandm$ is the largest integer $f$ such that every set $P$ of
$n-f$ processes represents a majority of the processes.
\end{definition}

\begin{lemma}
\label{lem:fmandmleqfnonpartion}
$\fmandm\leq{\fnonpartionable}$.
\end{lemma}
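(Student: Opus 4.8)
I want to show $\fmandm \leq \fnonpartionable$, which by definition means: for every pair of sets $P, Q$ of size $n - \fmandm$ each, we have $\bothcanreadcomm{P}{Q}$. Equivalently (and this is the contrapositive form I'd actually argue), if the system is $\fmandm$-partitionable, then some set of $n - \fmandm$ processes fails to represent a majority, contradicting the definition of $\fmandm$. So I would argue by contradiction: suppose $\fmandm > \fnonpartionable$, i.e., the system is $\fmandm$-partitionable. Then there are two sets $P, Q$, each of size $n - \fmandm$, with $\bothcanreadcomm[not]{P}{Q}$.

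The key observation connecting the two notions is that, in the m\&m model (no access restrictions), the ``represents'' relation and the $\rightarrow$ relation are closely tied: if a process $q$ is represented by some process in $P$, then either $q \in P$ or $q$ is a neighbor of some $p \in P$ in the shared-memory graph, and in the latter case $p$ and $q$ share a memory, so $\bothcanreadcomm{p}{q}$. Hence $q \in \communicateset{P}$. In other words, the set of processes represented by $P$ is contained in $\communicateset{P}$. (In fact, since the relation is reflexive, $P \subseteq \communicateset{P}$ too, so this containment also covers $q \in P$.) So: \emph{$P$ represents a subset of $\communicateset{P}$.}

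Now combine with the partitioning assumption. Since $\bothcanreadcomm[not]{P}{Q}$, in particular at least one direction fails, say $\canreadcomm[not]{Q}{P}$ (the other case is symmetric, or one can note the relation is symmetric in the access-restriction-free case, so both directions fail together). Then no process in $P$ writes to a memory readable by a process in $Q$; equivalently, by symmetry of $\rightarrow$ here, no process in $Q$ is in $\communicateset{P}$. Hence $\communicateset{P} \cap Q = \emptyset$, so $|\communicateset{P}| \leq n - |Q| = n - (n - \fmandm) = \fmandm$. Combining with the containment from the previous paragraph, the set represented by $P$ has size at most $\fmandm$. But $\fmandm \geq \lfloor (n-1)/2 \rfloor$ (this is the same baseline bound noted for $\fnonpartionable$, since $\fmandm \leq \fnonpartionable$ would follow trivially if $\fmandm$ were tiny — more carefully, one just checks $\fmandm \le \lfloor (n-1)/2\rfloor$ would already give the lemma, so we may assume $\fmandm > \lfloor(n-1)/2\rfloor$), which gives $\fmandm \le \lfloor n/2 \rfloor$ after the same arithmetic as in \lemref{fnonpartionleqfmajority}. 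So $P$, a set of $n - \fmandm$ processes, represents at most $\lfloor n/2 \rfloor$ processes — not a majority — contradicting \defref{fmandm}.

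**Main obstacle.** The only delicate point is the arithmetic bridging ``represents at most $\fmandm$ processes'' to ``does not represent a majority.'' This mirrors exactly the computation in the proof of \lemref{fnonpartionleqfmajority}: one first disposes of the trivial case $\fmandm = \lfloor (n-1)/2 \rfloor$ (where $n - \fmandm > \lfloor n/2 \rfloor$ forces $P$ itself to be a majority, so the lemma holds directly), and in the remaining case $\fmandm \le \lfloor n/2\rfloor$, so $|\communicateset P| \le \fmandm \le \lfloor n/2 \rfloor$ is not a majority. Everything else is just unwinding the definitions of ``represents,'' $\communicateset{\cdot}$, and $f$-partitionable in the no-access-restriction setting, where $\rightarrow$ is symmetric.
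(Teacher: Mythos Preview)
Your argument has a genuine gap in the arithmetic step. You correctly establish that the set represented by $P$ is contained in $\communicateset{P}$ and that $\communicateset{P}\cap Q=\emptyset$, giving that $P$ represents at most $n-|Q|=\fmandm$ processes. But your subsequent claim that ``in the remaining case $\fmandm\le\lfloor n/2\rfloor$'' is not justified and is in fact backwards: since $P$ and $Q$ are disjoint sets of size $n-\fmandm$ each, you get $2(n-\fmandm)\le n$, i.e., $\fmandm\ge\lceil n/2\rceil$. So ``$P$ represents at most $\fmandm$ processes'' does \emph{not} preclude $P$ from representing a majority, and you do not reach a contradiction with \defref{fmandm}. (The arithmetic in \lemref{fnonpartionleqfmajority} that you invoke is doing something different: there, the bound $|\communicateset P|\le\lfloor n/2\rfloor$ is the \emph{hypothesis}, used to carve out room for $Q$; it is not a conclusion about the size of $\fnonpartionable$.)

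What is missing is the paper's key observation: the set represented by $P$ and the set represented by $Q$ are \emph{disjoint from each other}, not merely disjoint from $Q$ and $P$ respectively. Indeed, if some $w$ were represented by both, then $w$ lies in (or is a $G$-neighbor of) a vertex of $P$ and also of a vertex of $Q$, yielding a path of length at most $2$ in $G$ between $P$ and $Q$, hence a $G^2$-edge, hence $\sharedmemcomm{P}{Q}$---contradiction. Once you have this disjointness, both represented sets are majorities by \defref{fmandm}, so their sizes sum to more than $n$, which is impossible. This is precisely the paper's argument (phrased there directly rather than by contradiction): take the common represented process $w$ and do a short case analysis on whether $w\in P$, $w\in Q$, or neither. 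Your detour through $\communicateset{P}$ does not help here, since $\communicateset{P}\cap\communicateset{Q}\neq\emptyset$ only gives a $G^2$-path of length $\le 2$ (a $G^4$-edge), which is too weak.
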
 

\begin{proof}
Consider two sets of processes $P$ and $Q$, each of size $n-\fmandm$.
Hence, both $P$ and $Q$ represent more than $\lfloor{n/2}\rfloor$ processes,
and hence, they represent a common process $w$.
We consider three cases, and show that in all cases,
$\sharedmemcomm{P}{Q}$, implying that $\fmandm \leq \fnonpartionable$.

If $w\in P \cap Q$ then $P \cap Q \neq{\emptyset}$
and $\sharedmemcomm{P}{Q}$.

If $w\in{P}$ and $w\notin{Q}$
(the case $w\in{Q}$ and $w\notin{P}$ is symmetric).
Since $Q$ represents $w$,
there is a processes $q\in{Q}$ such that $(w,q)\in{G}$,
and hence, $(w,q)$ is also an edge in $G^2$.
Therefore, $\sharedmemcomm{w}{q}$ and $\sharedmemcomm{P}{Q}$.

Finally, if $w \notin P$ and $w \notin Q$,
then since both sets $P$ and $Q$ represent $w$,
there is a process $p\in{P}$ such that $(p,w)\in{G}$ and
and process $q\in{Q}$ such that $(w,q)\in{G}$.
Thus, $(p,q)\in{G^2}$ and $\sharedmemcomm{P}{Q}$.
\end{proof}


\begin{lemma}
\label{lem:fmandmneqfnonpartion}
For every $n >4$, there is a shared-memory graph $G$,
such that $\fmandm < \fnonpartionable$ in the uniform m\&m model.
\end{lemma}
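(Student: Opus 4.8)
The plan is to exhibit, for each $n>4$, a single explicit shared-memory graph $G$ on which $\fmandm<\fnonpartionable$, generalizing the $n=5$ instance of \figref{ce2}. The guiding intuition is that a pendant (degree-one) vertex pushes $\fmandm$ down---any small set of processes that contains it cannot \emph{represent} a majority---while, as long as $G$ has diameter at most $2$, every pair of processes can still communicate through shared memory, so $\fnonpartionable=\fG$ stays as large as it can possibly be. The graph that takes both phenomena to the extreme is the star: let $G=K_{1,n-1}$, with a center vertex $c$ adjacent to each of the other $n-1$ vertices and no further edges.

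First I pin down $\fnonpartionable$. Since any two leaves are both adjacent to $c$, they are at distance $2$ in $G$, so $G^2$ is the complete graph $K_n$; by \obsref{fnonpartitionisfg}, $\fnonpartionable=\fG$. The condition in \defref{fg} is downward monotone in $f$ (if it holds for all sets of size $n-f$ it holds for all larger sets). For $f=n-1$ the sets in \defref{fg} have size $1$, and any two distinct singletons span an edge of $G^2=K_n$ while equal singletons intersect, so the condition holds; for $f=n$ only the empty sets qualify and it fails. Hence $\fnonpartionable=\fG=n-1$. Next I bound $\fmandm$ using \defref{fmandm}: let $\ell$ be any leaf; the singleton $\{\ell\}$ represents exactly $\{\ell,c\}$, i.e.\ two processes. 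Since $n>4$, $\lfloor n/2\rfloor\ge 2$, so two processes do not form a majority, and $\{\ell\}$---a set of size $n-(n-1)$---violates the property in \defref{fmandm} at $f=n-1$. That property is likewise downward monotone in $f$ (a larger set represents at least as many processes), so it fails for every $f\ge n-1$; therefore $\fmandm\le n-2<n-1=\fnonpartionable$, which proves the lemma. (A direct count in fact gives $\fmandm=\lceil n/2\rceil$ for the star, but only the strict inequality is needed; combined with \lemref{fmandmleqfnonpartion} it shows that bound need not be tight.)

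There is no real obstacle once the star is singled out; the only genuine design step is recognizing that the extremal diameter-$2$ graph with a pendant vertex simultaneously keeps $\fG$ maximal and produces a size-one set that fails the \emph{represents-a-majority} test. The two points that need care are the boundary values $f=n-1$ and $f=n$, which must be checked to fix $\fnonpartionable$ exactly, and the two monotonicity observations above: both $\fG$ and $\fmandm$ are defined as the \emph{largest} $f$ with a property rather than as the property at a fixed $f$, so deducing $\fmandm\le n-2$ from the failure at $f=n-1$ relies on that monotonicity.
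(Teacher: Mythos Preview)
Your proof is correct and follows essentially the same approach as the paper: both use the star $K_{1,n-1}$, observe that $G^2$ is complete so $\fnonpartionable=\fG=n-1$, and then take a singleton leaf as a witness that $\fmandm<n-1$ because it represents only two processes. Your write-up is more explicit about the boundary cases and the monotonicity of the defining properties, and your parenthetical computation $\fmandm=\lceil n/2\rceil$ is a correct extra, but the core argument is identical to the paper's.
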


\begin{proof}
We show a shared-memory graph $G$ and a set of $n-\fnonpartionable$
processes $P$ that represents at most $\leq \lfloor{n/2}\rfloor$ processes.
The graph $G$ is the star graph over $n$ vertices,
and has edges $\{(p_1,p_2)\}\cup{\{(p_2,p_i):3\leq{i}\leq{n})\}}$. 
(See \figref{ce2}, for $n = 5$.)
Clearly,
there is a path of length $\leq 2$ between every pair of vertices,
and therefore, $G^2$ is a full graph.
From the definition of $\fG$, $\fG=n-1$.
Let $P=\{p_1\}$, note that $|P|=1=n-\fG$.
$P$ represents exactly $|S_{p_1}| = 2 \leq \lfloor{n/2}\rfloor$ processes.
\end{proof}

\lemref{fmandmleqfnonpartion} and \lemref{fmandmneqfnonpartion} imply 
that requiring at least $n-\fmandm$ nonfaulty processes is strictly
stronger than requiring $n-\fnonpartionable$ nonfaulty processes.
Therefore, the HBO algorithm does not have optimal resilience.
Intuitively this happens since HBO does not utilize all the
shared-memory connections that are embodied in $G^2$.
Thus, our algorithm (Section~\ref{sec:consensus}),
has better resilience than HBO, which we show is optimal,
in addition to having polynomial complexity.


Aguilera et al.~\cite{Aguilera:PODC:2018} also present a lower bound
on the number of failures any consensus algorithm can tolerate in
the m\&m model.
To state their bound, consider a graph $G=(V,E)$,
and let $B$, $S$ and $T$ be a partition of $V$.
$(B,S,T)$ is an \emph{SM-cut} in $G$ if $B$ can be partitioned into
two disjoint sets $B_1$ and $B_2$, such that
for every $b_1\in B_1$, $b_2\in B_2$, $s\in S$ and $t\in T$,
we have that $(s,t),(b_1,t),(b_2,s) \notin E$.

\begin{theorem}[\cite{Aguilera:PODC:2018}]
\label{theorem:aguilera-impossibility}
Consensus cannot be solved in the uniform m\&m model
in the presence of $f$ failures if there is a SM-cut $(B,S,T)$
such that $|S|\geq n-f$ and $|T| \geq n-f$.
\end{theorem}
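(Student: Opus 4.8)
The plan is to assume a consensus algorithm that tolerates $f$ failures and derive a contradiction, following the three-execution partitioning scheme already used in this paper (for instance, in the proof of \theoremref{consensus-impossibility}); the one new ingredient is a combinatorial claim turning the SM-cut condition into the statement that processes in $S$ and processes in $T$ cannot communicate through shared memory.

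The crux is the claim: if $(B,S,T)$ is an SM-cut of the shared-memory graph $G$, then for every $s\in S$ and $t\in T$ we have $s\not\leftrightarrow t$; that is, $(s,t)\notin E$ and no memory is accessible to both $s$ and $t$. The first part is immediate from the SM-cut definition. For the second, recall that in the uniform m\&m model the memory of a process $w$ is accessed exactly by $S_w=\{w\}\cup\{q:(w,q)\in E\}$, so a memory accessible to both $s$ and $t$ has an owner $w\in S_s\cap S_t$. If $w\in\{s,t\}$ then $(s,t)\in E$ (using $S\cap T=\emptyset$), which the SM-cut forbids; otherwise $w$ is a common neighbor of $s$ and $t$, and then $w\in S$ or $w\in T$ produces a forbidden edge between $S$ and $T$, $w\in B_1$ produces the forbidden edge $(b_1,t)$, and $w\in B_2$ produces the forbidden edge $(b_2,s)$. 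So no such $w$ exists. I expect this case analysis --- in particular the observation that the partition $B=B_1\cup B_2$ is exactly what rules out a common neighbor lying inside $B$ --- to be the main, albeit short, obstacle; everything else is size bookkeeping and a routine indistinguishability argument.

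Since $S$, $B$, $T$ partition the $n$ processes and $|S|,|T|\ge n-f$, each of $B\cup T$, $B\cup S$ and $B$ has size at most $f$. I would then construct: $\alpha_1$, a $(B\cup T)$-free execution in which every process of $S$ has input $0$ and, by $f$-resilience and validity, some $s^\ast\in S$ decides $0$ by time $t_1$; $\alpha_2$, a $(B\cup S)$-free execution in which every process of $T$ has input $1$ and some $t^\ast\in T$ decides $1$ by time $t_2$; and $\alpha_3$, which coincides with $\alpha_1$ up to time $t_1$ and with $\alpha_2$ from then until time $t_1+t_2$, in which the processes of $B$ take no steps, $S$ has input $0$, $T$ has input $1$, and every message between $S$ and $T$ is delayed past time $t_1+t_2$.

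It remains to verify indistinguishability. In $\alpha_3$, $s^\ast$ receives no message from $B\cup T$ ($B$ is silent and messages from $T$ are delayed) and, by the claim, reads nothing a process of $T$ writes to shared memory, while processes of $B$ write nothing; hence $\alpha_3$ is indistinguishable from $\alpha_1$ to $s^\ast$, which therefore decides $0$. Symmetrically --- using that $\leftrightarrow$ is symmetric when the shared memory has no access restrictions --- $\alpha_3$ is indistinguishable from $\alpha_2$ to $t^\ast$, which decides $1$, contradicting agreement.
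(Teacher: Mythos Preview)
The paper does not give its own proof of this theorem; it is quoted from~\cite{Aguilera:PODC:2018} and used only as a benchmark against which \theoremref{cut\&partition-eq} is compared. So there is nothing to compare your argument against directly.

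That said, your proof is correct. The combinatorial claim --- that an SM-cut $(B,S,T)$ forces $s\not\leftrightarrow t$ for every $s\in S$, $t\in T$ --- is exactly the statement that $S$ and $T$ witness $f$-partitionability, and your four-way case split on the location of the putative common memory owner $w$ is sound (each case produces one of the forbidden edges $(s,t)$, $(b_1,t)$, $(b_2,s)$). After that, the three-execution partitioning argument is identical to the paper's proof of \theoremref{consensus-impossibility}. In effect your proof factors as ``SM-cut $\Rightarrow$ $f$-partitionable $\Rightarrow$ consensus unsolvable,'' where the second arrow is \theoremref{consensus-impossibility} and the first arrow is the converse of \theoremref{cut\&partition-eq}; the paper only proves the forward direction of that equivalence, so your argument fills in the missing half and then recovers the cited result as a corollary of the paper's own machinery.
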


Although the resilience of HBO is not optimal,
we show that this lower bound on resilience is optimal,
by proving that if a system is $f$-partitionable then
the condition in \theoremref{aguilera-impossibility} holds.
By \theoremref{consensus-impossibility},
these two conditions are equal in the m\&m model.

\begin{theorem}
\label{theorem:cut&partition-eq}
In the uniform m\&m model, if the system is $f$-partitionable then
there is an SM-cut $(B,S,T)$ with $|S| \geq n-f$ and $|T| \geq n-f$.
\end{theorem}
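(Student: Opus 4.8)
The plan is to start from the hypothesis that the uniform m\&m system is $f$-partitionable and unpack it into the definition of an SM-cut. By \defref{fpartition} and \obsref{fnonpartitionisfg} (equivalently, by the characterization $\sharedmemcomm{p}{q} \iff p = q$ or $(p,q) \in G^2$), being $f$-partitionable means there are two sets $P$ and $Q$, each of size $n-f$, with $\bothcanreadcomm[not]{P}{Q}$. Since the shared memory has no access restrictions the $\rightarrow$ relation is symmetric here, so in fact $\canreadcomm[not]{P}{Q}$ as well: no $p \in P$ and $q \in Q$ are equal or joined by an edge of $G^2$. First I would record the immediate consequences: $P \cap Q = \emptyset$, there is no edge of $G$ between $P$ and $Q$, and—crucially—there is no vertex $w$ adjacent in $G$ to both a vertex of $P$ and a vertex of $Q$ (otherwise that length-2 path would give an edge of $G^2$ between $P$ and $Q$).

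Next I would set $S = P$, $T = Q$, and $B = V \setminus (P \cup Q)$. Then $|S| = |T| = n - f$ as required, and $(B,S,T)$ is a partition of $V$. It remains to exhibit the required bipartition $B = B_1 \cup B_2$ of $B$. The natural choice: let $B_1$ be the vertices of $B$ that have a neighbor in $S = P$, and let $B_2 = B \setminus B_1$. The first thing to check is that this is consistent—i.e. that $B_1$ and $B_2$ are disjoint (immediate by construction) and that the needed non-adjacencies hold. We need: $(s,t) \notin E$ for $s \in S$, $t \in T$—this is exactly the "no edge of $G$ between $P$ and $Q$" fact. We need $(b_2, s) \notin E$ for $b_2 \in B_2$, $s \in S$—this holds by the definition of $B_2$ (no neighbor in $P$). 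And we need $(b_1, t) \notin E$ for $b_1 \in B_1$, $t \in T$: if some $b_1 \in B_1$ were adjacent to some $t \in T = Q$, then since $b_1$ is also adjacent to some $s \in P$ (being in $B_1$), we would get a length-2 path in $G$ from $s$ to $t$, hence an edge of $G^2$ between $P$ and $Q$, contradicting $\canreadcomm[not]{P}{Q}$. So all three conditions follow from the "no common neighbor" observation plus the "no edge between $P$ and $Q$" observation.

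The main obstacle I anticipate is not any single step—each non-adjacency is a short argument—but making sure the translation between the $\rightarrow$/partition language of the general model and the graph-theoretic SM-cut language of \defref{represents} and \theoremref{aguilera-impossibility} is done cleanly, in particular invoking $\fnonpartionable = \fG$ (\obsref{fnonpartitionisfg}) and the characterization of $G^2$-edges as length-$\le 2$ paths so that "$\canreadcomm[not]{P}{Q}$" is correctly converted into "no edge of $G$ inside $P \times Q$ and no vertex adjacent to both $P$ and $Q$." Once that dictionary is in place, the choice $S=P$, $T=Q$, $B_1 = \{b \in B : b \text{ has a neighbor in } P\}$, $B_2 = B \setminus B_1$ does all the work, and verifying the SM-cut conditions is routine. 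I would then conclude that $(B,S,T)$ is an SM-cut with $|S| = |T| = n-f \ge n-f$, completing the proof; combined with \theoremref{consensus-impossibility} this shows the two conditions coincide in the uniform m\&m model.
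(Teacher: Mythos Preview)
Your proposal is correct and follows essentially the same approach as the paper: start from the two witnessing sets of size $n-f$ with no $G^2$-edge between them, take them as $S$ and $T$, and split $B=V\setminus(S\cup T)$ according to adjacency with $S$ (or $T$). The only cosmetic difference is where the ``leftover'' vertices of $B$ (those adjacent to neither $S$ nor $T$) end up---you place them in $B_2$ while the paper places them in $B_1$---but either choice satisfies the SM-cut conditions.
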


\begin{proof}
Since the system is $f$-partitionable,
there are two disjoint sets of processes $S$ and $T$,
each of size $n-f$, such that $\sharedmemcomm[not]{S}{T}$.
Since $\fnonpartionable=\fG$,
$S\cap T=\emptyset$ and there is no edge $(u,v)$ in $G^2$,
where $u\in S$ and $v\in T$.
Let $B'_1=\{u:(u,v)\in E,v\in S\}\setminus{S}$ and
$B_2=\{u:(u,v)\in E, v\in T\}\setminus{T}$.
Obviously, $B'_1\cap S=\emptyset$ and $B_2\cap T=\emptyset$.
$B'_1\cap T = \emptyset$ and $B_2\cap S = \emptyset$, since
there is no edge in $G$ between a vertex in $S$ and a vertex in $T$.
Assume, towards a contradiction, that $B'_1\cap B_2\neq \emptyset$.
Hence, there is a vertex $u$ such that $u\in B'_1\cap B_2$.
Since $u\in B'_1$ then there is some vertex $s\in S$ such that
$(s,u)$ is part of $G$.
Similarly, there is some vertex $t\in T$ such that $(u,t)$ is in $E$.
This implies that the graph $G$ contains the path  $s-u-t$,
contradicting the fact that there is no edge between $S$ and $T$ in $G^2$.
This also implies that there is no edge between a vertex in $T$
and a vertex in $B'_1$, or between a vertex in $S$ and a vertex in $B_2$.
Let $B_1=B'_1\cup(V\setminus(S\cup T \cup B_2))$,
as we showed that $S$, $T$, $B'_1$ and $B_2$ are disjoint
then $S$, $T$, $B_1$ and $B_2$ are a disjoint partition of $V$.
For every $b_1 \in B_1$, $b_2 \in B_2$, $s \in S$ and $t \in T$,
$(s,t)$ ,$(b_1,t)$ and $(b_2,s)$ are not in $E$.
Hence, $(B, S, T)$ is a SM-cut, when $B=B_1\cup B_2$.
\end{proof}
\section{The Cluster-Based Model}
\label{sec:cluster}

In the hybrid, \emph{cluster-based} model of~\cite{ImbsR13,
RaynalCao:Clusterbased:ICDCS:2019},
processes are partitioned into $m$, $1 \leq m \leq n$,
non-empty and disjoint subsets $P_1, \ldots, P_m$, called \emph{clusters}.
Each cluster has an associated shared memory; only processes of this
cluster can (atomically) read from and write to this shared memory.
The set of processes in the cluster of $p$ is denoted $\textit{cluster}[p]$.
As in the m\&m model,
there are no access restrictions on the shared memory.
Hence, $|R_p|=|W_p|=1$ for every process $p$,
and therefore, $\regsno=n$ and $\readno = O(n^2)$.

In the cluster-based model, $\sharedmemcomm{p}{q}$ if and only if
$p$ and $q$ are in the same cluster.

If $\sharedmemcomm{p}{q}$ and $\sharedmemcomm{q}{w}$,
for some processes $p$, $q$ and $w$, then $p$ and $q$ are in the same cluster
and $q$ and $w$ are in the same cluster.
Since clusters are disjoint, it follows that
$p$ and $w$ are in the same cluster, implying that

\begin{observation}
\label{obs:clustertransitive}
In the cluster-based model,
$\leftrightarrow$ is transitive.
\end{observation}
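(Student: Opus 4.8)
The plan is to reduce transitivity of $\leftrightarrow$ to transitivity of the relation ``is in the same cluster as''. First I would invoke the characterization recorded just above the statement: because the shared memory in the cluster-based model has no access restrictions, $\rightarrow$ is symmetric, and since each cluster owns a single memory accessible only by the members of that cluster, $p \leftrightarrow q$ holds if and only if $p$ and $q$ belong to the same cluster, i.e. $\textit{cluster}[p] = \textit{cluster}[q]$.

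With this in hand, take any processes $p$, $q$, $w$ with $p \leftrightarrow q$ and $q \leftrightarrow w$. Applying the characterization twice gives $\textit{cluster}[p] = \textit{cluster}[q]$ and $\textit{cluster}[q] = \textit{cluster}[w]$. The clusters $P_1,\dots,P_m$ form a partition of the processes, so each process lies in a unique cluster and the two equalities chain to $\textit{cluster}[p] = \textit{cluster}[w]$; hence $p$ and $w$ are in the same cluster, and the characterization (used now in the reverse direction) yields $p \leftrightarrow w$. That is exactly transitivity.

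I do not expect a genuine obstacle here: the only step with any content is the ``same cluster'' reformulation of $\leftrightarrow$, and that rests solely on the defining features of the model (exactly one memory per cluster, readable and writable by precisely that cluster's members), requiring no case analysis or counting. Everything else is the trivial observation that equality of the (well-defined, because of disjointness) cluster of a process is transitive.
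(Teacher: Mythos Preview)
Your proposal is correct and follows essentially the same argument as the paper: reduce $p \leftrightarrow q$ to ``$p$ and $q$ are in the same cluster'' and then use the fact that the clusters partition the processes (disjointness) to chain $\textit{cluster}[p]=\textit{cluster}[q]=\textit{cluster}[w]$. The paper's justification is just a terser version of exactly this.
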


\begin{definition}\label{def:imp}
$\fc$ is the largest integer $f$ such that for all sets of processes
$P$ and $P'$, each of size $(n-f)$, either $P \cap P' \neq \emptyset$
or some cluster contains a process in $P$ and a process in $P'$.
\end{definition}

\begin{observation}
\label{obs:fnonpartitionisfc}
In the cluster-based model $\fnonpartionable=\fc$.
\end{observation}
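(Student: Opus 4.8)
The plan is to show that, for every fixed $f$, the condition defining $\fnonpartionable$ in \defref{fnonpartion} and the condition defining $\fc$ in \defref{imp} are logically equivalent; since both parameters are defined as the largest $f$ for which their respective condition holds, equality follows at once.

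First I would record the structural fact that drives everything: in the cluster-based model the relation $\leftrightarrow$ is an equivalence relation whose classes are exactly the clusters. It is reflexive in general, symmetric because the shared memory has no access restrictions, and transitive by \obsref{clustertransitive}; moreover $\sharedmemcomm{p}{q}$ holds if and only if $p$ and $q$ lie in the same cluster. Consequently, for sets $P$ and $Q$, $\bothcanreadcomm{P}{Q}$ is equivalent to $\canreadcomm{P}{Q}$, and this holds precisely when some cluster contains a process of $P$ and a process of $Q$.

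Next I would prove the two implications. For the forward direction, fix $f$ such that $\bothcanreadcomm{P}{Q}$ holds for all $P,Q$ of size $n-f$, and let $P,P'$ be two such sets; by the structural fact there are $p\in P$ and $q\in P'$ in a common cluster, which is exactly the cluster alternative of \defref{imp}. For the converse, fix $f$ such that the $\fc$-condition holds, and let $P,Q$ have size $n-f$. If $P\cap Q\neq\emptyset$, pick $p$ in the intersection; reflexivity gives $\sharedmemcomm{p}{p}$ and hence $\bothcanreadcomm{P}{Q}$. Otherwise some cluster contains $p\in P$ and $q\in Q$, so $\sharedmemcomm{p}{q}$ and again $\bothcanreadcomm{P}{Q}$.

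Finally, since for each $f$ the two conditions are equivalent, the largest $f$ satisfying one is the largest $f$ satisfying the other, i.e.\ $\fnonpartionable=\fc$. There is no genuine obstacle here; the only point needing a little care is invoking the no-access-restriction assumption to collapse $\leftrightarrow$ onto the one-directional $\rightarrow$ relation and onto cluster membership, and—if one wants to be fully explicit—observing that the $P\cap Q\neq\emptyset$ clause of \defref{imp} is already subsumed, since by reflexivity a process lies in a cluster with itself.
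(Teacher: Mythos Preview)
Your proposal is correct. The paper states this as an observation without proof, relying on the sentence just before it that in the cluster-based model $\sharedmemcomm{p}{q}$ holds if and only if $p$ and $q$ are in the same cluster; your argument simply unpacks that equivalence and the two definitions, which is exactly the implicit reasoning. The appeal to \obsref{clustertransitive} is a slight detour---you do not really need transitivity here, only the direct characterization $\sharedmemcomm{p}{q}\iff\textit{cluster}[p]=\textit{cluster}[q]$---but it does no harm.
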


\begin{lemma}\label{lem:fmajorityleqfnonpartion}
In the cluster-based model, $\fnonpartionable = \fmajority$.
\end{lemma}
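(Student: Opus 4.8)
The plan is to combine \lemref{fnonpartionleqfmajority}, which already gives $\fnonpartionable \leq \fmajority$ in every system, with the reverse inequality $\fmajority \leq \fnonpartionable$, so that only the latter needs a proof. By \defref{fnonpartion} it is enough to show that, setting $f := \fmajority$, every pair of process sets $P$ and $Q$, each of size $n-f$, satisfies $\bothcanreadcomm{P}{Q}$; this immediately yields $\fnonpartionable \geq f = \fmajority$, and together with the inequality above we are done.

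So fix such $P$ and $Q$. First I would use \defref{fmajority} to get $|\communicateset{P}| > \lfloor n/2 \rfloor$ and $|\communicateset{Q}| > \lfloor n/2 \rfloor$, whence $|\communicateset{P}| + |\communicateset{Q}| \geq 2\lfloor n/2 \rfloor + 2 > n$, so $\communicateset{P}$ and $\communicateset{Q}$ must overlap; pick a process $w \in \communicateset{P} \cap \communicateset{Q}$. By the definition of $\communicateset{\cdot}$ there are $p \in P$ with $\canreadcomm{p}{w}$ and $q \in Q$ with $\canreadcomm{q}{w}$, and since the cluster-based model has no access restrictions the relation $\leftrightarrow$ is symmetric, so in fact $\bothcanreadcomm{p}{w}$ and $\bothcanreadcomm{q}{w}$; equivalently, $p$ and $w$ lie in one cluster and $q$ and $w$ lie in one cluster.

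The last step is to stitch these together: by \obsref{clustertransitive}, transitivity of $\leftrightarrow$ gives $\bothcanreadcomm{p}{q}$, i.e., $p$ and $q$ belong to a common cluster, so $\bothcanreadcomm{P}{Q}$, as required. I expect this final step to be the crux, since the whole argument hinges on transitivity of $\leftrightarrow$, a property special to the cluster-based model (clusters being disjoint) and precisely the one that fails in the uniform and general m\&m models; there $\communicateset{P}$ and $\communicateset{Q}$ can still share a process $w$ without $P$ and $Q$ themselves being connected through it, which is exactly why the converse inequality is strict in that setting (cf.\ Lemma~\ref{fmajorityneqfnonpartion}). The only routine point to watch is the parity of $n$ in the counting step, but $2\lfloor n/2\rfloor + 2 > n$ holds for both even and odd $n$.
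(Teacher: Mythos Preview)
Your proposal is correct and follows essentially the same argument as the paper: reduce to showing $\fmajority \leq \fnonpartionable$, take any two sets $P,Q$ of size $n-\fmajority$, use the majority bound to find a common $w\in\communicateset{P}\cap\communicateset{Q}$, and then invoke symmetry together with \obsref{clustertransitive} to conclude $\bothcanreadcomm{P}{Q}$. Your write-up even spells out the pigeonhole count and the role of transitivity more explicitly than the paper does.
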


\begin{proof}
Since $\fnonpartionable \leq \fmajority$,
by Lemma~\ref{lem:fnonpartionleqfmajority},
we only need to show that $\fmajority \leq \fnonpartionable$.

Consider two sets of processes $P$ and $Q$,
both of size $n-\fmajority$,
then $|{\communicateset{P}}|>\lfloor{n/2}\rfloor$ and
$|{\communicateset{Q}}|>\lfloor{n/2}\rfloor$.
Therefore, there is a process
$w \in \communicateset{P} \cap \communicateset{Q}$.
Since $w\in{\communicateset{P}}$,
there is a process $p\in{P}$ such that $\sharedmemcomm{p}{w}$,
and since $w\in{\communicateset{Q}}$,
there is a process $q\in{Q}$ such that $\sharedmemcomm{q}{w}$.
Since $\leftrightarrow$ is symmetric and
by \obsref{clustertransitive}, 
it is also transitive, it follows that $\sharedmemcomm{p}{q}$
and thus, $\sharedmemcomm{P}{Q}$.
\end{proof}

\begin{lemma} \label{lem:represent-(n-f)}
In the cluster-based model, for every two sets of processes, $P$ and $Q$,
and $f\leq\fnonpartionable$,
if $|\communicateset{P}|\geq n-f$ and $|\communicateset{Q}|\geq n-f$
then $\sharedmemcomm{P}{Q}$.
\end{lemma}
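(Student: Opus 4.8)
The plan is to reduce the claim about $P$ and $Q$ to the analogous claim about the larger sets $\communicateset{P}$ and $\communicateset{Q}$, invoke non-$f$-partitionability on those sets, and then transfer the conclusion back to $P$ and $Q$ using transitivity of $\leftrightarrow$.

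First I would note that since $f\leq\fnonpartionable$ the system is not $f$-partitionable, so by \defref{fnonpartion} every pair of process sets of size $n-f$ satisfies $\leftrightarrow$; recall that in the cluster-based model the shared memory has no access restrictions, so $\rightarrow$ is symmetric and $\leftrightarrow$ coincides with it. Because $|\communicateset{P}|\geq n-f$ and $|\communicateset{Q}|\geq n-f$, I can pick subsets $A\subseteq\communicateset{P}$ and $B\subseteq\communicateset{Q}$ of size exactly $n-f$; non-$f$-partitionability then yields $\sharedmemcomm{A}{B}$, i.e., there are processes $a\in\communicateset{P}$ and $b\in\communicateset{Q}$ lying in a common cluster.

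It remains to lift this to $P$ and $Q$. By the definition of $\communicateset{P}$ there is a process $p\in P$ with $\canreadcomm{p}{a}$, and by the definition of $\communicateset{Q}$ a process $q\in Q$ with $\canreadcomm{q}{b}$; by symmetry, $\sharedmemcomm{p}{a}$ and $\sharedmemcomm{b}{q}$. Chaining $\sharedmemcomm{p}{a}$, $\sharedmemcomm{a}{b}$ and $\sharedmemcomm{b}{q}$ and applying \obsref{clustertransitive} gives $\sharedmemcomm{p}{q}$, hence $\sharedmemcomm{P}{Q}$.

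I do not expect a genuine obstacle. The two routine points to get right are that (i) non-$f$-partitionability is stated for sets of size \emph{exactly} $n-f$, so I must explicitly pass to the subsets $A,B$ before invoking it, and (ii) $\leftrightarrow$ on \emph{sets} is not transitive, so the final lift from $\communicateset{P}\leftrightarrow\communicateset{Q}$ to $P\leftrightarrow Q$ must be carried out at the level of individual processes via \obsref{clustertransitive}. The tempting-but-wrong shortcut would be to mimic the proof of \lemref{fmajorityleqfnonpartion} and argue through $|\communicateset{P}|>\lfloor n/2\rfloor$; that fails because $n-f$ need not exceed $\lfloor n/2\rfloor$ when $f$ is close to $\fnonpartionable$.
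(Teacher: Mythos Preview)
Your proposal is correct and follows essentially the same route as the paper: apply non-$f$-partitionability to the sets $\communicateset{P}$ and $\communicateset{Q}$ to obtain witnesses $a,b$ in a common cluster, then pull back to $P$ and $Q$ via the defining property of $\communicateset{\cdot}$ and transitivity (\obsref{clustertransitive}). The only difference is cosmetic: you are explicit about passing to size-$(n-f)$ subsets before invoking the definition, whereas the paper tacitly uses the monotonicity of the $\leftrightarrow$ relation on sets.
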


\begin{proof}
By the assumptions, $|\communicateset{P}|\geq n-\fnonpartionable$
and $|\communicateset{Q}|\geq n-\fnonpartionable$.
By definition,
$\sharedmemcomm{\communicateset{P}}{\communicateset{Q}}$,
implying that there are two processes $p\in\communicateset{P}$
and $q\in\communicateset{Q}$ such that $\sharedmemcomm{p}{q}$.
In the cluster-based model, this means that processes $p$ and $q$
are in some common $P_i$.
Since $p\in \communicateset{P}$,
there is some process $p'\in P$ such that $\sharedmemcomm{p'}{p}$,
Since $q\in \communicateset{Q}$,
there is some process $q'\in Q$ such that $\sharedmemcomm{q'}{q}$.
Since $\leftrightarrow$ is symmetric and transitive,
By \obsref{clustertransitive},  
$p'$ and $q'$ are also in $P_i$,
and hence, $\sharedmemcomm{p'}{q'}$.
\end{proof}

Raynal and Cao~\cite{RaynalCao:Clusterbased:ICDCS:2019} present
two randomized consensus algorithms for the cluster-based model.
One is also based on Ben Or's algorithm~\cite{BenOr1983},
using local coins, and the other is based on an external common coin
(whose implementation is left unspecified).
These algorithms terminate in an execution if there are distinct
clusters whose total size is (strictly) larger than $n/2$,
each containing at least one nonfaulty process.
Clearly, if $f\leq\fmajority$, this condition holds for every execution
with at most $f$ failures.
\lemref{fnonpartionleqfmajority} shows that 
$\fnonpartionable\leq \fmajority$,
the condition holds if there are at most $f\leq\fnonpartionable$ failures.
\lemref{fmajorityleqfnonpartion} implies that these two definitions
are equivalent by proving that $\fnonpartionable=\fmajority$.
This means that the maximum resilience guaranteeing that every two sets
of nonfaulty processes can communicate is equal to the one guaranteeing
that every set of nonfaulty processes
can communicate with a majority of the processes.

In the cluster-based model,
if a process $p \in P_i$ does not crash then all other processes
receive the information from all the processes of $P_i$,
as if none of them crashed.
For this reason, we say that $p$ represents all processes in $P_i$
(note that this definition is different than \defref{represents}).
If a process $q$ receives messages from processes representing $k$
clusters $P_1, \ldots, P_k$,
such that $|P_1| + \cdots + |P_k| > n/2$,
then it has received information from a majority of the processes.
This observation does not change the resilience threshold, i.e.,
the maximal number of failures that can be tolerated,
but allows to wait for a smaller number of messages,
thereby, making the algorithm execute faster.
Lemma~\ref{lem:represent-(n-f)} proves that every two sets of processes
representing at least $n-\fnonpartionable$ processes can communicate.
Therefore, instead of waiting for a majority of represented processes,
as is done in \cite{RaynalCao:Clusterbased:ICDCS:2019},
it suffices to wait for $n-\fnonpartionable$ represented processes.
\algoref{msg-exc-cluster} shows this improvement to the SWMR register
implementation presented in \algoref{swmr},
by replacing the communication pattern, msg\_exchange().
Since $n-\fnonpartionable\leq \lfloor{n/2}\rfloor+1$,
this means that in some cases it suffices to wait for fewer than
a majority of represented processes.

\begin{algorithm}[t]
	\caption{msg\_exchange procedure in the cluster-based model}	
	\label{alg:msg-exc-cluster}
	\begin{algorithmic}[1]
	        \makeatletter\setcounter{ALG@line}{0}\makeatother
			\item[] \blank{-.7cm} \textbf{msg\_exchange\textit{$\langle$m, seq, val$\rangle$}: returns \textit{set of responses}}
			\State send \textit{$\langle$m, seq, val$\rangle$} to all processes
		\State \textit{responses = $\emptyset$} ; \textit{represented = $\emptyset$}
		\Repeat
		\State wait to receive a message $m$ of the form
                    $\langle$\textit{Ack-m, seq, -}$\rangle$ from process $q$
			\State \textit{represented} = \textit{represented} $\cup$ \textit{cluster}$[q]$
		\State \textit{responses} = \textit{responses} $\cup$ \{$m$\}
		\Until $\textit{|represented|}\geq n-f$
		\State return\textit{(responses)}
		\end{algorithmic}
\end{algorithm}

This is not the case in the m\&m model.
For example, in the graph of \figref{ce1}, 
$n=9$ and $\fnonpartionable=6$.
For $P=\{p_7,p_9\}$, $\communicateset{P}=\{p_1,p_6,p_7,p_8,p_9\}$,
and for $Q=\{p_3,p_5\}$, $\communicateset{Q}=\{p_1,p_2,p_3,p_4,p_5\}$,
so $|\communicateset{P}| = |\communicateset{Q}| = 5 > n/2$,
but $\sharedmemcomm[not]{P}{Q}$.
Therefore, even though the system is not $f$-partitionable, and
\lemref{fnonpartionleqfmajority} guarantees that 
the set of non-faulty processes can communicate with a majority of the processes,
it does not suffice to wait for more than $n/2$ represented processes.
(Recall that by \defref{represents}, a process represents all 
the processes it can communicate with using shared memory)

\section{Discussion}

This paper studies the optimal resilience for various problems in
mixed models.
Our approach builds on simulating a SWMR register,
which allows to investigate the resilience of many problems,
like implementing MWMR registers and atomic snapshots,
or solving randomized consensus, approximate agreement and renaming.
Prior consensus algorithms for mixed models~\cite{Aguilera:PODC:2018,
RaynalCao:Clusterbased:ICDCS:2019} start from a pure message-passing
algorithm and then try to exploit the added power of shared memory.
In contrast, we start with a shared-memory consensus algorithm
and systematically simulate it in the mixed model.
This simplifies the algorithms and improves their complexity,
while still achieving optimal resilience.

It would be interesting to investigate additional tasks and objects.
An interesting example is \emph{$k$-set consensus}~\cite{Chaudhuri93},
in which processes must decide on at most $k$ different values.
This is trivial for $k=n$ and reduces to consensus, for $k=1$.
For the pure message-passing model, there is a $k$-set consensus
algorithm~\cite{Chaudhuri93}, when the number of failures $f < k$.
This bound is necessary for solving the problem in shared memory
systems~\cite{BorowskyG93,HerlihyS93,SaksZ93}.
Since resilience in a mixed system cannot be better than in the
shared-memory model, it follows that $f < k$ is necessary and sufficient
for any mixed model.
Thus, when $\fnonpartionable<k-1$, a system can be $f$-partitionable and
still offer $f$-resilience for $k$-set consensus.\footnote{
    There is a lower bound of $k >\frac{n-1}{n-f}$ for pure message-passing
    systems, proved using a partitioning argument~\cite{BielyRS2011easy}.
    It might seem that adding shared memory will allow to
    reduce this bound, however, this is not the case, since for the
    relevant ranges of $k$ ($1 < k < n$),
    the bound on the number of failures implied from this bound
    is at least $k$.}


The weakest failure detector needed for implementing a register
in the cluster-based model is strictly weaker than the weakest
failure detector needed in the pure message-passing model~\cite{ImbsR13}.
This aligns with the improved resilience we can achieve in a
mixed model compared to the pure message-passing model.
It is interesting to explore the precise improvement in resilience
achieved with specific failure detectors and other mixed models.

We would also like to study systems where
the message-passing network is not a clique.

\paragraph*{Acknowledgements:}
We thank Vassos Hadzilacos, Xing Hu, Sam Toueg 
and the anonymous reviewers for helpful comments.
This research was supported by ISF grant 380/18.

\bibliographystyle{plainurl}
\bibliography{citations}

\end{document}